\def\notes{1}
\def\tpdp{0}
\newtheorem{definition}{Definition}[section]
\newtheorem{theorem}{Theorem}[section]
\newtheorem{lemma}{Lemma}[section]
\newcommand{\eps}{\varepsilon}
\newcommand{\alphaforcdf}{\alpha_{\textrm{CDF}}}
\newcommand{\rhoforbs}{\rho_{\textrm{BinSearch}}}
\newcommand{\rhoforcdf}{\rho_{\textrm{CDF}}}
\newcommand{\rhoperstep}{\rho_{\textrm{step}}}
\newcommand{\rhoinit}{\rho_{\textrm{init}}}
\newcommand{\rhousedthisstep}{\rho_{\textrm{t}}}
\newcommand{\rhoused}{\rho_{\textrm{used}}}
\newcommand{\betaperstep}{\beta_{\textrm{step}}}
\newcommand{\betainit}{\beta_{\textrm{init}}}
\newcommand{\betausedthisstep}{\beta_{\textrm{t}}}
\newcommand{\calX}{\mathcal{X}}
\newcommand{\calY}{\mathcal{Y}}
\newcommand{\calN}{\mathcal{N}}
\newcommand{\bx}{\mathbf{x}}
\newcommand{\median}{\texttt{med}} 
\newcommand{\range}{\mathcal{R}}
\newcommand{\datalb}{\lowerrange}
\newcommand{\dataub}{\upperrange}
\newcommand{\rangesmall}{\range_{\textrm{small}}}
\newcommand{\rangelarge}{\range_{\textrm{large}}}
\newcommand{\reals}{\mathbb{R}}
\newcommand{\posreals}{\reals_{+}}
\newcommand{\nonnegreals}{\reals_{\geq 0}}
\newcommand{\naturals}{\mathbb{N}}
\newcommand{\indicator}{\textbf{1}}
\newcommand{\bparen}[1]{\big( {#1} \big)}
\newcommand{\set}[1]{\left\{ {#1} \right\}}
\newcommand{\gooddist}{\Delta_{\mathcal{C}}(\mathbb{R})}
\newcommand{\intervals}{I_{\reals}} 
\newcommand{\pops}{\mathcal{P}} 
\newcommand{\NPL}[1]{\texttt{ci}_L^{#1}} 
\newcommand{\NPU}[1]{\texttt{ci}_U^{#1}} 
\newcommand{\privNPL}[1]{\widetilde{\texttt{ci}_L^{#1}}} 
\newcommand{\privNPU}[1]{\widetilde{\texttt{ci}_U^{#1}}} 
\newcommand{\PNPL}[1]{\texttt{N}_L^{#1}} 
\newcommand{\PNPU}[1]{\texttt{N}_U^{#1}} 
\newcommand{\PPL}[1]{\texttt{N}^{#1}_{\rho, L}} 
\newcommand{\PPU}[1]{\texttt{N}^{#1}_{\rho, U}} 
\newcommand{\PPLeps}[1]{\texttt{N}^{#1}_{\eps, L}} 
\newcommand{\PPUeps}[1]{\texttt{N}^{#1}_{\eps, U}} 
\newcommand{\upperrange}{r_u}
\newcommand{\lowerrange}{r_{\ell}}
\newcommand{\granularity}{\theta}
\newcommand{\targetquantile}{q_{\textrm{target}}}
\newcommand{\discreterange}{\mathcal{R}_{\rm discrete}}
\newcommand{\relwidth}{\textrm{rel-width}}
\newcommand{\cov}{\textrm{cov}}
\newcommand{\rank}[2]{\texttt{rank}_{#1}(#2)}
\newcommand{\depth}{m}
\newcommand{\dptree}{T}
\newcommand{\dptreespace}[1]{\text{BinTree}(\naturals, #1)}
\newcommand{\betasplit}{\gamma}
\newcommand{\alphasplit}{r_1}
\newcommand{\rhosplit}{\gamma}
\newcommand{\ninput}{n \in \naturals}
\newcommand{\datainput}{d = (d_1, \ldots, d_n) \in \reals^n}
\newcommand{\rangeinput}{\range = [\lowerrange, \upperrange] \subset \reals}
\newcommand{\alphainput}{\alpha \in (0,1)}
\newcommand{\rhoinput}{\rho \in \posreals}
\newcommand{\epsinput}{\eps \in \posreals}
\newcommand{\graninput}{\theta \in \posreals}
\newcommand{\rankinput}{k \in [n]}
\newcommand{\betasplitinput}{\betasplit \in (0,1)}
\newcommand{\treeinput}{\dptree \in \dptreespace{\depth}}
\newcommand{\depthinput}{\depth \in \naturals}
\newcommand{\discreterangeinput}{\discreterange \in \reals^{2^\depth}}
\newcommand{\nonprivconf}{\beta_1}
\newcommand{\privconf}{\beta_2}
\newcommand{\GS}{\text{GS}}
\newcommand{\CDFfunction}{C}
\newcommand{\DPCDF}{\tilde{C}}
\newcommand{\CDFbin}{C_{\Bin}}
\newcommand{\PDFbin}{C'_{\Bin}}
\newcommand{\Bin}{\text{\rm Bin}}
\newcommand{\Naive}{\text{$\mathtt{Union}$}}
\newcommand{\CDF}{\text{$\mathtt{CDFPostProcess}$}}
\newcommand{\noisyBS}{\text{$\mathtt{NoisyBinSearch}$}}
\newcommand{\BSCDF}{\text{$\mathtt{BinSearch+CDF}$}}
\newcommand{\EM}{\text{$\mathtt{ExpMech}$}}
\newcommand{\EMPointEstimator}{\texttt{ExpMechPoint}}
\newcommand{\EMNaive}{\texttt{ExpMechUnion}}
\newcommand{\ComputeEMTargets}{\texttt{ComputeExpMechTargets}}
\newcommand{\removeforgood}[1]{}
\newcommand{\removeforsubmission}[1]{\color{black}{#1}\color{black}}
\newcommand{\justforsubmission}[1]{#1}
\newcommand{\removeforgood}[1]{}
\newcommand{\removeforsubmission}[1]{{#1}}
\newcommand{\justforsubmission}[1]{}
\begin{document}
\begin{singlespace}
\title{Non-parametric Differentially Private Confidence Intervals for the Median}

       \author[1, 2]{J\"{o}rg Drechsler}
       \author[3]{Ira Globus-Harris}
       \author[4]{Audra McMillan\footnote{Part of this work was completed while the author was at Boston University and Northeastern University.}}
       \author[5]{Jayshree Sarathy}
       \author[6]{Adam Smith\footnote{Authors in alphabetical order.  }}
       \affil[1]{Institute for Employment Research, Germany}
       \affil[2]{The Joint Program in Survey Methodology, University of Maryland, USA}
       \affil[3]{University of Pennsylvania, USA}
       \affil[4]{Apple, USA}
       \affil[5]{Harvard John A. Paulson School of Engineering and Applied Sciences, USA}
       \affil[6]{Department of Computer Science, Boston University, USA}

\date{\today}
\maketitle

\ifnum\tpdp=0
\pagenumbering{arabic}
\fi

\begin{abstract}
 Differential privacy is a restriction on data processing algorithms that provides strong confidentiality guarantees for individual records in the data. 
 However, research on proper statistical inference, that is, research on properly quantifying the uncertainty of the (noisy) sample estimate regarding the true value in the population, is currently still limited. This paper proposes and evaluates several strategies to compute valid differentially private confidence intervals for the median. Instead of computing a differentially private point estimate and deriving its uncertainty, we directly estimate the interval bounds and discuss why this approach is superior if ensuring privacy is important. We also illustrate that addressing both sources of uncertainty--the error from sampling and the error from protecting the output--simultaneously should be preferred over simpler approaches that incorporate the uncertainty in a sequential fashion.
 We evaluate the performance of the different algorithms under various parameter settings in extensive simulation studies and demonstrate how the findings could be applied in practical settings using data from the 1940 Decennial Census.
\end{abstract}
\end{singlespace}
\section{Introduction}
Statistical agencies constantly need to find the right balance between the two competing goals of disseminating useful information from their collected data and ensuring  the confidentiality of the units included in the database. Many methods have been developed in the past decades to address this trade-off. However, with the advent of modern computing and the massive amounts of data collected every day, many of the data protection strategies commonly used at statistical agencies are no longer adequate to sufficiently protect the data \citep{abowd2018,GarfinkelAM19}. The problem's difficulty is amplified by the continual appearance of new data sources that facilitate attacks.

One promising strategy to circumvent this dilemma is to rely on formal privacy guarantees such as those provided by differential privacy (DP)~\citep{DMNS06}. These guarantees hold no matter what background knowledge a potential attacker might possess, or how much computational power they have. However,  methodology for differential private statistical inference has mostly been studied from a theoretical perspective under asymptotic regimes. Although many  algorithms have been proposed to ensure formal privacy guarantees for various estimation tasks, evaluations of their relative performance on real data with limited sample sizes and complex distributional properties are still limited, and only a small fraction of that literature has focused on inference and associated measures of uncertainty.
\ifnum\tpdp=0
Section~\ref{sec:related} surveys related work.
\fi

In this paper, we address these issues, focusing on one of the key measures of location: the median. We chose the median for two reasons. On one hand, it is a widely used summary statistic for skewed variables such as income  (see, for example, the U.S. Census Bureau's tables of median incomes for various subgroups of the population \citep{CensusTables20}). On the other hand, medians provide an interesting technical challenge for differentially private computation. The accuracy of differentially private median computations depends on the exact data distribution; as a result, providing sound and narrow confidence intervals appears to require releasing strictly more information about the data than is required for point estimation. 

The discussion of confidence intervals is an important contribution of our paper. None of the previously proposed algorithms for DP median estimation come equipped with a method for additionally releasing DP uncertainty estimates on the point estimator. 
In fact, the level of uncertainty in the point estimate is typically data dependent, and hence measuring it requires additional privacy budget.
Thus, the optimal algorithm for differentially private point estimates can be different from the optimal algorithm for differentially private confidence intervals. 
Instead of deriving the variance of some differentially private point estimate, we suggest estimating DP confidence intervals directly.
We show that our proposed methodology ensures proper confidence interval coverage in a frequentist sense and discuss why this strategy requires less privacy budget than starting from the protected point estimates. 

When designing and analysing differentially private algorithms it is tempting to 
separate the error due to sampling from the error due to privacy and bound the two separately. A main finding in our work is the limitation of this approach.  
We find that one can obtain considerably tighter confidence intervals by analysing the relationship between the two sources of error. Unlike approaches which treat the analysis of the non-private algorithm as a black-box, this involves looking at the different ways that the sampling error can result in the confidence interval failing to capture the median, and considering how the error due to privacy affects each of these modalities. 

We assume simple random sampling throughout the paper. This assumption is often violated in survey practice. However,  understanding the implications of complex sampling designs on the privacy guarantees is an open research problem \citep{drechsler2021} and we are not aware of any DP applications that take complex sampling designs into account. 
We  see our contribution as an important first step towards the goal of better serving the needs of statistical agencies, while acknowledging the limitations of the current findings. We will come back to this point in the conclusions. 

We evaluate several algorithms for computing valid differentially private confidence intervals. We discuss algorithms that satisfy two versions of differential privacy: the strictest version \citep{DMNS06}, now known as \textit{pure differential privacy}, as well as a slight relaxation, \textit{concentrated differential privacy} \citep{BunS16, Dwork:2016}. The focus of our paper is on empirical evaluation, using a mix of simulated and real data. Nevertheless, we found that new methodology and theory was also needed to adapt existing algorithms for confidence interval computation. 
We include an application using data from the U.S. Census 1940 to illustrate how statistical agencies willing to adopt the methodology could decide which algorithm and parameter settings to pick for their data release.

The algorithms we developed are all \textit{sound} in the nonparametric, frequentist sense: when run with nominal coverage $1-\alpha$ the probability that the true population median is contained in the computed confidence interval is at least $1-\alpha$, where the probability is taken over the entire process of sampling from the population and computing the private confidence intervals based on the drawn sample. Since all algorithms rely on non-parametric strategies for computing the confidence intervals, the intervals are valid for every IID distribution on observations.  
We summarize our findings briefly:
\begin{itemize}
    \item In our comparison of several algorithms, the best choice across a range of settings was a variant of the exponential mechanism \citep{McSherryT07}, a generic framework for DP algorithm design that we adapt for confidence interval estimation. This algorithm is tailored to the median, and releases only a single confidence interval. 
    
    \item A different algorithm, based on a differentially private CDF estimate~\citep{Li:2010}, consistently produced confidence intervals that were slightly wider than those of the exponential mechanism. However, the algorithm's output can be used to produce a confidence interval for any quantile of the data set or even a confidence band for the entire CDF. In principle, the approximation to the entire CDF would also allow the incorporation of a parametric model or a Bayesian prior after the fact. Its flexibility makes it a better choice for settings where eventual users will be interested in more than a single median.
    
    \item For settings where only a very loose bound on the range of the data range is known a priori, a hybrid algorithm that uses binary search to narrow the range and then switches to the CDF-based estimator produces narrower intervals than other methods. 
    
    \item The methods we tested exhibited noticeable bias that depends on the underlying distribution and appears to be hard to correct. The bias was low relative to the width of the confidence intervals, and so would not be an issue for one-shot applications. However, it might be a concern when aggregating estimates across many small areas. It is not clear whether bias is necessary for accurate nonparametric DP median approximations.~\footnote{The one unbiased method that we tested (based on the \textit{smooth sensitivity} framework~\citep{NRS07}) produced relatively poor point estimates, and we did not include it in the tests of confidence interval width. 
    \ifnum\tpdp=0
        See Figure~\ref{fig:pointestimates} in Appendix~\ref{online supplement: otheralgs} for analysis of performance of CDP point estimators for the median.
    \else
        These results are presented in the full version.     
    \fi 
    }

\end{itemize}

\ifnum\tpdp=0
    The remainder of the paper is organized as follows: In Section \ref{sec:prelim} we review some of the privacy definitions that are relevant for this paper and discuss confidence interval estimation for the median without privacy considerations. We extend these discussions to differentially private confidence intervals in Section \ref{sec:DP-CIs}. Section \ref{algorithms} contains a high-level review of the algorithms we considered (detailed descriptions of the different algorithms can be found in the Appendix). In Section \ref{sec:simulation} we present the results from extensive simulation studies that evaluate the performance of the algorithms under various parameter settings. Section \ref{sec:application} illustrates how the methodology could be applied in practice by replicating one of the income tables published by the U.S. Census Bureau using publicly available data from the 1940 U.S. Census. The paper concludes with some final remarks. 
\else 
    Our full version, forthcoming on arXiv, presents both the necessary theoretical development and experiments based on a mix of simulated data and publicly available data from the 1904 US Census. 
\fi

\ifnum\tpdp=0 
\section{Preliminaries}
\label{sec:prelim}

\subsection{Differential Privacy} 

The algorithms in this paper 
satisfy a version of differential privacy (DP) called \emph{concentrated differential privacy} (CDP). This notion of privacy lies between the more common notions of \emph{pure differential privacy} and \emph{approximate differential privacy}. 
Since our algorithms often include hyperparameters, we state a definition of DP for algorithms that take as input not only the dataset, but also the desired privacy parameters and any required hyperparameters. Let $\mathcal{X}$ be a data universe 
(e.g., $\reals$ for medians)
and $\mathcal{X}^n$ be the space of datasets of size $n$. Two datasets $d, d' \in \mathcal{X}^n$ are neighboring, denoted $d \sim d'$, if they differ on a single record. 
Let $\mathcal{H}$ be the space of hyperparameters and $\mathcal{Y}$ be an output space. In order to build some intuition, let us first define pure and approximate DP.

\begin{definition}[$(\eps, \delta)$-Differential Privacy \citep{DMNS06, Dwork2006}]\label{def:DP} Given $\eps\ge0$ and $\delta\in[0,1]$,
a randomized mechanism $M: \mathcal{X}^n \times \mathcal{H} \rightarrow \mathcal{Y}$ is $(\eps,\delta)$-\emph{differentially private} if for all datasets $d \sim d' \in \mathcal{X}^n$, $\textrm{hyperparams} \in \mathcal{H}$, and events $E\subseteq\calY$,
\begin{align*} \label{def:dp-with-inputs}
    &\Pr[M(d, \text{hyperparams}) \in E]
    \leq e^\eps \cdot\Pr[M(d', \text{hyperparams}) \in E] + \delta,
\end{align*}
where the probabilities are taken over the random coins of $M$.
\end{definition}

The key intuition for this definition is that the distribution of outputs on input dataset $d$ is almost indistinguishable from the distribution on outputs on input dataset $d'$. Therefore, given the output of a differentially private mechanism, it is impossible to confidently determine whether the input dataset was $d$ or $d'$. If $\delta=0$, then we refer to this as $\eps$-\emph{pure differential privacy}. If $\delta>0$, we refer to $(\eps,\delta)$-\emph{approximate differential privacy}.
For strong privacy guarantees, the privacy-loss parameter is typically taken to be a small constant less than $1$ (note that $e^\eps \approx 1+\eps$ as $\eps \rightarrow 0$). However, in practice, larger values of $\eps$ are occasionally used to satisfy utility constraints while providing some level of non-trivial privacy guarantee. 

Concentrated differential privacy has the same intuition; it bounds the divergence between the distributions $M(d)$ and $M(d')$.

\begin{definition}[$\rho$-Concentrated Differential Privacy \citep{BunS16}] Given $\rho\ge 0$, a randomized mechanism $M: \mathcal{X}^n \times \mathcal{H} \rightarrow \mathcal{Y}$ satisfies $\rho$-\emph{concentrated differential privacy} if for all datasets $d \sim d' \in \mathcal{X}^n$, $\textrm{hyperparams} \in \mathcal{H}$, and $\alpha\in(1,\infty)$,
\begin{align*}
    D_{\alpha}(M(d,\textrm{hyperparams})\|M(d',\textrm{hyperparams}))\le \rho
\end{align*}
where $D_{\alpha}$ is the $\alpha$-R\'enyi divergence and the probabilities are taken over the random coins of $M$.
\end{definition}

In order to give some intuition for concentrated DP, let us elaborate more on its relationship with pure and approximate DP. Given data sets $d\sim d'$, and a randomised mechanism $M$, we can define a random variable, called the \emph{privacy loss random variable}, denoted $Z=\texttt{Priv}(M(d),M(d'))$, as follows. Let $y\sim M(d)$ (i.e. $y$ is the output of the mechanism $M$ on input $d$), then $Z = \ln\left(\frac{\Pr(M(d))=y}{\Pr(M(d'))=y}\right)$. 
Then $M$ is $\eps$-pure differentially private if and only if $\Pr(|Z|>\eps)=0$, and 
$M$ being $(\eps,\delta)$-approximately differentially private  is (almost) captured by the requirement that $\Pr(|Z|>\eps)\le\delta$. Now, $\rho$-concentrated differential privacy essentially translates to the requirement that $Z$ is a subgaussian random variable with mean $\rho$ and variance $2\rho$. From this perspective, it is clear that concentrated differential privacy lies between pure and approximate DP.

\begin{lemma}
 If $M$ is $\rho$-CDP, then $M$ is $(\rho + 2\sqrt{\rho \log(1/\delta)}, \delta)$-DP for any $\delta > 0$. If $M$ is $\eps$-DP then $M$ is $\frac{1}{2}\eps^2$-CDP.
\end{lemma}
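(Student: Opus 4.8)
The plan is to prove the two implications separately, in each case passing through the \emph{privacy loss random variable} $Z=\texttt{Priv}(M(d),M(d'))$ introduced above. For the first implication, I would use the standard reduction that $(\eps,\delta)$-DP follows from a one-sided tail bound on the privacy loss: if for every pair of neighbors $d\sim d'$ (and every choice of hyperparameters) one has $\Pr_{y\sim M(d)}[Z>\eps]\le\delta$, then $M$ is $(\eps,\delta)$-DP. This is seen by splitting an arbitrary event $E$ according to whether the log-density-ratio at the output is at most $\eps$ or not; on the former part the density of $M(d)$ is at most $e^{\eps}$ times that of $M(d')$, and the latter part has probability at most $\delta$. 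So it suffices to establish such a tail bound with $\eps=\rho+2\sqrt{\rho\log(1/\delta)}$, and since the neighbor relation is symmetric the bound for both orderings of $d,d'$ comes for free.

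To get the tail bound, recall from the discussion preceding the lemma that $\rho$-CDP makes $Z$ subgaussian with mean $\rho$ and variance proxy $2\rho$; concretely, the R\'enyi bound in the definition yields the moment generating function bound $\mathbb{E}_{y\sim M(d)}[e^{\lambda Z}]\le e^{\lambda\rho+\lambda^2\rho}$ for all $\lambda>0$. A Chernoff bound then gives $\Pr[Z>\eps]\le\exp(\lambda^2\rho-\lambda(\eps-\rho))$ for every $\lambda>0$; optimizing at $\lambda=(\eps-\rho)/(2\rho)$ (which is positive whenever $\delta<1$, hence $\eps>\rho$) produces $\Pr[Z>\eps]\le\exp(-(\eps-\rho)^2/(4\rho))$, and setting this equal to $\delta$ and solving for $\eps$ gives exactly $\eps=\rho+2\sqrt{\rho\log(1/\delta)}$. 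This completes the first implication.

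For the second implication, pure $\eps$-DP means that for every output $y$ the log-density ratio satisfies $|Z|\le\eps$ pointwise. To bound $D_\alpha(M(d)\|M(d'))$ for a fixed $\alpha>1$, I would change measure to write $\mathbb{E}_{y\sim M(d)}[e^{(\alpha-1)Z}]=\mathbb{E}_{y\sim M(d')}[e^{\alpha Z}]$, where under $M(d')$ the variable $Z$ lies in $[-\eps,\eps]$ and satisfies $\mathbb{E}[e^{Z}]=1$ (hence $\mathbb{E}[Z]\le0$ by Jensen). Over all laws of a variable on $[-\eps,\eps]$ subject to $\mathbb{E}[e^{Z}]=1$, the functional $\mathbb{E}[e^{\alpha Z}]$ is linear on a compact convex set, so it is maximized at an extreme point, i.e.\ a two-point law supported on $\{-\eps,\eps\}$; evaluating there and comparing with $e^{\alpha(\alpha-1)\eps^2/2}$ yields $D_\alpha(M(d)\|M(d'))\le\tfrac12\eps^2\alpha$ for all $\alpha$, which is precisely $\tfrac12\eps^2$-CDP.

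The main obstacle is the final comparison in the second implication: obtaining the clean constant $\tfrac12\eps^2$ uniformly in $\alpha$ (with no blow-up as $\alpha\to1$) genuinely requires using \emph{both} the two-sided bound $|Z|\le\eps$ and the normalization $\mathbb{E}[e^{Z}]=1$ — a direct application of Hoeffding's lemma to $Z$ alone gives only the weaker $D_\alpha\le\eps^2\alpha^2/(2(\alpha-1))$, which degrades near $\alpha=1$. Carrying out the two-point optimization and then verifying the inequality $\log\big((\sinh(\alpha\eps)-\sinh((\alpha-1)\eps))/\sinh(\eps)\big)\le\tfrac12\eps^2\alpha(\alpha-1)$ (with equality in the limit $\alpha\to1$) is the one calculation that needs genuine care; the rest is routine Chernoff/Markov bookkeeping, and the first implication has no real obstacle beyond the optimization and the trivial case $\eps\le\rho$.
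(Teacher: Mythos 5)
The paper does not actually prove this lemma; it is quoted from \citet{BunS16} (their Propositions 1.3 and 1.4), and your argument is essentially their proof, so the route is the right one. One preliminary remark: your moment generating function bound $\mathbb{E}_{y\sim M(d)}[e^{\lambda Z}]\le e^{\lambda\rho+\lambda^2\rho}$ is equivalent to $D_\alpha(M(d)\|M(d'))\le\rho\alpha$ for all $\alpha>1$ (take $\lambda=\alpha-1$), which is the standard definition of $\rho$-zCDP; the displayed definition in the paper reads $D_\alpha\le\rho$, which is a typo, and you have correctly used the intended version. Given that, the first implication is complete and correct: the Chernoff optimization at $\lambda=(\eps-\rho)/(2\rho)$ and the reduction from the one-sided tail bound $\Pr[Z>\eps]\le\delta$ to $(\eps,\delta)$-DP are both exactly right.

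The second implication has the right skeleton but two gaps. First, the extreme points of $\set{\mu\in\Delta([-\eps,\eps]):\mathbb{E}_\mu[e^Z]=1}$ are indeed laws supported on at most two points, but nothing in your argument forces those two points to be $\pm\eps$. The clean way to land on the endpoint law is not an extreme-point argument but convexity of $t\mapsto t^\alpha$ for $\alpha\ge1$: writing $e^{\alpha Z}=(e^Z)^\alpha$ with $e^Z\in[e^{-\eps},e^\eps]$, the chord bound gives
\[
(e^Z)^\alpha\le\frac{e^\eps-e^Z}{e^\eps-e^{-\eps}}\,e^{-\alpha\eps}+\frac{e^Z-e^{-\eps}}{e^\eps-e^{-\eps}}\,e^{\alpha\eps},
\]
and taking expectations under $M(d')$ and using $\mathbb{E}[e^Z]=1$ yields exactly the two-point value $\bigl(\sinh(\alpha\eps)-\sinh((\alpha-1)\eps)\bigr)/\sinh(\eps)$. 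Second, and more importantly, the inequality $\bigl(\sinh(\alpha\eps)-\sinh((\alpha-1)\eps)\bigr)/\sinh(\eps)\le e^{\alpha(\alpha-1)\eps^2/2}$ is the entire mathematical content of this direction, and you state it without proof. It is not hard to close: by the product formula $\sinh x-\sinh y=2\cosh\bigl(\tfrac{x+y}{2}\bigr)\sinh\bigl(\tfrac{x-y}{2}\bigr)$ the left side equals $\cosh\bigl((\alpha-\tfrac12)\eps\bigr)/\cosh(\eps/2)$, and since $\tfrac{d}{dt}\log\cosh t=\tanh t\le t$ for $t\ge0$, integrating from $\eps/2$ to $(\alpha-\tfrac12)\eps$ gives $\log\cosh\bigl((\alpha-\tfrac12)\eps\bigr)-\log\cosh(\eps/2)\le\tfrac12\bigl((\alpha-\tfrac12)^2-\tfrac14\bigr)\eps^2=\tfrac12\alpha(\alpha-1)\eps^2$. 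With these two repairs your argument is a complete proof.
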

 
We will focus in this paper on algorithms that satisfy concentrated differential privacy. While still satisfying a rigorous notion of privacy, this will allow our algorithms to be significantly more accurate than their corresponding purely differentially private counterparts. For most of our algorithms little accuracy is gained from transitioning to approximate differential privacy. Additionally, CDP has the desirable property of being a one-parameter property, which allows for simpler privacy accounting. The lemma below captures the fact that the class of  $\rho$-CDP algorithms is closed under adaptive composition and post-processing.
\begin{lemma}\citep{BunS16}\label{composition}
Let $M: \mathcal{X}^n \times \mathcal{H} \rightarrow \mathcal{Y}$ and $M': \mathcal{X}^n \times \mathcal{H'} \rightarrow \mathcal{Y}'$, where $\mathcal{H'}=\mathcal{Y}\times\mathcal{H''}$. Define $M'':\mathcal{X}^n\times (\mathcal{H'}\times\mathcal{H''})$ by \[M''(d,\textrm{hyperparams}, \textrm{hyperparams}') = M'(d, (M(d, \textrm{hyperparams}), \textrm{hyperparams}')).\] If $M$ is $\rho$-CDP and $M'$ is $\rho'$-CDP then $M''$ is $\rho+\rho'$-CDP.
\end{lemma}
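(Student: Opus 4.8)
The plan is to prove this by working directly with the R\'enyi divergence and exploiting the fact that its exponentiated form factorises under composition; this is the standard argument behind R\'enyi / concentrated-DP composition. Fix neighbouring datasets $d \sim d'$, hyperparameters $\textrm{hyperparams} \in \mathcal{H}$ and $\textrm{hyperparams}' \in \mathcal{H}''$, and an order $\alpha \in (1,\infty)$; it suffices to show $D_\alpha\big(M''(d,\textrm{hyperparams},\textrm{hyperparams}')\,\big\|\,M''(d',\textrm{hyperparams},\textrm{hyperparams}')\big)\le\rho+\rho'$. Let $p(\cdot)$ and $q(\cdot)$ be the densities (with respect to a common dominating measure) of the first-stage outputs $M(d,\textrm{hyperparams})$ and $M(d',\textrm{hyperparams})$, and for each possible first-stage output $y$ let $p'(\cdot\mid y)$ and $q'(\cdot\mid y)$ be the densities of $M'(d,(y,\textrm{hyperparams}'))$ and $M'(d',(y,\textrm{hyperparams}'))$.

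First I would bound the divergence between the \emph{joint} distributions $P$ and $Q$ of the pair $(y,z)$ obtained by running $M$ and then $M'$ on $d$ and on $d'$ respectively, i.e.\ $P(y,z)=p(y)\,p'(z\mid y)$ and $Q(y,z)=q(y)\,q'(z\mid y)$. Using $D_\alpha(P\|Q)=\tfrac{1}{\alpha-1}\log\int P^\alpha Q^{1-\alpha}$ and factoring the integrand,
\begin{equation*}
\exp\big((\alpha-1)D_\alpha(P\|Q)\big)\;=\;\int p(y)^\alpha q(y)^{1-\alpha}\Big(\int p'(z\mid y)^\alpha\,q'(z\mid y)^{1-\alpha}\,dz\Big)\,dy .
\end{equation*}
The key point -- the only place adaptivity enters -- is that $y$ is a legitimate hyperparameter value for $M'$, so the $\rho'$-CDP guarantee of $M'$ applied to the neighbouring pair $d\sim d'$ gives $D_\alpha\big(M'(d,(y,\textrm{hyperparams}'))\,\|\,M'(d',(y,\textrm{hyperparams}'))\big)\le\rho'$ for \emph{every} $y$; since $\alpha-1>0$, the inner integral is therefore at most $e^{(\alpha-1)\rho'}$, uniformly in $y$. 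Pulling that factor out and recognising the remaining integral as $\exp\big((\alpha-1)D_\alpha(M(d,\textrm{hyperparams})\|M(d',\textrm{hyperparams}))\big)\le e^{(\alpha-1)\rho}$ by the $\rho$-CDP guarantee of $M$, we obtain $\exp\big((\alpha-1)D_\alpha(P\|Q)\big)\le e^{(\alpha-1)(\rho+\rho')}$, and dividing by $\alpha-1>0$ yields $D_\alpha(P\|Q)\le\rho+\rho'$.

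Finally I would descend from the joint distribution to the actual output of $M''$: that output is precisely the second-stage output $z$, which is the image of $(y,z)$ under coordinate projection, so the data-processing inequality for R\'enyi divergence gives $D_\alpha\big(M''(d,\textrm{hyperparams},\textrm{hyperparams}')\,\|\,M''(d',\textrm{hyperparams},\textrm{hyperparams}')\big)\le D_\alpha(P\|Q)\le\rho+\rho'$. Since $d\sim d'$, the hyperparameters, and $\alpha\in(1,\infty)$ were arbitrary, $M''$ is $(\rho+\rho')$-CDP.

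The high-level argument is short, so the main obstacle is the measure-theoretic bookkeeping: choosing a common dominating measure so the densities $p,q,p',q'$ simultaneously exist, justifying the factorisation and the interchange of the order of integration (Fubini--Tonelli, using non-negativity of the integrand), and invoking a clean statement of the data-processing inequality for $D_\alpha$ valid for general, possibly continuous, output spaces. A secondary subtlety worth flagging is that the $\rho'$-CDP hypothesis on $M'$ must be used in its ``for all hyperparameter values'' form, because the hyperparameter $y$ fed to $M'$ is random and statistically dependent on the data through the first stage; it is exactly this uniformity in $y$ that makes the composition \emph{adaptive}.
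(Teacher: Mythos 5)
Your proof is correct: the paper states this lemma only as a citation to \citet{BunS16} and gives no proof of its own, and your argument is precisely the standard one from that reference --- factorise the exponentiated R\'enyi divergence of the joint distribution of $(y,z)$, bound the inner integral uniformly in $y$ using the $\rho'$-CDP guarantee of $M'$ (which, as you rightly flag, must be invoked in its ``for all hyperparameter values'' form because $y$ is data-dependent), bound the remaining outer integral using the $\rho$-CDP guarantee of $M$, and pass from the joint to the marginal of the second-stage output via the data-processing inequality for $D_\alpha$. Nothing essential is missing; the measure-theoretic caveats you list (common dominating measure, Tonelli for the nonnegative integrand, a general-output-space data-processing inequality) are exactly the routine points one would discharge in a fully formal write-up.
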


\subsection{Confidence Intervals for the Median}\label{confidenceintervals}

In many statistical applications, we assume that data are drawn i.i.d. from an underlying population distribution, and the statistic of interest is a property of the underlying population. However, one typically only has access to a sample from that population, so the statistic computed on the sample is used as an \emph{estimate} of the true population statistic. We will refer to the median of the underlying population as the \emph{population median} and the median of a given sample as the \emph{sample median}.
Since there is randomness in the sampling process, there is always uncertainty in how well the sample median matches the true population median. 
As this uncertainty can be large, sample statistics should be accompanied by a measure of the uncertainty. Providing a measure of uncertainty is even more important for differentially private statistics since randomness in the algorithm provides an additional source of uncertainty. 

\begin{figure}
    \centering
    \includegraphics[scale=0.42]{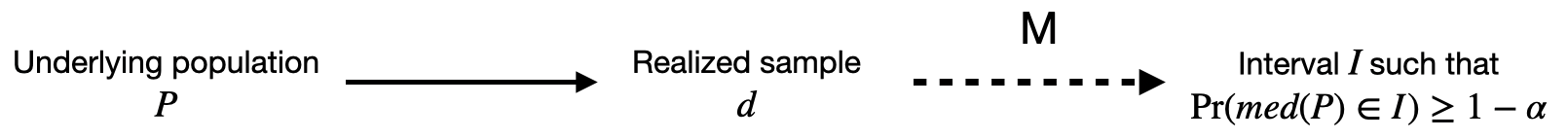}
    \caption{A graphical representation of the process of computing a confidence interval. When privacy is not a concern, no restrictions are placed on the function $M$. When computing a differentially private confidence interval, we require that $M$ is differentially private. The probability is taken over all the randomness in the system, both the randomness due to sampling, and the randomness in $M$.}
    \label{fig:DPCI}
\end{figure}

One method for capturing the uncertainty in an estimate is a  confidence interval. We consider the standard set-up for statistical inference. Let $\pops\subset\Delta(\mathbb{R})$ be the set of possible population distributions over the data domain $\mathbb{R}$. For any $P\in\pops$, a median of $P$ is defined to be any value $m$ such that \[\int_{-\infty}^m P(x) dx\ge 1/2 \;\;\text{and} \;\; \int_m^{\infty} P(x) dx\ge 1/2. \] For every distribution the set of medians is a non-empty, compact set. Since defining a convention here will be convenient, we will refer to the midpoint of the set of medians as \emph{the median}, denoted $\median(P)$.
Let $\intervals$ be the set of intervals in $\reals$. Given $n\ge 0$, let $M:\mathcal{X}^n\to\intervals$ be a randomised mechanism that takes as input a data set of size $n$ and outputs an interval in $\reals$. Given a desired confidence level $1-\alpha$, the goal of $M$ is to, with probability $1-\alpha$, output an interval that contains $\median(P)$.

\begin{definition}\label{defineCI}
For any $\alpha\in[0,1]$ and $n\in\mathbb{N}$, $M:\mathcal{X}^n\to\intervals$ is a $(1-\alpha)$-\emph{confidence interval for the median} for $\pops$ if for all $P\in\pops$, \[\Pr(\median(P)\in M(d))\geq 1-\alpha,\] where the randomness is taken over both the randomness $M$ and the randomness in the sample $d\sim P^n$.
\end{definition}

A graphical representation of the framework for computing a confidence interval is given in Figure~\ref{fig:DPCI}. We will refer to $1-\alpha$ as the \emph{coverage} of the confidence interval.

When one is not concerned with privacy, 
a non-parametric confidence interval for the median can be computed using the order statistics of the sample. 
The rank of the median $\median(P)$ in a data set $d\in P^n$ is distributed as the binomial $\Bin(n,\beta)$, for some $\beta=\Pr_P(x<\median(P))$.
We can exploit this to obtain a confidence interval for the median. 
For a data set $d\in\reals^n$, let $d_{(k)}$ denote the $k$-th smallest value in $d$, referred to as the $k$-th order statistic. The median of $d$ is the midpoint of $d_{(\lfloor n/2 \rfloor)}$ and $d_{(\lceil n/2 \rceil)}$.

\begin{lemma}[Non-private $(1-\alpha)$-confidence interval] \label{nonprivCI} Let $\CDFbin$ be the CDF of the binomial random variable $\Bin(n, 1/2)$ and let \[\PNPL{\alpha} = \max_{m\in\mathbb{N}} \{m\;|\; \CDFbin(m)\le\alpha/2\} \;\;\text{and}\;\; \PNPU{\alpha} = \min_{m\in\mathbb{N}} \{m\;|\;\CDFbin(m)\ge 1-\alpha/2\}. \] For any data set $d\in\reals^n$, let \[\NPL{\alpha}(d)=d_{(\PNPL{\alpha})} \;\;\text{and}\;\; \NPU{\alpha}(d)=d_{(\PNPU{\alpha})}.\] Then $M:\chi^n\to\intervals$ given by $M(d)=[\NPL{\alpha}(d),\NPU{\alpha}(d)]$ is a $(1-\alpha)$-confidence interval for the median for $\Delta(\mathbb{R})$.
\end{lemma}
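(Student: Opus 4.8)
The plan is to reduce the coverage guarantee to a one-sided tail bound for a binomial count and then take a union bound over the two ways the interval can miss the median. Fix an arbitrary $P\in\Delta(\reals)$, set $m^\star=\median(P)$, and draw $d\sim P^n$; note that $M$ is here a deterministic function of $d$, so the only randomness is in the sample. The interval $[d_{(\PNPL{\alpha})},d_{(\PNPU{\alpha})}]$ fails to contain $m^\star$ precisely when $m^\star<d_{(\PNPL{\alpha})}$ or $m^\star>d_{(\PNPU{\alpha})}$, so it suffices to bound each of these two events by $\alpha/2$.

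For the lower endpoint, the event $d_{(\PNPL{\alpha})}>m^\star$ is, by the definition of order statistics, the event that strictly fewer than $\PNPL{\alpha}$ of the $d_i$ fall in $(-\infty,m^\star]$. Writing $X=|\{i:d_i\le m^\star\}|$, we have $X\sim\Bin(n,p)$ where $p=\Pr_{x\sim P}(x\le m^\star)\ge 1/2$ by the defining property of the median. Since raising the success probability only stochastically increases a binomial, $\Pr(X\le t)\le\CDFbin(t)$ for every $t$, so $\Pr(d_{(\PNPL{\alpha})}>m^\star)=\Pr(X\le\PNPL{\alpha}-1)\le\CDFbin(\PNPL{\alpha}-1)\le\CDFbin(\PNPL{\alpha})\le\alpha/2$, using monotonicity of $\CDFbin$ and the definition of $\PNPL{\alpha}$.

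The upper endpoint is symmetric, invoking the other half of the median condition: $d_{(\PNPU{\alpha})}<m^\star$ is the event that at least $\PNPU{\alpha}$ of the $d_i$ fall in $(-\infty,m^\star)$. Writing $Y=|\{i:d_i<m^\star\}|\sim\Bin(n,q)$ with $q=\Pr_{x\sim P}(x<m^\star)\le 1/2$, stochastic monotonicity now runs the other way, so $\Pr(Y\ge\PNPU{\alpha})\le\Pr(\Bin(n,1/2)\ge\PNPU{\alpha})$, which is at most $\alpha/2$ by the definition of $\PNPU{\alpha}$ (and the identity $\Pr(\Bin(n,1/2)\ge k)=1-\CDFbin(k-1)$ together with the reflection symmetry of $\Bin(n,1/2)$). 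Adding the two bounds gives coverage at least $1-\alpha$, and since the argument used nothing about $P$ beyond the two median inequalities, it holds for all $P\in\Delta(\reals)$.

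The step I expect to require the most care is the translation between order statistics and binomial counts, together with the associated boundary bookkeeping. Two points are delicate. First, when $P$ has an atom at $m^\star$ the counts $X$ and $Y$ genuinely differ, which is exactly why one must track $\{d_i\le m^\star\}$ for the lower side and $\{d_i<m^\star\}$ for the upper side separately and appeal to stochastic dominance, rather than pretending the rank of $m^\star$ is exactly $\Bin(n,1/2)$ (true only when $P$ is continuous at $m^\star$). Second, matching the thresholds $\PNPL{\alpha}$ and $\PNPU{\alpha}$ to the right order statistics requires keeping the $\le$-versus-$<$ conventions consistent throughout; it is the reflection symmetry of $\Bin(n,1/2)$ — which makes $\PNPU{\alpha}$ the mirror image of $\PNPL{\alpha}$ — that makes the two $\alpha/2$ contributions line up. Everything else is routine.
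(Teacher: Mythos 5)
Your overall strategy is sound, and the lower-endpoint argument is correct; in fact your use of stochastic dominance (tracking $\{d_i\le m^\star\}$ and $\{d_i<m^\star\}$ separately) is more careful than the paper's own treatment, which establishes the identity $\Pr(\rank{d}{\median(P)}=m)=\Pr(\Bin(n,1/2)=m)$ only for continuous $P$ and dismisses the upper tail as ``analogous.'' But the upper-endpoint step you wave through is exactly where the argument breaks. You assert that $\Pr(\Bin(n,1/2)\ge\PNPU{\alpha})\le\alpha/2$ ``by the definition of $\PNPU{\alpha}$.'' Using your own identity $\Pr(\Bin(n,1/2)\ge k)=1-\CDFbin(k-1)$, that assertion requires $\CDFbin(\PNPU{\alpha}-1)\ge 1-\alpha/2$; but $\PNPU{\alpha}$ is the \emph{minimum} $m$ with $\CDFbin(m)\ge 1-\alpha/2$, so minimality forces $\CDFbin(\PNPU{\alpha}-1)< 1-\alpha/2$ and hence $\Pr(\Bin(n,1/2)\ge\PNPU{\alpha})>\alpha/2$ whenever the threshold is not attained exactly. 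Concretely, for $n=10$ and $\alpha=0.05$ one gets $\PNPL{\alpha}=1$ and $\PNPU{\alpha}=8$, and for continuous $P$ the miss probability of $[d_{(1)},d_{(8)}]$ is $\CDFbin(0)+\Pr(\Bin(10,1/2)\ge 8)=1/1024+56/1024\approx 0.056>\alpha$.

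The failure is not an artifact of your route: the upper index as printed in the lemma is off by one (the correct symmetric partner of $\PNPL{\alpha}$ is $n-\PNPL{\alpha}=\PNPU{\alpha}+1$, equivalently $\min\{m:\CDFbin(m-1)\ge 1-\alpha/2\}$), so no argument can close this step for the statement as displayed. With the corrected index your proof goes through verbatim: $\Pr(Y\ge n-\PNPL{\alpha})\le\Pr(\Bin(n,1/2)\ge n-\PNPL{\alpha})=\CDFbin(\PNPL{\alpha})\le\alpha/2$ by reflection symmetry. The ``boundary bookkeeping'' you yourself flagged as the delicate point is precisely where you substituted the desired conclusion for a one-line verification; writing out that line would have exposed the discrepancy.
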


We will often refer to the interval, $[\NPL{\alpha}(d), \NPU{\alpha}(d)]$, output by the mechanism in Lemma~\ref{nonprivCI} as \emph{the non-private $(1-\alpha)$-confidence interval for the median}, but we note that it is the mechanism $M$ that satisfies Definition~\ref{defineCI}, not the output.  

In this vein, the goal of CDP confidence intervals is not to privately estimate the specific interval $[\NPL{\alpha}(d),\NPU{\alpha}(d)]$, but to output valid confidence intervals. These confidence intervals may, or may not, contain $[\NPL{\alpha}(d),\NPU{\alpha}(d)]$. Referring to Figure~\ref{fig:DPCI}, producing a CDP confidence interval involves the same procedure, with the additional requirement that $M$ is CDP. That is, the realised sample $d$ is only accessed through a CDP mechanism.

\begin{definition}\label{defineprivCI}
$M:\mathcal{X}^n\times \mathcal{H}\to\intervals$ is a $\rho$-CDP, $(1-\alpha)$-confidence interval for the median for $\pops$ if 
\begin{itemize}
    \item $M$ is $\rho$-CDP
    \item For any $\textrm{hyperparams} \in \mathcal{H}$, $M(\cdot, \textrm{hyperparams})$ is an $(1-\alpha)$-confidence interval for the median for $\pops$.
\end{itemize}
\end{definition}

The definition of confidence intervals as stated in Definition~\ref{defineCI} and Definition~\ref{defineprivCI} only requires that the confidence interval is valid for distributions $P\in\mathcal{P}$. \emph{Parametric} estimation is when one defines $\mathcal{P}$ to be only distributions of a particular, often quite simple, form. For example, $\mathcal{P}$ might be the set of all log-normal distributions over $\mathbb{R}$. In \emph{non-parametric} estimation, we assume no knowledge of the underlying population and set $\mathcal{P}=\Delta(\mathbb{R})$, the set of all distributions over $\mathbb{R}$. If one has accurate knowledge of the underlying population, then parametric estimation can result in a tighter confidence interval. However, if there is model mismatch (for example if the underlying population is not exactly log-normal) then parametric estimation can result in invalid confidence intervals. This effect can be amplified by private algorithms which may rely on the modeling assumptions in non-trivial ways. In this paper our goal is to focus on non-parametric confidence intervals, which means our algorithms will always produce valid confidence intervals. As a minor caveat, we restrict ourselves to the set of distributions with continuous probability density functions on $\mathbb{R}$.\footnote{Note this caveat is minor since continuous distributions are dense in $\Delta(\mathbb{R})$. That is every distribution on $\mathbb{R}$ is within negligible distance of a continuous distribution. We discuss a practical method for handling non-continuous distributions in Appendix~\ref{convolutionisgood} } Denote the set of all continuous distributions on $\mathbb{R}$ by $\gooddist$.  

\removeforsubmission{Note that a confidence interval does not directly output a point estimate for the median itself. In the absence of privacy constraints, one can simply additionally release the sample median $\median(d)$. However, under privacy constraints, it is typically desirable to compute as few statistics as possible, in order to allocate the maximum amount of privacy budget to each statistic. As such, rather than allocating some of the privacy budget to providing a point estimate of the median, it is often preferable to allocate the entire budget to estimating the confidence interval, then use the midpoint of that interval as a point estimate of the median. }

\subsection{Related Work}
\label{sec:related}

Computing confidence intervals for the median is one of the most fundamental statistical tasks. However, finding a differentially private estimator for this task that is accurate across a range of datasets and parameter regimes is surprisingly nuanced. There has been a significant amount of prior work on differentially private point estimators for the median \citep{NRS07, BunS19, Asi:2020, Alabi:2020, Tzamos:2020} and other quantiles \citep{gillenwater:2021}. To the best of our knowledge, none of these works addressed DP confidence intervals for the median. However, there has been significant work on DP confidence intervals for other estimation tasks like (Gaussian or sub-Gaussian) mean estimation \citep{Karwa:2018, Gaboardi:2018, Du:2020, biswas2020coinpress}, and linear regression \citep{Barrientos:2017, Evans:2021}. There are also several works on designing more general DP confidence intervals using bootstrapping, or a technique called subsample-and-aggregate~\citep{NRS07}, to account for the combined uncertainty from sampling and noise due to privacy \citep{Barrientos:2017, Ferrando:2020, Brawner:2018, DOrazio:2015, Evans:current}. These algorithms typically require a parametric model on the data or a normality assumption on the quantity being estimated; neither hold in our setting.

The areas of differentially private bayesian inference \citep{Christos:2014, Wang:2015, Foulds:2016, Mikko:2017, Bernstein:2018, GarrettS19, gong:2019} and hypothesis testing \citep{Vu:2009, CKSBG19, Degue:2018, Rogers:2016, Wang:2015hypothesis} study related problems of quantifying uncertainty, but specific goals differ. 
\cite{Wang18}, \cite{Du:2020}, and \cite{biswas2020coinpress}  perform experimental evaluations of DP confidence intervals, however they focus on different estimators (linear regression and mean estimation) and focus on large datasets of at least 1,000, and generally many more, data points.

To the best of our knowledge, our work is unique in focusing on valid non-parametric differentially private confidence intervals for the median. This approach allows us to define algorithms that provide accurate and private confidence intervals without requiring distributional assumptions on the underlying population. 

\section{Designing DP Confidence Intervals} 
\label{sec:DP-CIs}

\subsection{Roadblocks and first attempts}\label{roadblocks}

There are several roadblocks in designing CDP confidence intervals for the median. Firstly, CDP algorithms that estimate the median using data independent output perturbation methods (methods that involve simply adding noise to the non-private estimate) necessarily perform poorly since the median is very sensitive for worst-case data sets. Thus, in order to design algorithms that perform well on ``typical" data sets, the noise addition must be data dependent. \removeforsubmission{This creates difficulties when releasing information regarding the uncertainty in the private estimate since the uncertainty itself might reveal sensitive information. In order to explore these roadblocks in more detail, let us first consider the simpler task of designing a CDP point estimator for the median. A first attempt may be to consider
the global sensitivity \citep{DMNS06}: 

\begin{definition}[Global Sensitivity]
For a query $f:~\calX^n~\rightarrow~\reals$, the \textbf{global sensitivity} is
\[
\GS_f = \max_{d \sim d'}|f(d) - f(d')|.
\]
\end{definition}

For any function $f$, one can create a differentially private mechanism by adding noise proportional to $\GS_f/\sqrt{\rho}$. If one has no bound on the data, then $\GS_{\median}$ is infinite. Even if one knows that all the data lie in a bounded range $[a,b]$, $\GS_{\median}=|b-a|$, and adding noise proportional to $|b-a|/\sqrt{\rho}$ essentially removes the signal for any reasonable value of $\rho$.

However, for the type of datasets that we typically see in practice, changing one data point, or even a few data points, does not result in a major change in the median. For such data sets, one might consider the local sensitivity~\citep{NRS07}, which can be substantially smaller than the global sensitivity.

\begin{definition}[Local Sensitivity~\citep{NRS07}] The \textbf{local sensitivity} of a query $f:~\calX^n~\rightarrow~\reals$ with respect to a dataset $d\in\calX^n$ is 
\[LS_f(d) = \max_{d\sim d'} |f(d) - f(d')|.\]
\end{definition}

Unfortunately, since the local sensitivity itself is data dependent, adding noise proportional to the local sensitivity is not differentially private. \removeforsubmission{Several approaches have been explored in the DP literature for adding noise that is \emph{close} to the local sensitivity, or at least significantly less than the global sensitivity for typical data sets. In~\citep{NRS07}, Nissim et al. define the \emph{smooth sensitivity}, $SS_{f},$ a smooth upper bound on the local sensitivity such that adding noise proportional to $SS_{f}/\sqrt{\rho}$ is differentially private. They showed that for many statistics, including the median, the smooth sensitivity can be much smaller than the global sensitivity on typical data sets. 
    }}
Several \removeforsubmission{other\ } DP mechanisms for median estimation have been proposed that \removeforsubmission{avoid the large $\GS_{\median}$ by calibrating\ } the noise introduced to the specific data set \citep{NRS07, DworkL09}. While these estimators can perform well as point estimators for the median, our goal is not just to provide an estimate of the median, but also to quantify the uncertainty in our estimate. In  algorithms \removeforsubmission{like the smooth sensitivity mechanism }that tailor the noise to the specific data set, the uncertainty itself is data dependent and thus sensitive. The task of differentially privately releasing an estimate of this uncertainty is nontrivial. Even if one could release a DP estimate of the amount of noise added to the non-private median, the uncertainty in the non-private median is still unaccounted for. For this reason, we focus on DP algorithms that attempt to directly estimate the confidence interval. 

\subsection{Accounting for all sources of randomness}\label{randomnessdiscussion}

Accurate and tight coverage analysis is a crucial component of designing good algorithms since overly conservative coverage estimates can result in confidence intervals that are wider than necessary. 
Valid differentially private confidence intervals need to account for two sources of error; sampling error and error due to privacy. Sampling error, also present in the non-private context, captures how well the realised sample $d$ represents the underlying population $P$. The error due to privacy takes into account the additional randomness in $M$ as a result of the privacy guarantee. Our experimental results highlight that it is important to carefully exploit the dependence between the two sources of randomness.  

As a primer, let us first consider the coverage analysis of the non-private algorithm described in Lemma~\ref{nonprivCI}. This coverage analysis relies on the fact that if $P$ is continuous then for all $m\in n$, \[\Pr(\rank{d}{\median(P)}=m)=\Pr(\Bin(n,1/2)=m).\]
There are two ways that the interval $[\NPL{\alpha}(d),\NPU{\alpha}(d)]$ can fail to capture $\median(P)$; $\median(P)<\NPL{\alpha}(d)$ or $\median(P)>\NPU{\alpha}(d)$. Let us focus on the probability of the first type of failure, $\median(P)<\NPL{\alpha}(d)$.
For every $P\in\gooddist$, 
\[\Pr(\median(P)< \NPL{\alpha}(d)) = \Pr(\median(P)< d_{(\PNPL{\alpha})})=\CDFbin(\PNPL{\alpha}-1)\le\alpha/2,\]
where $\CDFbin$ is the CDF of the binomial random variable $\Bin(n, 1/2)$. The probability of failure at the upper end of the confidence interval is analogous.   

\begin{figure}
    \centering
    \includegraphics[scale=0.5]{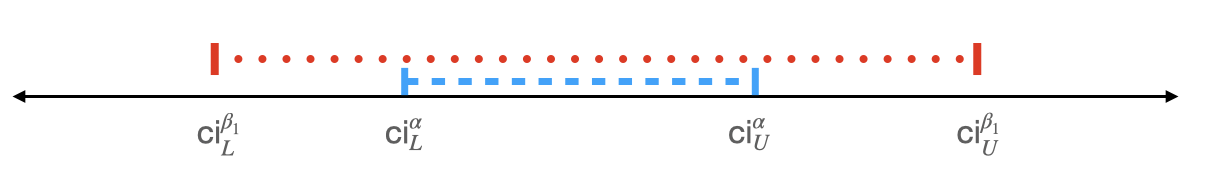}
    \caption{Graphical representation of naive coverage analysis}
    \label{fig:naive}
\end{figure}
Now, let us turn to the coverage analysis of a $\rho$-CDP algorithm $M:\mathcal{X}^n\times \mathcal{H}\to\intervals$. Let $M(d)=[M(d)_L, M(d)_U]$. A naive way to analyse the coverage error of $M$ is to attempt to find $\beta_1$ and $\beta_2$ such that assuming $\beta_1$ denotes the failure probability for the non-private confidence interval, the $M(d)$ contains the non-private interval $[\NPL{\beta_1}, \NPU{\beta_1}]$ with probability $1-\beta_2$. Then $M$ has coverage at least $1-(\beta_1+\beta_2)$. Even if $\beta_1$ and $\beta_2$ are chosen carefully, this analysis can be overly conservative. In particular, it assumes that the only way that $M(d)$ can succeed in containing $\median(P)$ is if both $\median(P)\in[\NPL{\beta_1},\NPU{\beta_1}]$ and $[\NPL{\beta_1},\NPU{\beta_1}]\subset M(d)$. In Figure~\ref{fig:naive}, this corresponds to ensuring that $M(d)$ contains the red dotted interval with high probability. It's clear from this figure that neither of these events are necessary.

A more careful analysis of the relationship between the sampling error and the error due to privacy results in a tighter coverage analysis. As in the non-private setting, there are two ways that $M(d)$ can fail to contain $\median(P)$, and we will focus on analysing the probability that $\median(P)<M(d)_L$
\begin{align}
\Pr(\median(P)< M(d)_L) &= \sum_{m=0}^{n} \Pr(\texttt{rank}_d(\median(P))=m) \cdot \Pr(\median(P)< M(d)_L\;|\; \texttt{rank}_d(\median(P))=m)\nonumber\\
&= \sum_{m=0}^{n} \Pr(\Bin(n,1/2)=m) \cdot \Pr(\median(P)< M(d)_L\;|\; \texttt{rank}_d(\median(P))=m)\label{betteranalysis}
\end{align}
Now, we have reduced the problem to analysing the failure probability conditioned on the empirical rank of $\median(P)$ in the data set $d$. This is a helpful reduction since, as we will see in the following section, most of our algorithms will come with accuracy guarantees on the rank of $M(d)_L$. Accuracy guarantees of this form can then be exploited, via Equation~\eqref{betteranalysis}, to obtain a coverage analysis of $M$.

Our experiments show a stark difference between the performance of algorithms designed using the naive analysis, and those using the tighter, more careful analysis. In Figure~\ref{fig:naive-v-new} we directly compare the confidence intervals that arise from the different analyses. This highlights the importance of understanding the relationship between the two sources of error.

\section{Algorithms}\label{algorithms}

In this section we will introduce the four algorithms for releasing CDP confidence intervals for the median that will be the focus of this paper; $\EM, \CDF, \noisyBS$, and $\BSCDF$. The first algorithm, which we call $\EM$, is based on the exponential mechanism. This mechanism is efficient, satisfies the stronger privacy guarantee of pure differential privacy and outputs the tightest, or close to the tightest confidence intervals in a majority of parameter regimes we studied. 
The remaining three algorithms partially address a common frustration with differentially private data analysis; that exploratory data analysis to visualise the data set and verify findings typically requires additional privacy budget. For many tasks, this means allocating privacy budget away from the primary task resulting in a noisier algorithm. A key feature of the three algorithms $\CDF$, $\noisyBS$ and $\BSCDF$ is that they release additional information about the data set without consuming additional budget. In particular, $\CDF$ releases a full CDP estimate to the empirical CDF. It is then notable, and perhaps surprising, that in many settings these algorithms perform almost as well as $\EM$, which releases no side information.

In this section we will give a high level description of each algorithm. Further information for all algorithms, including pseudo-code and proofs of the privacy and validity guarantees, can be found in the online supplement accompanying this paper. Real code is available in our GitHub repository.\footnote{\url{https://github.com/anonymous-conf-medians/dp-medians}} 
We note that we also experimented with several other algorithms that are not discussed in this section. Brief descriptions of these additional algorithms can be found in the online supplement~\ref{online supplement: otheralgs}, but we do not focus on them here since they are outperformed by other algorithms in every parameter regime we studied.

\subsection{Confidence intervals based on exponential mechanism, $\EM$}\label{EM}

\begin{figure}[t]
    \centering
    \includegraphics[width=\textwidth]{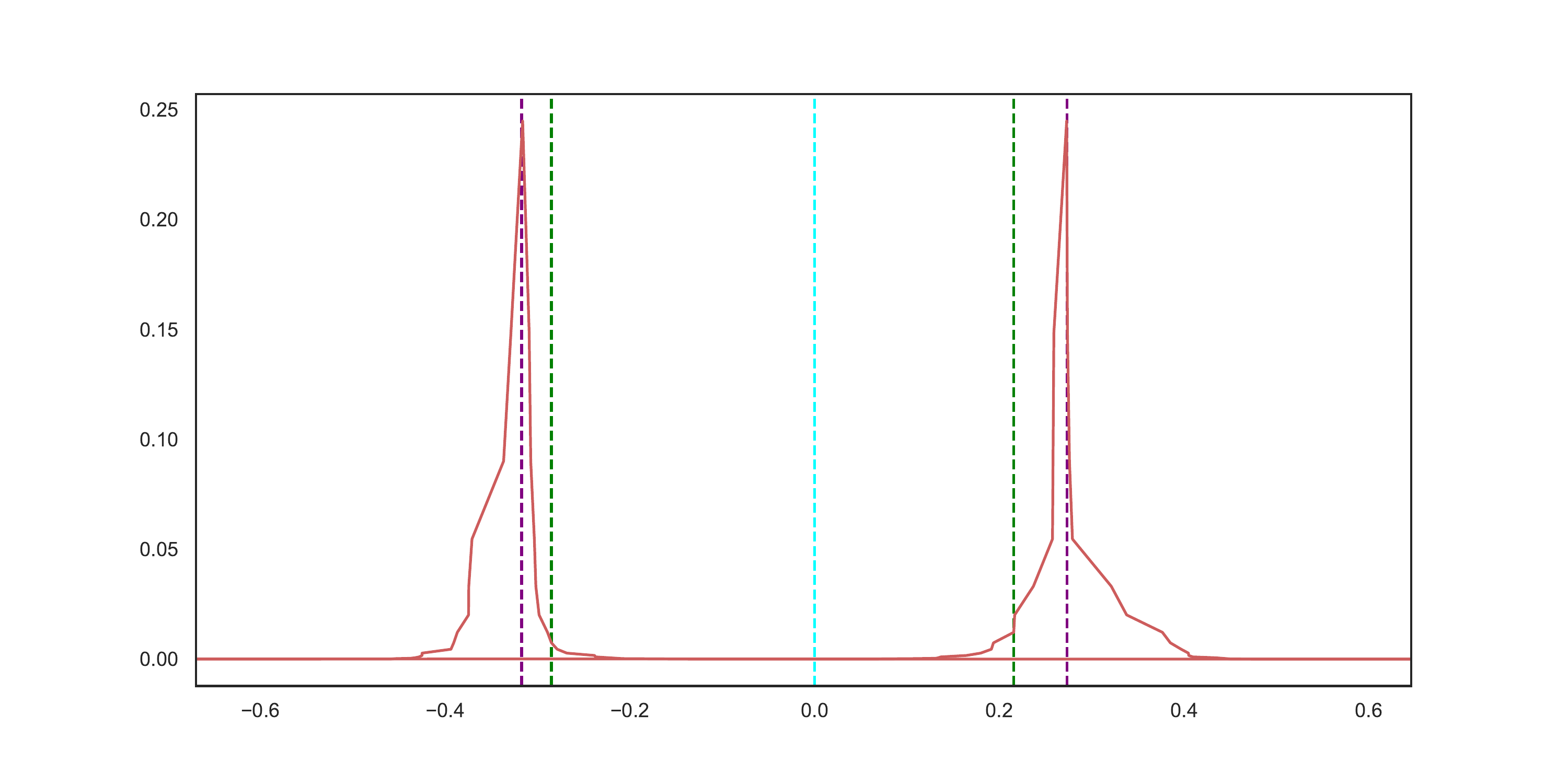}
    \caption{Graphical representation of the distribution of a $1.0$-CDP $\EM$ confidence interval on a single dataset $d$, whose $500$ datapoints are sampled i.i.d. from $\calN(0, 4)$. The range $\range = [-5,5]$, granularity $\theta = 0.05$, and $\alpha=0.05$. 
    The cyan line indicates the population median,
    the green interval represents $\NPL{\alpha}$ and $\NPU{\alpha}$, and the purple interval represents  $d_{(k_L)}$ and $d_{(k_U)}$, where $k_L, k_U$ are chosen according to Equation~\eqref{eq:em-better-analysis}. The red curves illustrate the theoretical distributions of the outputs of $\EM(d)$.
    }
    \label{algoexplainEM}
\end{figure}
     
Our first private mechanism is an instantiation of the exponential mechanism~\citep{McSherryT07}, a differentially private algorithm designed for general optimization problems. The exponential mechanism has been used in prior work to give DP point estimates for the median \citep{DworkL09,Thakurta:2013,Johnson:2013,Alabi:2020,Asi:2020}. Our extension to providing confidence intervals for the median, while using similar ideas to prior work, requires a careful coverage analysis that is new to this work. \removeforgood{This algorithm provides the tightest confidence interval of the CDP algorithm we studied in most of the parameter regimes we studied.
 This algorithm has the advantage that it is the only algorithm we consider that satisfies the stronger privacy guarantee of pure differential privacy.} 

The exponential mechanism is defined with respect to a utility function $u$, which maps (data set, output) pairs to real values. For a data set $d$, the mechanism aims to output a value $r$ that maximizes $u(d,r)$ by sampling a value $r$ with probability proportional to $e^{\eps u(d,r)/\Delta u}$ where $\Delta_u=\max_r\max_{d,d' {\rm neighbours}}|u(d,r)-u(d',r)|$. Recall that the rank of a value $r\in\reals$
in a data set $d$, denoted $\rank{d}{r}$, is the number of data points in $d$ that are less than or equal to $r$.
For any $k\in[n]$, one way to instantiate the exponential mechanism to compute the $k$-th order statistic is by using 
the following utility function. Let 
\[
u_k(d, r) = - |\rank{d}{r}-k|.
\]
In practice we will use a slight variant of the exponential mechanism described in online supplement~\ref{sec:exp-mech-details},
which uses a granularity parameter $\theta$. 
This version of the exponential mechanism has the guarantee that with high probability it will output a value within $\theta$ of some $r$, such that $\rank{d}{r}$ is close to $k$. 
Let us denote the exponential mechanism with privacy parameter $\eps$ for the $k$-th order statistic by $A_k^{\eps}$. 

While it is tempting to use the exponential mechanism to release DP estimates of $\NPL{\alpha}(d)=d_{(\PNPL{\alpha})}$ and $\NPU{\alpha}(d)=d_{(\PNPU{\alpha})}$, this will not create a DP $\alpha$-confidence interval. This is because the private algorithm may under or over estimate these quantities resulting in an invalid confidence interval. Instead we need to choose $k_L$ and $k_U$ carefully so that $M_{k_L,k_U}(d)=[A_{k_L}^{\eps/2}(d) - \theta, A_{k_U}^{\eps/2}(d) + \theta]$ is a valid confidence interval. In order to analyse the coverage, we return to Equation~\eqref{betteranalysis}, \begin{align}
\Pr(\median(P)< A_{k_L}^{\eps/2}(d) - \theta) 
&= \sum_{m=0}^{n} \Pr(\Bin(n,1/2)=m) \cdot \Pr(\median(P)< A_{k_L}^{\eps/2}(d) - \theta\;|\; \texttt{rank}_d(\median(P))=m) \nonumber \\
&\le \CDFbin(k_L-1)+\sum_{m=k_L}^{n} \Pr(\Bin(n,1/2)=m) \cdot \Pr(|\texttt{rank}_d(A_{k_L}^{\eps/2}(d) - \theta)-k_L|\ge m-k_L) \label{eq:em-better-analysis}
\end{align}
\removeforsubmission{In this computation, we use the inequality $\Pr(\median(P)\le A_{k_L}^{\eps/2}(d) - \theta\;|\; \texttt{rank}_d(\median(P))=m)\le 1$ for all $m < k_L$. When $m < k_L$, $\median(P) < d_{(k_L)}$ so $\median(P) < A_{k_L}^{\eps/2}(d) - \theta$ occurs with high probability if $A_{k_L}^{\eps/2}$ outputs something close to $d_{(k_L)}$ (which is the goal of this algorithm) with high probability. Some additional performance could be obtained by tighter analysis of these terms, but we expect the improvement to be small. }
Now the first term is only due to sampling error and easily bounded. The second term depends on both the sampling error and the error rate of the private algorithm $A_{k_L}^{\eps/2}$. In Appendix~\ref{online supplement:EM}, we show how to obtain a distribution independent upper bound on the error rate $\Pr(|\rank{d}{A_{k_L}^{\eps/2}(d)-\theta}-k_L|\ge m)$. This allows us to give an upper bound for $\Pr(\median(P)\le A_{k_L}^{\eps/2}(d)-\theta)$ 
which holds for any distribution $P\in\gooddist$. Given this coverage analysis, we can search for $k_L$ and $k_U$ that maximise $\Pr(\median(P)~\notin~M_{k_L,k_U}(d))$ subject to the constraint that $\Pr(\median(P)\notin M_{k_L,k_U}(d))\le\alpha$.

We will refer to this algorithm as $\EM$. Pseudo-code and details of the analysis can be found in online supplement~\ref{online supplement:EM}. A graph representing the distribution of $\EM$ on a single data set can be found in Figure~\ref{algoexplainEM}.

\subsection{Confidence intervals based on CDF estimator, $\CDF$}

\begin{figure}[t]
    \centering
    \includegraphics[width=0.8\textwidth]{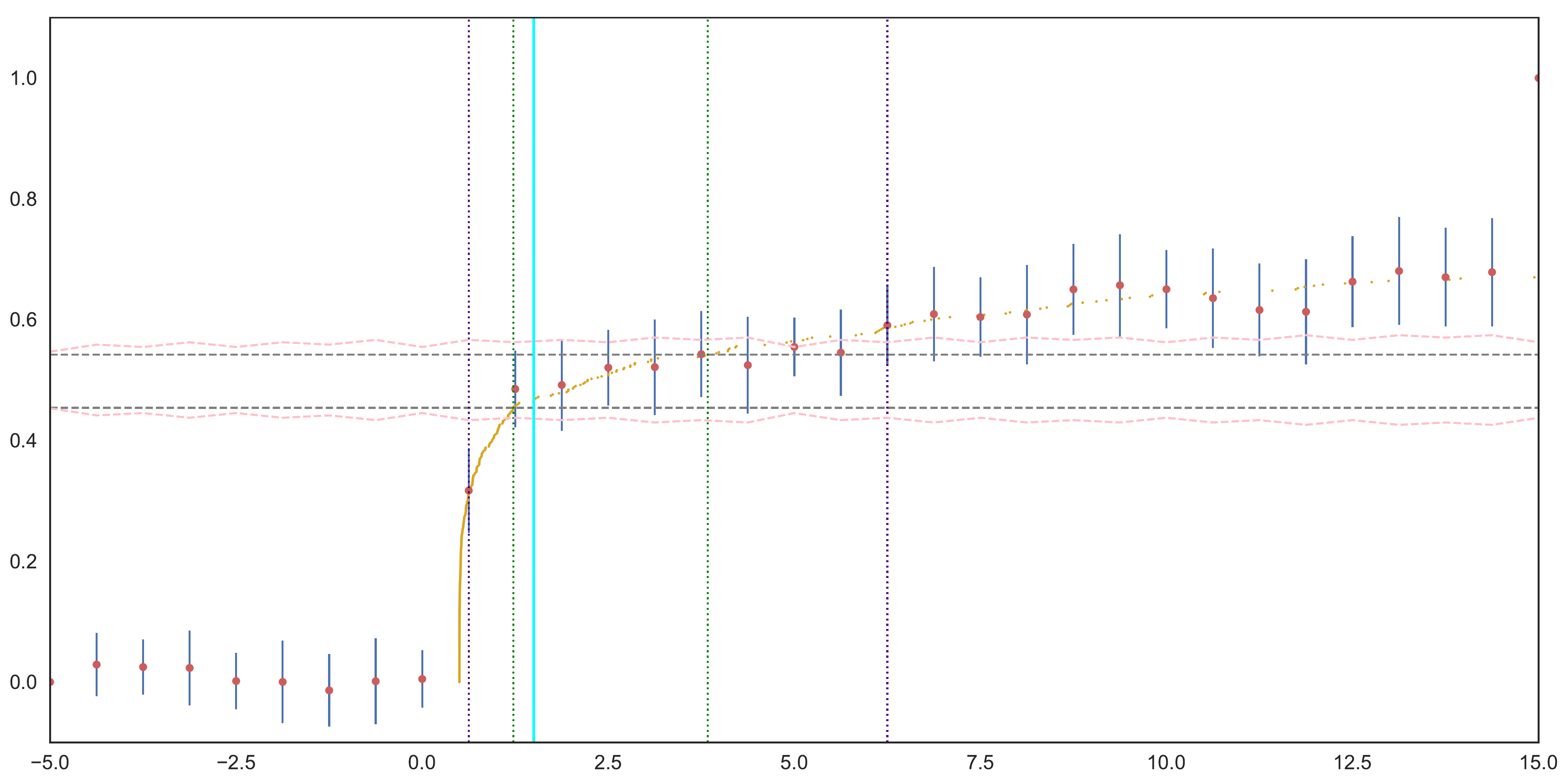}
    \caption{Graphical representation of a single run of $0.1$-CDP $\CDF$ (with hyperparameters $\range=[-5,15], \theta=0.5$) on a single dataset.
    The $500$ datapoints are drawn i.i.d. from Lognormal$(\ln(1.5), 5.0)$. The cyan line is the population median, the gray dashed horizontal lines represent $\PNPL{0.05}/n, \PNPU{0.05}/n$,
    and the green dashed vertical lines represent $\NPL{0.05}(d)$ and $\NPU{0.05}$.
    For each $x\in\discreterange$, the red dot is at $\DPCDF(x)$, the yellow dot is at $\hat{\CDFfunction}(x)$, and the pink dot is at the upper and lower thresholds $a_x^L$ and $a_x^U$. 
    For each $x$ the blue vertical line represents a point wise $99\%$ confidence interval on $\hat{\CDFfunction}(x)$ based on the measurement $\DPCDF(x)$. 
    The purple dashed vertical lines represent the DP interval 
    $\CDF(d)$.}
    \label{algoexplainCDF}
\end{figure}

Our second CDP confidence interval $\CDF$ is obtained from post-processing the output of a CDP cumulative distribution function (CDF) estimator. Unlike $\EM$, which only released the confidence interval, $\CDF$ can additionally release a CDP estimate of the CDF without consuming additional privacy budget. 
There has been considerable work in the DP literature on DP cumulative distribution function (CDF) estimators, both for the parametric and non-parametric models \citep{Diakonikolas:2015, Brunel:2020}.  
We will focus on a particular CDF estimator based on the tree-based mechanism introduced in \citet{Li:2010,Dwork:2010}, and \citet{ Chan:2011}. This mechanism was further refined in \citet{Honaker:2015}, whose algorithm we base our mechanism on. 

Given a range for the data $[\lowerrange, \upperrange]$, and a discretization of this range $\discreterange = [\lowerrange, \lowerrange+\granularity, \lowerrange+2\granularity, \cdots, \upperrange]$, the algorithm in~\citet{Honaker:2015} outputs a DP estimate, $\DPCDF$, of the empirical CDF, $\hat{\CDFfunction}$, of a data set restricted to $\discreterange$. The relevant feature of this estimator for our analysis is that we understand the marginal distribution of $\hat{C}(x)$ for each $x\in\discreterange$.  Specifically, for all $x\in\discreterange$, there exists $\sigma_x>0$ such that the value we release is equal to the empirical CDF with normally distributed noise, that is:
\(\DPCDF(x)=\hat{\CDFfunction}(x)+\mathcal{N}(0,\sigma_x^2) \, . \)
The noise values are not independent across different values of $x$, nor are they perfectly correlated. The estimates $\DPCDF(x)$ don't even increase monotonically with $x$ (see Figure~\ref{algoexplainCDF}). Thus, we seek a procedure that uses only our knowledge of the marginal distribution. The main observation is that we can rewrite the distribution of $\DPCDF(x)$ in terms of the true distribution CDF value, $\CDFfunction(x)$: 
\[\DPCDF(x)=\tfrac{1}{n} \cdot \Bin(n,\CDFfunction(x))+\mathcal{N}(0,\sigma_x^2) \, . \]

We will use this, for each value $x$ in $\discreterange$, to test the hypothesis that $x$ is less than the median. Observe that $x< \median(P)$ if and only if $\CDFfunction(x)< \frac 1 2$. Thus, if $x< \median(P)$, then $\DPCDF(x)$ is stochastically dominated by the distribution $\tfrac{1}{n} \cdot\Bin(\bparen{n,\frac 1 2}+ \mathcal{N}(0,\sigma_x^2)$. We have 
\begin{align*}
Pr(\DPCDF(x)>a)
&\le \sum_{m=0}^{n} \Pr\bparen{\Bin(n,1/2)=m}\Pr\bparen{\tfrac{m}{n}+\mathcal{N}(0,\sigma_x^2)>a} \, ,
\end{align*}
which we can easily numerically evaluate.
Thus, for each $x$, there exists a value $a_x$ (that just depends on $\sigma_x$) such that if $x<\median(P)$ then $\Pr(\DPCDF(x)>a_x)\le\alpha/2$, where $1-\alpha$ is our desired coverage. Equivalently, if 
$\DPCDF(x)>a_x$ then we can be confident that $x \geq \median(P)$. The upper end of our confidence interval is then defined as the grid point just to the right of the largest value $x$ for which the test accepts, that is,
\begin{align*}
     \privNPU{\alpha}(d) &= \theta  + \max\set{x\in \discreterange\ :\ \text{test for $x<\median(P)$ accepts}  }  \\
     &= \theta + \max\set{x\in \discreterange\ : \ \DPCDF(x) \leq a_x} \\
     & = \min\set{x\in \discreterange\ : \ (\forall x' \geq x)\   \DPCDF(x') > a_{x'}}\, .
\end{align*}

To see why this leads to a valid confidence interval, let $x^*$ be the largest value in $\discreterange$ that is less than $\median(P)$. With probability at least $1-\frac{\alpha}{2}$, the test at $x^*$ accepts. In that case, $\privNPU{\alpha}(d)$ will be a grid point greater than $x^*$, and thus we will have $\privNPU{\alpha}(d) \geq \median(P)$. 

The process and reasoning for the left end of the interval are symmetric.
Pseudo-code and a proof that this process gives valid confidence intervals can be found in Appendix~\ref{online supplement:CDF}. We will refer to this algorithm as $\CDF$.
A graph representing a single run of this algorithm can be found in Figure~\ref{algoexplainCDF}. In particular, note that the horizontal pink dotted lines represent the values of $a_x$, which are closer to $\PNPL{\alpha}$ and $\PNPU{\alpha}$ when $\sigma_x$ is small.~\footnote{Of potential independent interest is our efficient implementation for computing the pointwise error rates $\sigma_x$.
While Honaker discusses computing this error for a single $x$, we developed and implemented an efficient algorithm for computing this error for all $x$. This has potential for impact beyond confidence intervals for the median.}

\subsection{Confidence intervals based on noisy binary search, \noisyBS}

One draw-back of the CDF-based estimator $\CDF$ is that it spends its privacy budget roughly evenly across the entire range $\range$. This can result in substantially reduced performance if the data is concentrated in a small subset of the range. We can see this effect in Figure~\ref{algoexplainCDF} where the values of the $\DPCDF$ below $x=0$ are very unlikely to impact the confidence interval. 
$\noisyBS$ attempts to locate the data within the region $\range$ using a small amount of the privacy budget then delves more deeply into the region actually containing the data. This defines a CDP confidence interval in its own right but we will see the real power of it in the design of the next algorithm. 

\justforsubmission{$\noisyBS$ uses noisy queries $\hat{\CDFfunction}(x)+\mathcal{N}(0,\sigma^2)$ to the empirical CDF to search for the relevant quantiles using binary search. Unlike $\CDF$ that obtains a CDP estimate to $\hat{\CDFfunction}$ over the entire $\discreterange$, $\noisyBS$ minimises the number of values we need to evaluate $\hat{\CDFfunction}$ on. Full details and pseudo-code are given in online supplement~\ref{online supplement:BS}.}
\removeforsubmission{As in the design of $\EM$, this algorithm starts with two target quantiles $k_L$ and $k_U$ which are defined to ensure that the resulting confidence interval has the desired coverage. We will discuss this choice in online supplement~\ref{online supplement:BS}. Let $\hat{\CDFfunction}$ be the empirical CDF. In the non-private setting, we can find an approximation to $d_{(k_L)}$ by iteratively asking queries of the form ``is $\hat{\CDFfunction}(x)\le k_L$?" and adjusting our search accordingly. This search method is called binary search and minimises the number of values we need to evaluate $\hat{\CDFfunction}$ on. 
We can perform a CDP version of binary search using noisy queries of the form ``is $\hat{\CDFfunction}(x)+\mathcal{N}(0,\sigma^2)\le k_L$?" In setting the variance $\sigma$ we need to balance the desire for the noisy query to answer correctly (so the binary search moves in the right direction) and adding enough noise to guarantee privacy.
Notice that if $|\hat{\CDFfunction}(x)-k_L|$ is large, for example if $x$ is inside $\range$ but far from the concentration of the data, then  $\sigma$ can be chosen to be quite large while still ensuring the response is correct with high probability. 
A single query of this form is $\frac{1}{n\sigma}$-CDP \citep{BunS16} so if $|\hat{\CDFfunction}(x)-k_L|$ is large, then we only need to consume a small amount of privacy budget in order to ensure the binary search moves in the right direction.
The privacy guarantee for the entire algorithm is sum of the privacy guarantees for each iterate. Since a priori we don't know $|\hat{\CDFfunction}(x)-k_L|$, at each iteration we'll start with a quite noisy query, then keep decreasing the noise until we are confident which direction to move in. We repeat until the privacy budget is consumed. We search for both the upper and lower end points of the confidence interval using noisy binary search then a final post-processing step on the noisy measurements ensures we release a valid confidence interval. Full details and pseudo-code are given in online supplement~\ref{online supplement:BS}. 
}
At the points $x_i$ at which $\noisyBS$ obtains noisy measurements, 
the noisy measurement $\hat{\CDFfunction}(x_i)+\mathcal{N}(0,\sigma^2)$ can be released in addition to the confidence interval without consuming additional privacy budget. While this is not as informative as the full CDP CDF released using $\CDF$, it does provide useful additional information about the distribution. We note that unlike $\EM$ and $\CDF$, our analysis of $\noisyBS$ separates the two sources of randomness coming from sampling and noise added for privacy. An interesting open problem is whether this analysis can be improved by considering the relationship between the two sources of randomness.

\begin{figure}[htbp]
     \begin{subfigure}[b]{0.49\textwidth}
         \centering
         \includegraphics[width=\textwidth]{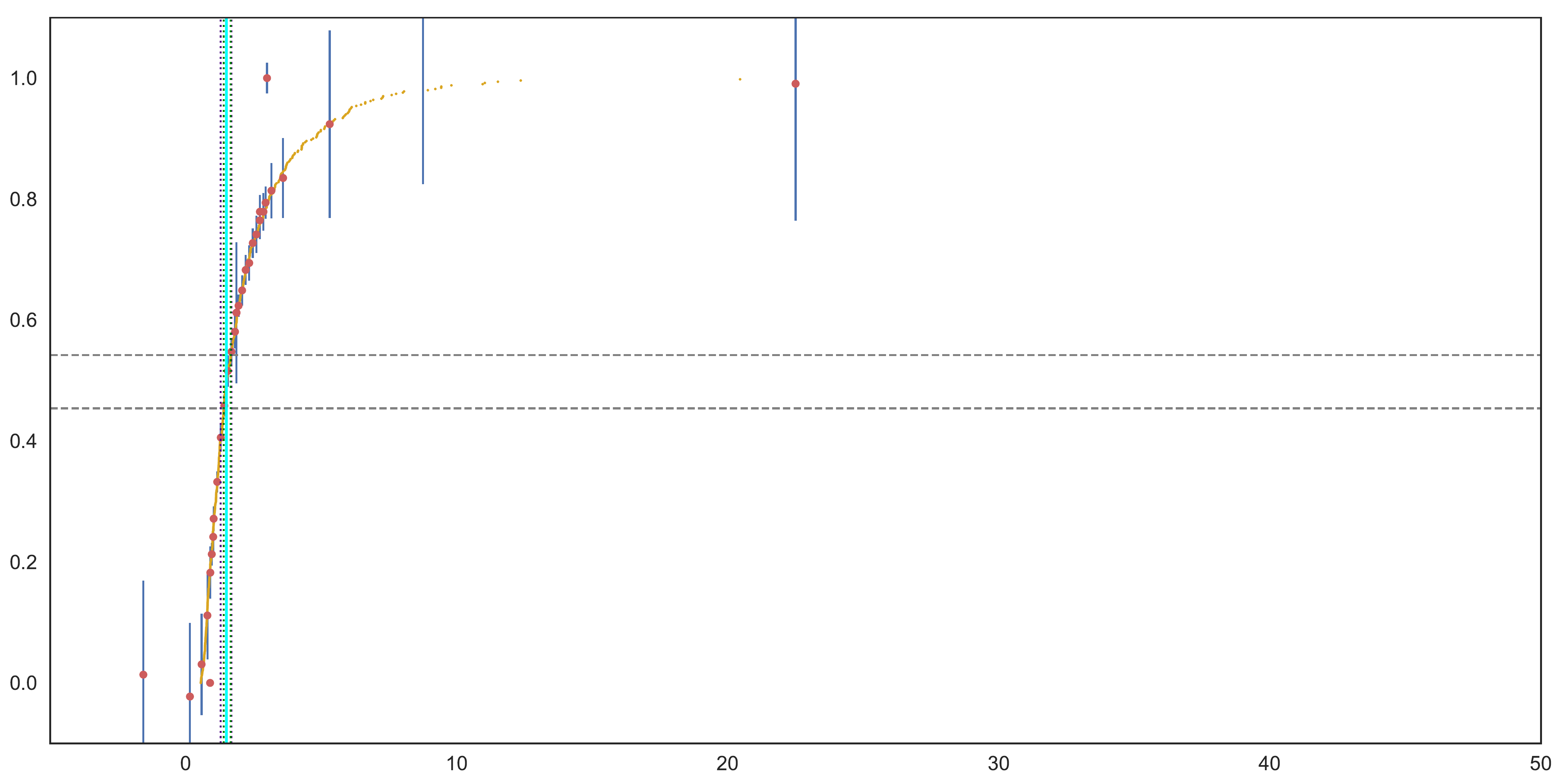}
         \caption{}
     \end{subfigure}
          \begin{subfigure}[b]{0.49\textwidth}
         \centering
         \includegraphics[width=\textwidth]{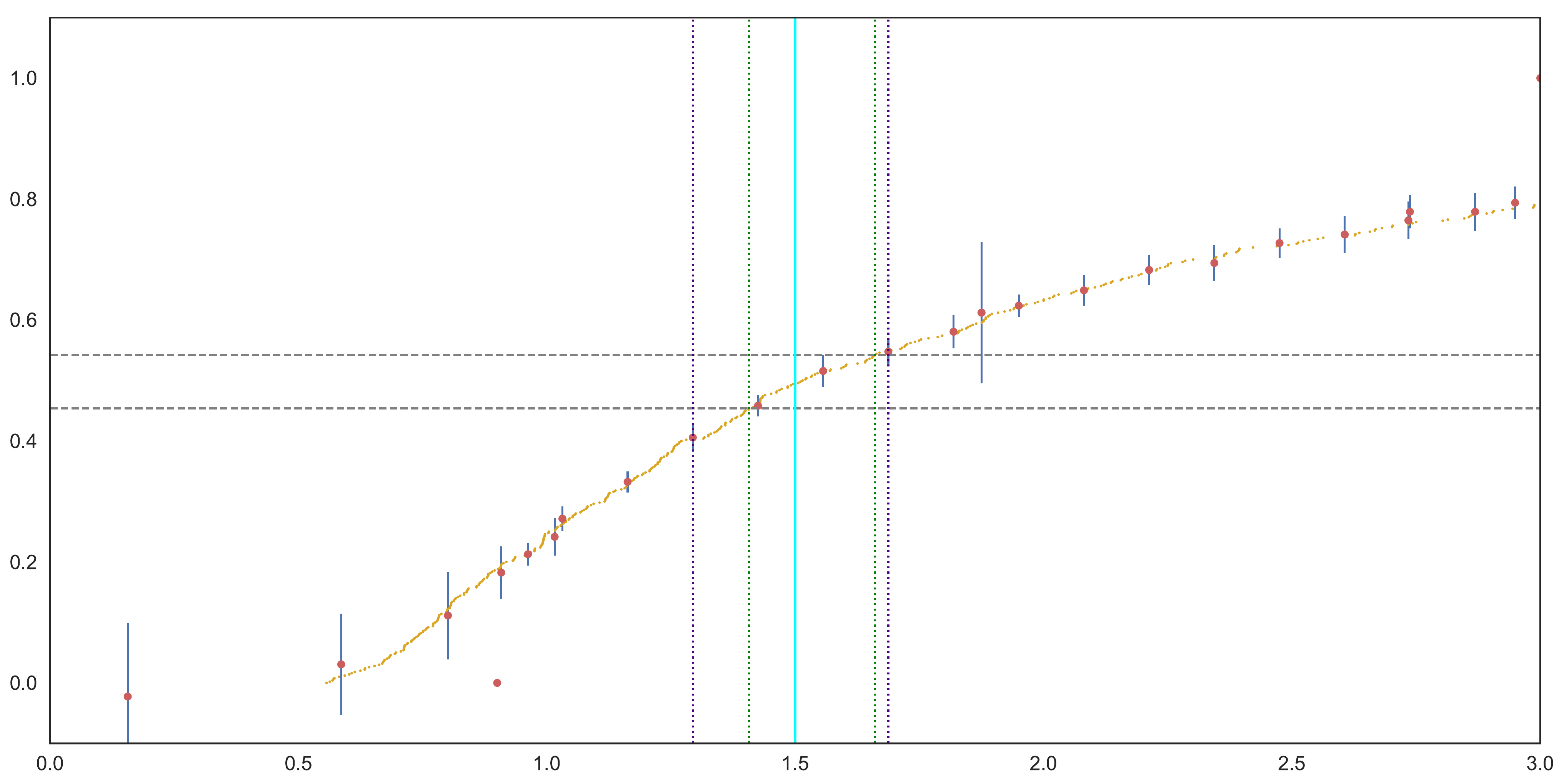}
         \caption{}
     \end{subfigure}
    \caption{Graphical representation of a single run of $0.8$-CDP 95\%-confidence interval using $\BSCDF$ (with hyperparameters $\range = [-5, 50]$, $\theta = 0.1$) on a single dataset. The $500$ datapoints are sampled from lognormal$(\ln(1.5), 1.0)$. The yellow dots represent the empirical CDF, the gray horizontal dashed lines correspond to $\PNPL{0.05}/n, \PNPU{0.05}/n$, and the green vertical dashed lines correspond to $\NPL{0.05}, \NPU{0.05}(d)$. 
    The cyan line is the population median. The red dots correspond to DP estimates, with $99 \%$ error bars in blue. Figure (a) shows how $\noisyBS$ is used to narrow down the range from [-5, 50] to roughly [0.8,3], and Figure (b) zooms in to display the estimates made by $\CDF$ within this smaller range. The vertical purple dashed lines represent the DP interval that $\BSCDF$ outputs.
    }
    \label{algoexplainBSCDF}
\end{figure}

\subsection{Range-robust estimator based on CDF estimator, $\BSCDF$}

The mechanism $\noisyBS$ was designed to improve the performance of $\CDF$ when the data is concentrated in a small region within $\range$. That is, when $\noisyBS$ can greatly reduce the search range for the median using a small amount of privacy budget. However, when the data is well spread out within the range, we do not expect $\noisyBS$ to perform as well as $\CDF$ which is highly optimised to provide an accurate approximation to the CDF, $\CDFfunction$. Our final algorithm, $\BSCDF$ first uses $\noisyBS$ with a portion of its privacy budget to narrow down the search space $\range$. It then clips the data to this range, and runs $\CDF$ with the remaining privacy budget within this smaller range to obtain a confidence interval. Full details and validity proofs are given in online supplement~\ref{online supplement:BSCDF}.

\removeforsubmission{The privacy budget and coverage $\alpha$ both need to be partitioned between the two stages of the algorithm. We expect the optimal split to be distribution dependent. In particular, it likely depends on how large a region the data occupies within the range $\range$. We found experimentally, for the parameter regimes we studied, using $\rho/4$ for the first step, and $3\rho/4$ for the second step seemed to be a good choice. Similarly, we ensure that the region found in the first setp contains the median with probability $1-\alpha/4$, and the second step finds a $1-3\alpha/4$-confidence interval within that region.}
A representation of a single run of $\BSCDF$ is shown in Figure~\ref{algoexplainBSCDF}. Notice that relative to $\CDF$ in Figure~\ref{algoexplainCDF}, it uses only a few measurements to narrow into the range of interest, about $[0.8,3]$ then takes more measurements within this range.

\subsection{A note on hyperparameters} 

All our algorithms require some hyperparameter tuning and domain knowledge. Throughout this manuscript we will attempt to give guidance on how one may set these parameters and how sensitive the performance is to these choices. 

All our algorithms require as input a range space for the median. That is, an interval $\range\subset\reals$ that is promised to contain $\median(P)$. This should be chosen as small as is reasonable, but one can typically be quite conservative when defining $\range$. We expect the dependence of all algorithms to be approximately $\log|\range|$, we see this explicitly for $\EM$ in Lemma~\ref{lem:exp-mech-output-goodness} in the online supplement. We will empirically explore the dependence on $|\range|$ in Figure~\ref{fig:rel-width-boxplots-range} below. 
Note that requiring a range on the median is different to requiring a range on the data points themselves. It is a preferable condition since the data points may occupy a considerably larger range than $\range$, and, in practice, guaranteeing a bound on outliers in the data may be difficult. The fact that a bound on the median suffices comes from the fact that if $\range$ contains the median then projecting the data into $\range$ leaves the median unchanged.

\removeforsubmission{All the algorithms we consider require an additional hyperparameter we refer to as the ``granularity" parameter, $\granularity$.
In order to gain some intuition, one can imagine each algorithm as discretizing the range $\range$ so that any two potential output points are $\granularity$ apart. This intuition is not exact, please refer to the online supplement for more details. 
This granularity parameter affects each algorithm differently. $\EM$ is not very sensitive to this parameter in general, and it can typically be taken to be very small. In fact, setting this parameter to 0 is a reasonable choice in a wide variety of parameter settings. $\CDF$ and $\BSCDF$ can be sensitive to this parameter. In all algorithms, the width of the confidence interval will be at least $\granularity$.}

\section{Simulation Studies}
\label{sec:simulation}
In this section we present extensive simulation studies to evaluate the different algorithms under various parameter settings.\footnote{Code for producing these simulations can be found at \url{https://github.com/anonymous-conf-medians/dp-medians}.} 
We focus on log-normal data, as this is the natural use-case for computing a non-private median and corresponding confidence interval.
We expect many of our findings to extend to other types of skewed data. 
We evaluate the performance of the four algorithms described in Section~\ref{algorithms}, as well as the non-private confidence interval described in Lemma~\ref{nonprivCI}, in terms of width of confidence interval, coverage, and bias.

\subsection{Data description} 

In order to visualize the distribution of the noisy confidence intervals, we run each private algorithm 5 times on 100 independently drawn datasets.
Let $\bx_1, \ldots, \bx_{100}$, each contain $n=1000$ i.i.d. draws from the underlying log-normal distribution. The underlying normal random variable has mean $\mu = \ln(1.5)$ 
and standard deviation of either $\sigma = 1$ or $\sigma = 5$.
We run each DP algorithm on each $\bx_i$ for $5$ trials. In our experiments, we show the relative performance of the algorithms as we vary $n$, $\rho$, $\sigma$, and $|\range|$.

\subsection{Utility measures}

We consider two main utility measures in our experiments. The first measure is the relative width of the CDP confidence interval $M(d)$ compared to the width of the non-private confidence interval, $[\NPL{\alpha}, \NPU{\alpha}]$. For a data set $d \in \mathcal{D}^n$, $\alpha \in [0,1]$, and interval $I=[I_L, I_U]$, the relative width is defined as
\begin{align*}
    \relwidth^{\alpha}(d, I) = \frac{I_U - I_L}{\NPU{\alpha}(d) - \NPL{\alpha}(d)}
\end{align*}
We are concerned with the distribution of $\relwidth^{\alpha}(d, I)$ when $I=M(d)$.
We expect the private confidence intervals to be wider than the non-private confidence intervals, so if $I=M(d)$ then we expect $\relwidth^{\alpha}(d)\ge 1$ with high probability.
If $\relwidth^{\alpha}(d)\le 2$, then, intuitively, the additional uncertainty due to privacy is less than the uncertainty due to sampling. We are interested in the distribution of the relative width over multiple trials, so for each algorithm we show box plots of this metric over 500 trials (100 datasets times 5 trials of the DP mechanism on each). 

The second utility measure is \emph{empirical} coverage of the DP confidence interval. For intervals $I_1, \cdots, I_T$ and distribution $P$, let
\[\cov_{T}(P, I_1, \cdots, I_T) = \frac{1}{T}\sum_{t=1}^T \indicator_{\median(P)\in I_t}.\] Given $n\in\mathbb{N}$ and a $1-\alpha$-confidence interval $M$, for all $t\in[T]$, let $d_t\sim P^n$ and $I_t=M(d_t)$ then \[\cov_{T, n, P}(M) = \cov_{T}(P, I_1, \cdots, I_T)\] gives an estimate of the empirical coverage of $M$ on the distribution $P$. We estimate the coverage over 5,000 trials (1,000 drawn samples of size $n=1,000$ times 5 trials of the DP mechanisms on each). A key component of the confidence intervals presented in the previous section is that the coverage should be at least $1-\alpha$. However, the empirical coverage may, and in many settings will, exceed $1-\alpha$.

\subsection{Results and Discussion}

\begin{figure}[hptb]
     \begin{subfigure}[b]{0.49\textwidth}
         \centering
         \includegraphics[width=\textwidth]{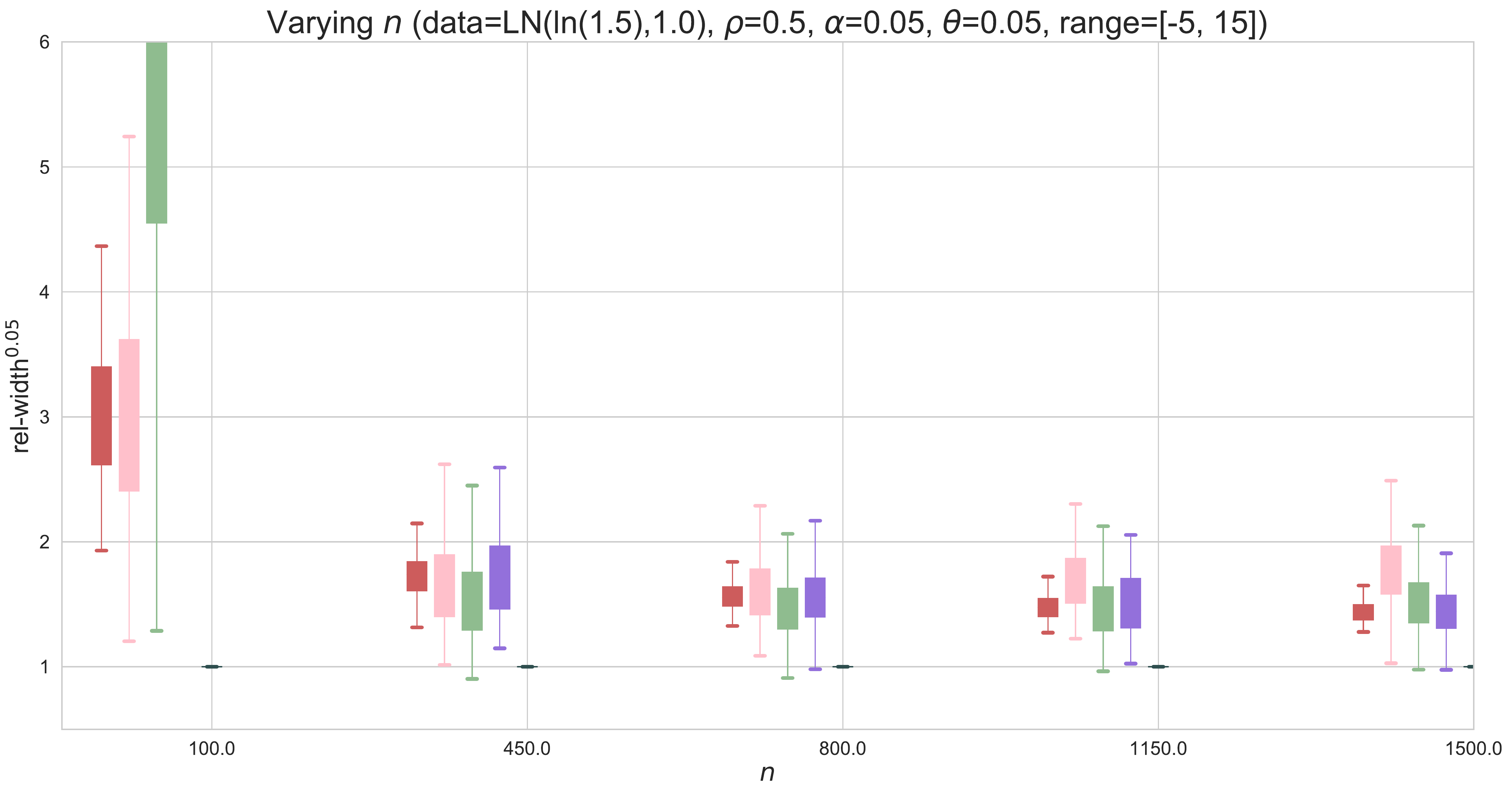}
         \caption{Varying $n$}
         \label{fig:rel-width-boxplots-n}
     \end{subfigure}
     \hfill
     \begin{subfigure}[b]{0.49\textwidth}
         \centering \includegraphics[width=\textwidth]{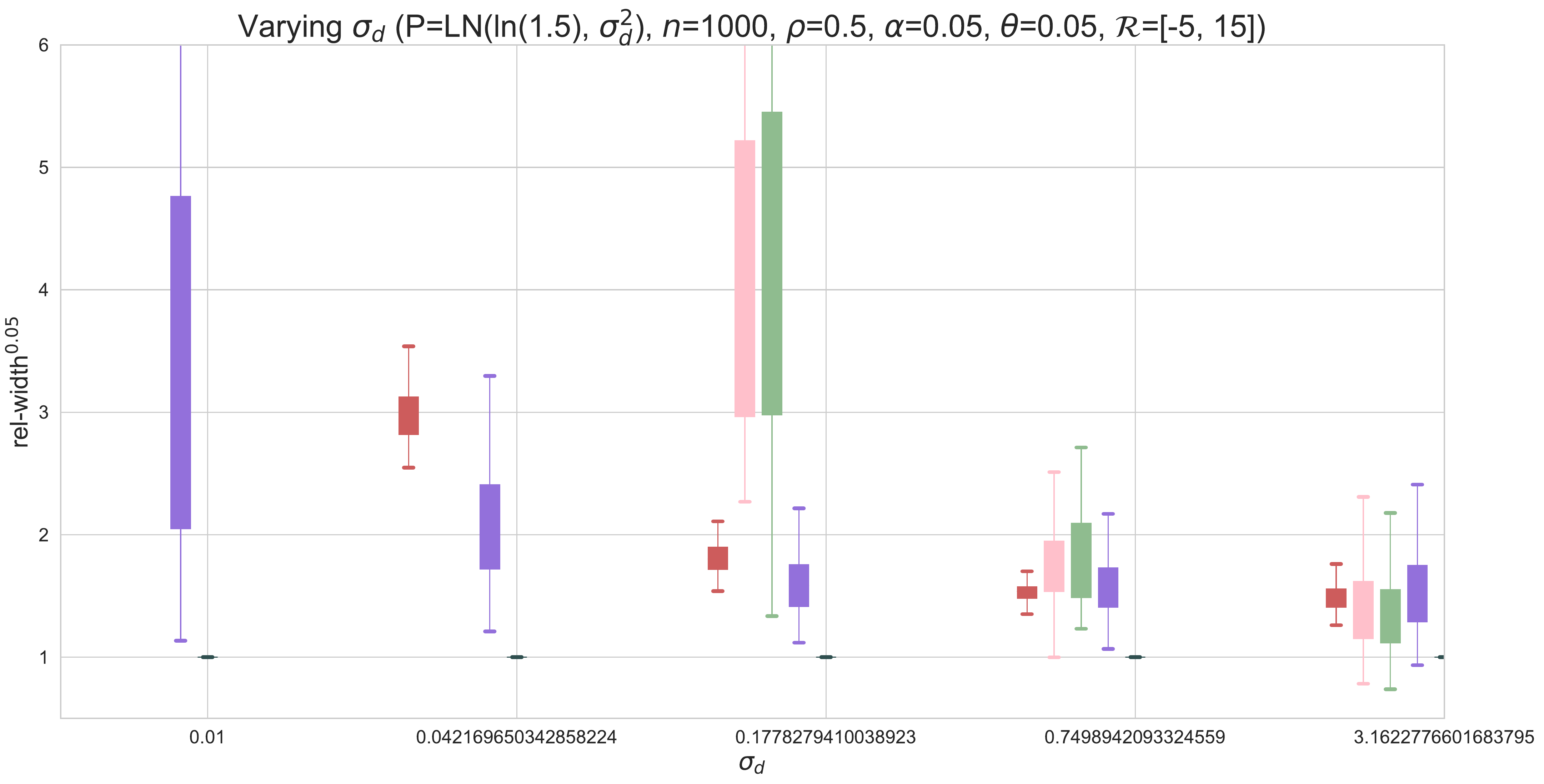}
         \caption{Varying $\sigma_d$}
         \label{fig:rel-width-boxplots-sigma}
     \end{subfigure}
     \begin{subfigure}[b]{0.5\textwidth}
         \centering
         \includegraphics[width=\textwidth]{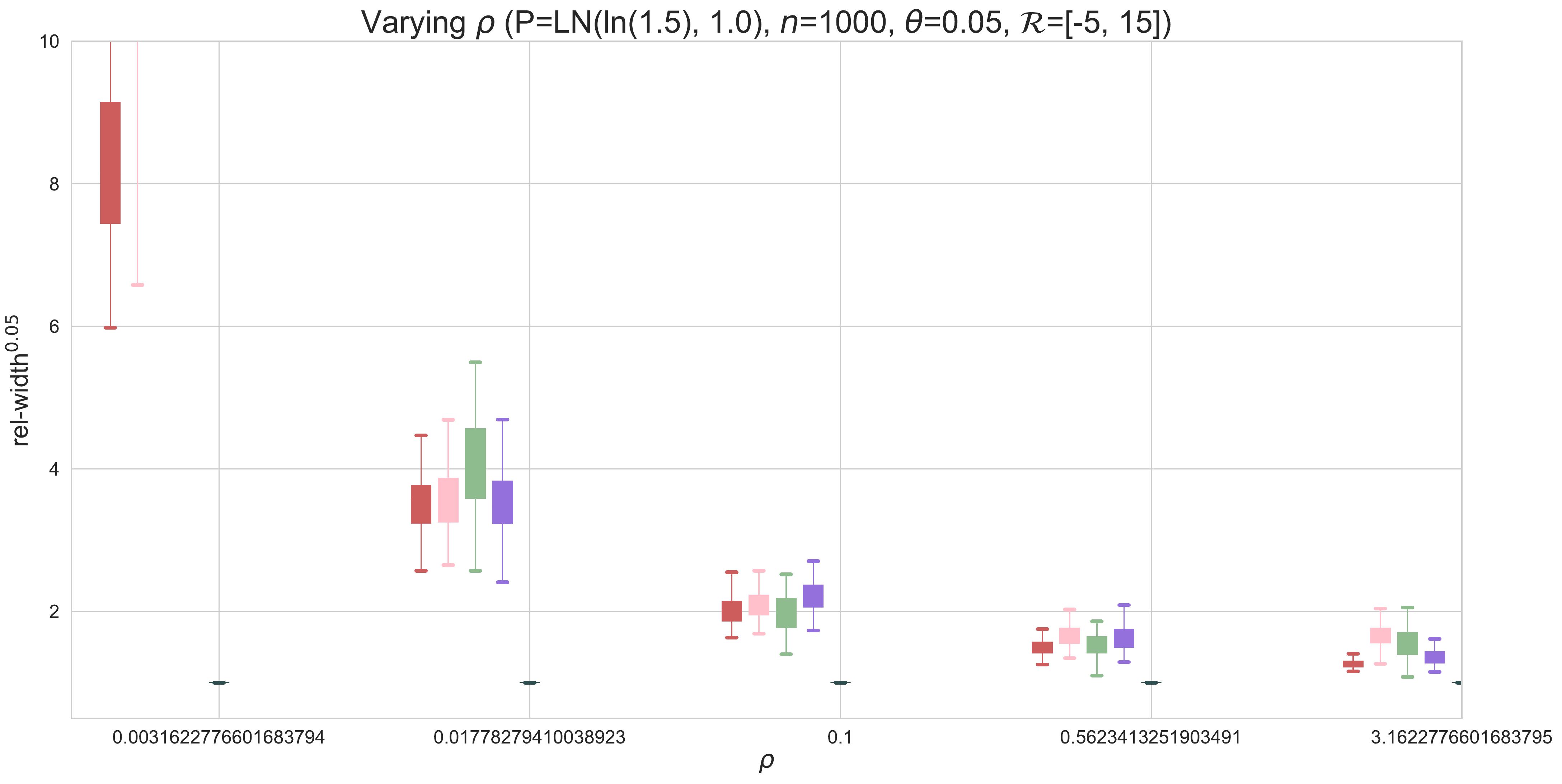}
         \caption{Varying $\rho$ 
         }
         \label{fig:rel-width-boxplots-rho}
     \end{subfigure}
     \begin{subfigure}[b]{0.5\textwidth}
         \centering
         \includegraphics[width=\textwidth]{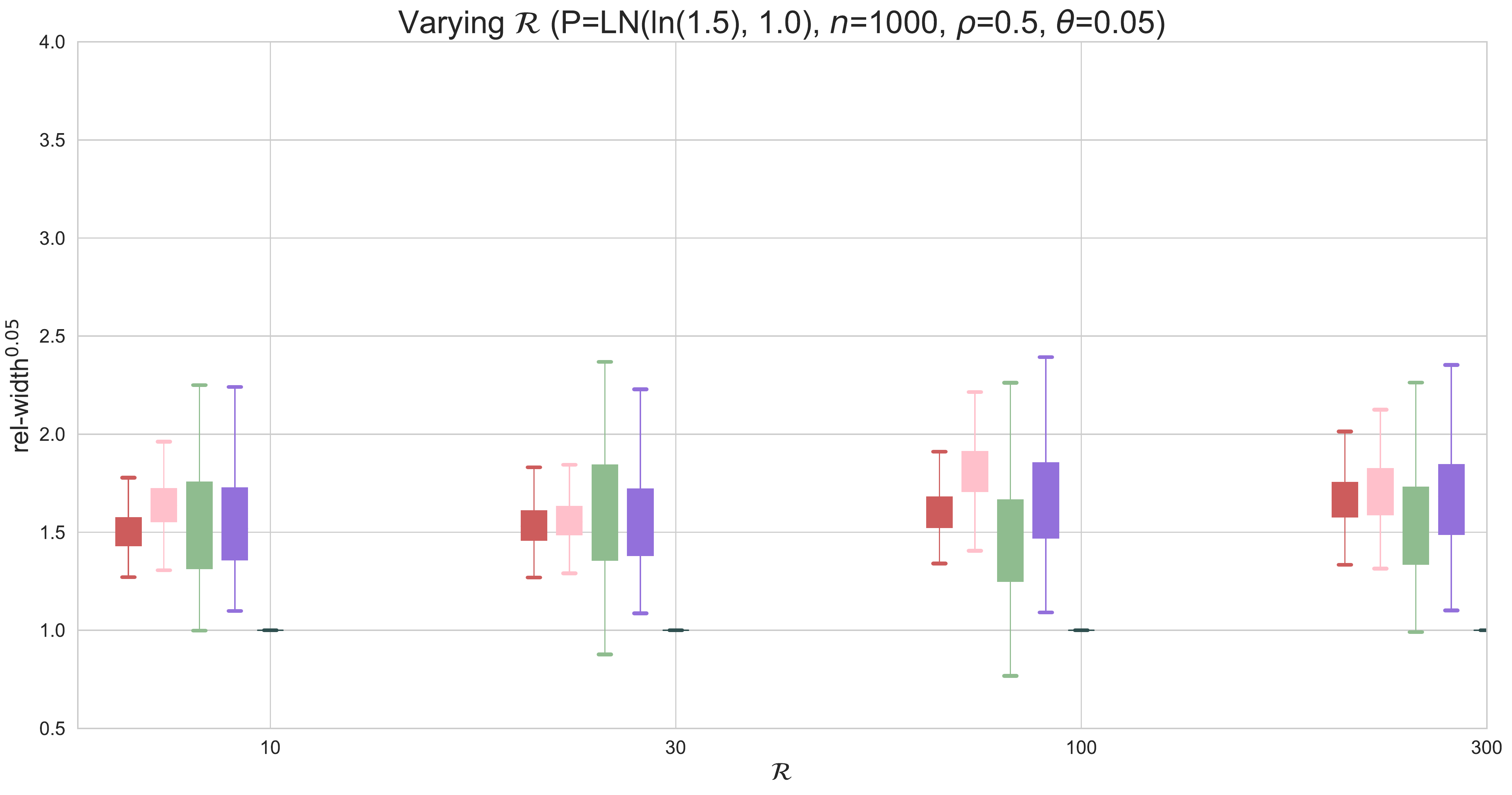} \caption{Varying $|\range|$}
         \label{fig:rel-width-boxplots-range}
     \end{subfigure}
     \begin{subfigure}[b]{\textwidth}
        \centering
        \includegraphics[width=\textwidth]{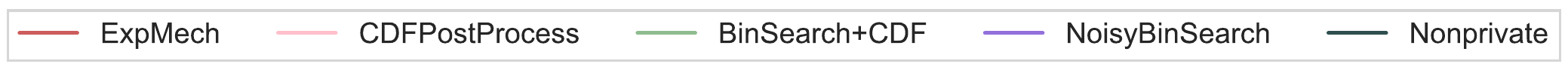}
     \end{subfigure}
     \caption{Relative widths of DP confidence intervals as we vary (a) dataset size $n$, (b) dataset standard deviation $\sigma_d$, (c) privacy parameter $\rho$, and (d) size of range $|\range|$ on log-normal data. By definition, $\relwidth^{\alpha}(d, I)=1$ when $I=[\NPL{\alpha}, \NPU{\alpha}]$.
     }
     \label{fig:rel-width-boxplots}
\end{figure}

\paragraph{Comparison Among Algorithms}

\begin{figure}[hbtp]
\centering
     \begin{subfigure}[b]{0.49\textwidth}
         \centering
         \includegraphics[width=\textwidth]{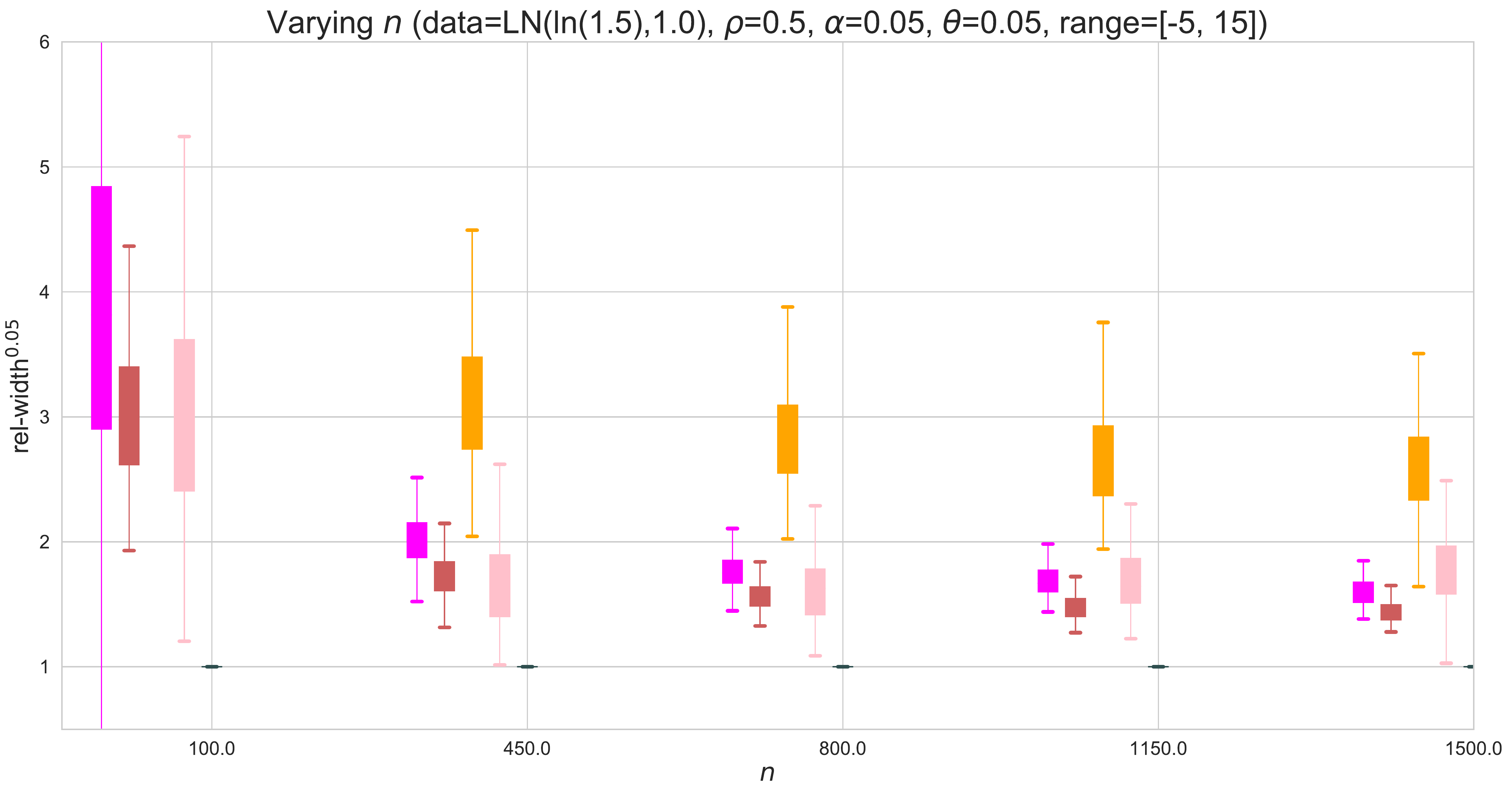}
         \caption{ Relative widths as we vary $n$}
         \label{fig:naive-v-new-widths-n}
     \end{subfigure}
     \hfill
     \begin{subfigure}[b]{0.49\textwidth}
         \centering
         \includegraphics[width=\textwidth]{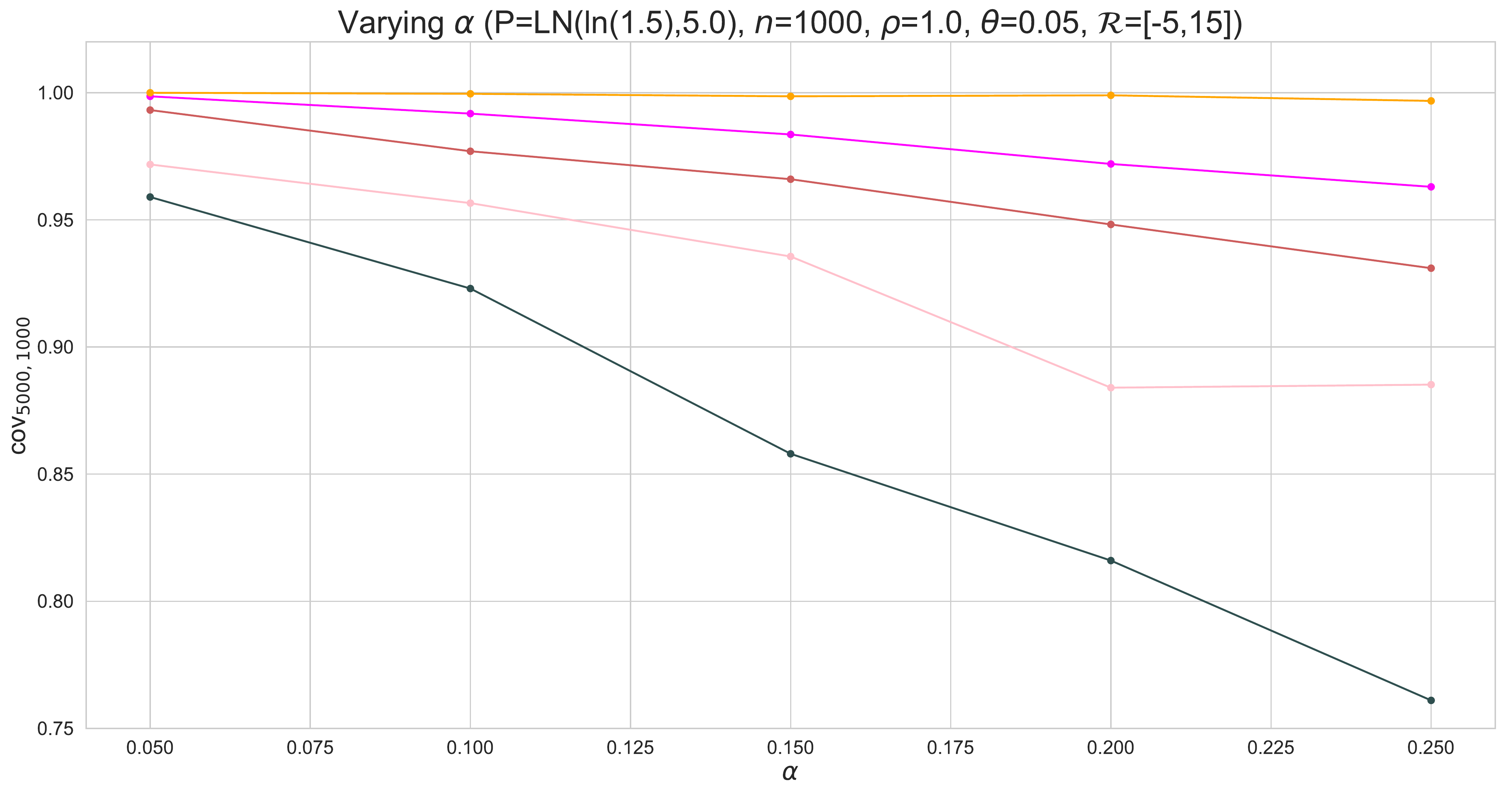}
         \caption{Coverage as we vary $\alpha$ }
         \label{fig:naive-v-new-coverage-alpha}
     \end{subfigure}
     \begin{subfigure}[b]{\textwidth}
        \centering
        \includegraphics[width=\textwidth]{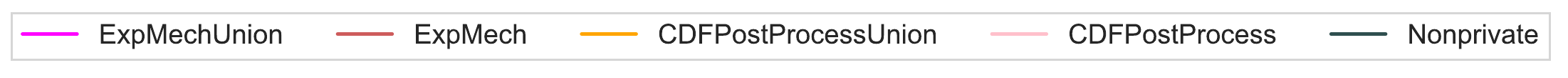}
     \end{subfigure}
     \caption{Performance of naive vs. more careful DP confidence intervals
     }
     \label{fig:naive-v-new}
\end{figure}

\begin{figure}[hbtp]
\centering
     \begin{subfigure}[b]{0.49\textwidth}
         \centering
         \includegraphics[width=\textwidth]{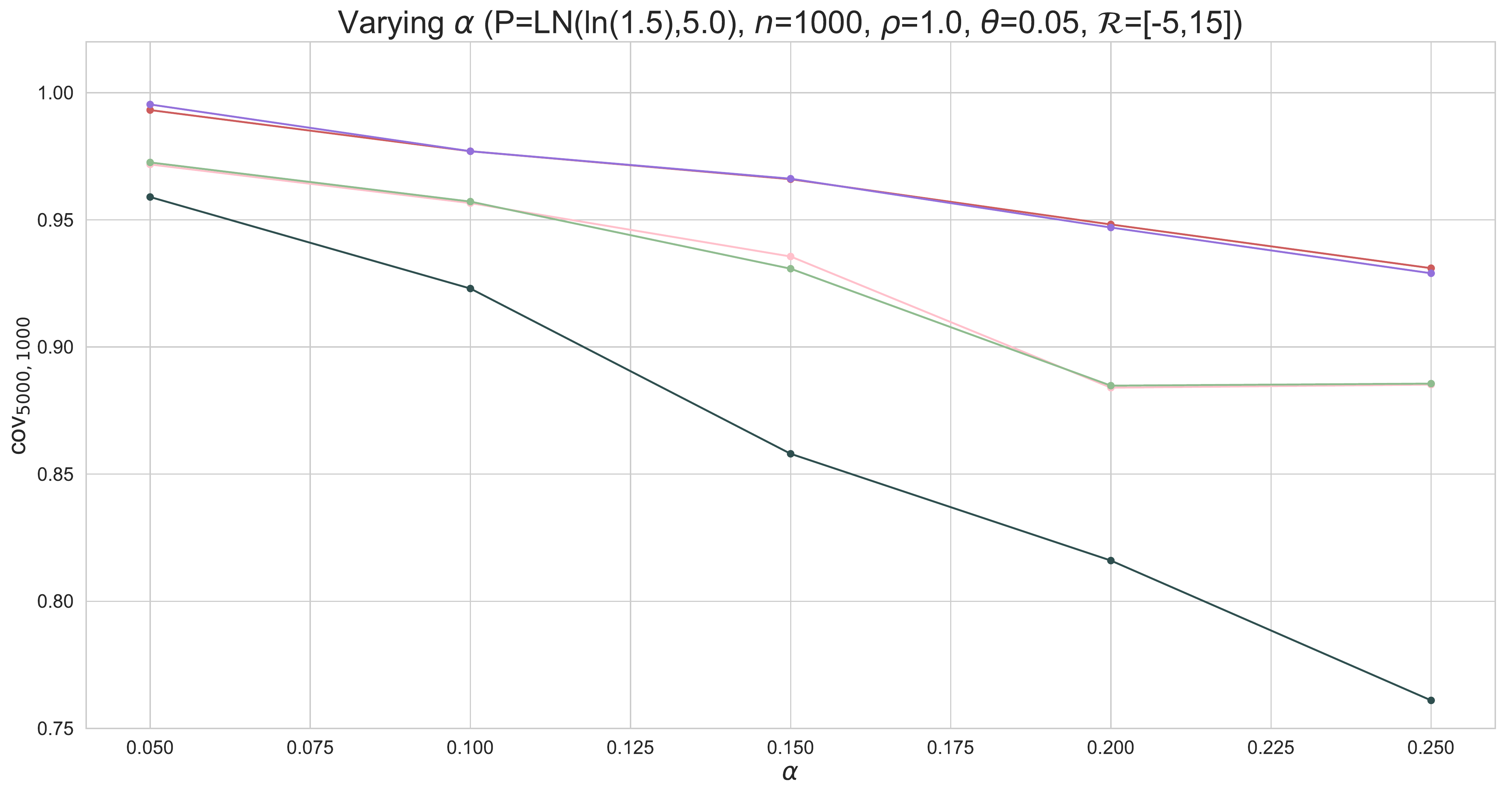}
         \caption{ Varying $\alpha$ 
         }
         \label{fig:coverage-alpha}
     \end{subfigure}
     \hfill
     \begin{subfigure}[b]{0.49\textwidth}
         \centering
         \includegraphics[width=\textwidth]{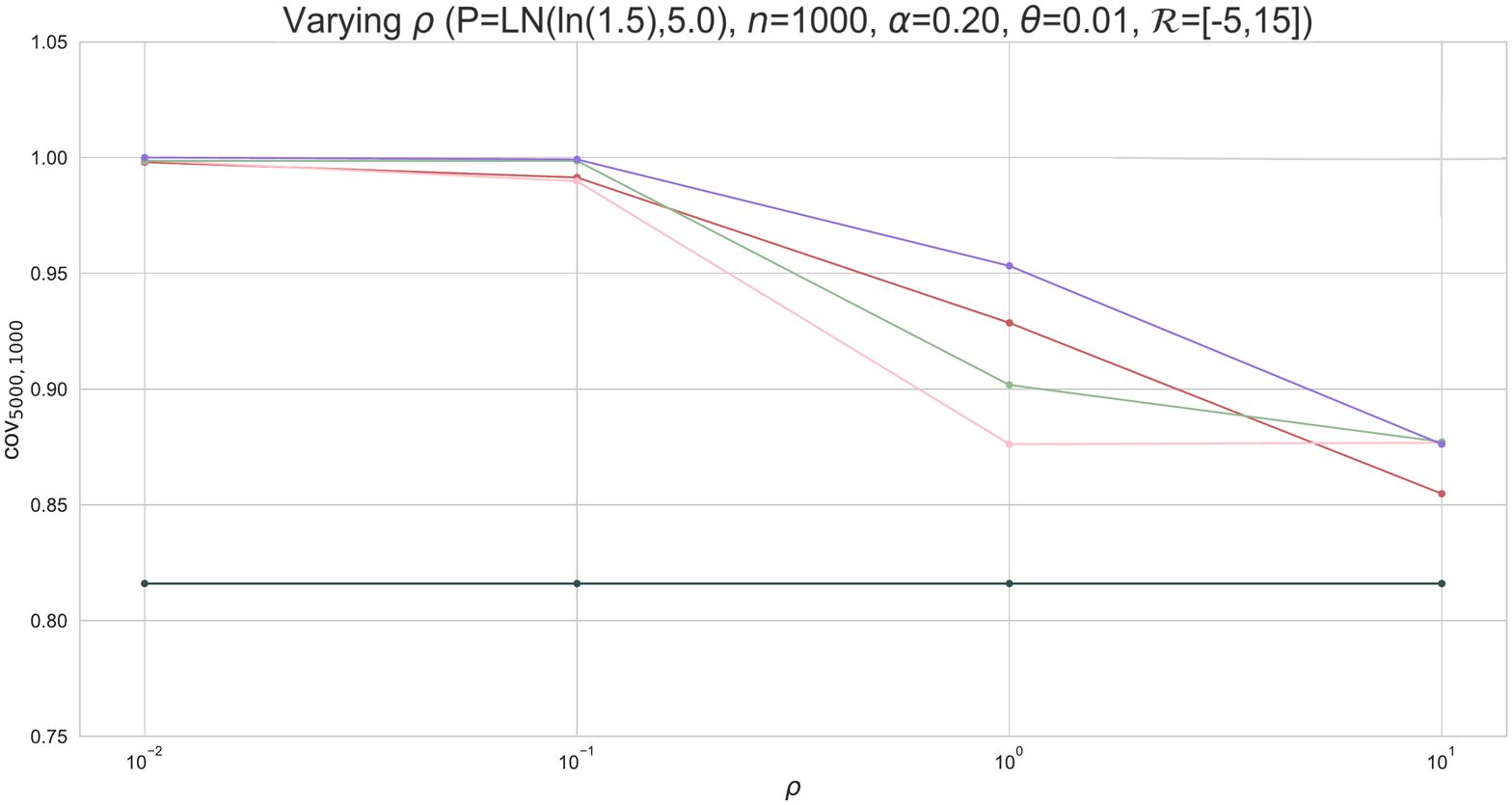}
         \caption{Varying $\rho$ }
         \label{fig:coverage-rho}
     \end{subfigure}
    \begin{subfigure}[b]{\textwidth}
        \centering
        \includegraphics[width=\textwidth]{main-labels.pdf}
     \end{subfigure}
     \caption{Coverage of DP confidence intervals as we vary $\alpha$ and $\rho$ 
     }
     \label{fig:coverage}
\end{figure}

Figure~\ref{fig:rel-width-boxplots} demonstrates the performance of our four CDP confidence interval algorithms across a range of parameter regimes on log-normal data, in terms of the relative width metric. Notice that in a variety of regimes, including large $n$, large $\rho$ and large $\sigma_d$ all of the CDP algorithms provide confidence intervals that are at most twice the width of the non-private confidence interval with high probability. 
Our results indicate that $\EM$ provides the tightest, or close to the tightest, confidence intervals in most parameter regimes we studied. 
This algorithm is the most targeted of the CDP algorithms we discuss and is carefully calibrated to not waste privacy budget on estimating additional information about the underlying distribution. 
It is a good general choice when one is solely interested in confidence intervals for the median. There are a few regimes in which the other algorithms outperform $\EM$ which we will discuss in this section.

The $\CDF$ algorithm is appealing in practice since it allows a CDP estimate of the CDF to be released without consuming additional privacy budget. This can be used not only to produce confidence intervals for the median, but to produce a wealth of other insights about the distribution $P$. 
Surprisingly, in a variety of parameter regimes, $\CDF$ provides confidence intervals that are almost as tight as those obtained by $\EM$. In fact, when $\sigma_d$ is large, $\CDF$ can result in tighter confidence intervals than $\EM$ (Figure~\ref{fig:rel-width-boxplots-sigma}). We explore this further in Appendix~\ref{online supplement: otherregimes}. Conversely, when $\sigma_d$ is small, or $|\range|$ is large, $\CDF$ is not a good choice. As discussed earlier, these are regimes where $\CDF$ spends a lot of its privacy budget estimating the CDF in regions that are far from the median.

The remaining CDP confidence intervals $\BSCDF$ and $\noisyBS$ were designed to outperform $\CDF$ when the data is concentrated in a small subset of $\range$. This setting is shown in Figure~\ref{fig:rel-width-boxplots-sigma}, when the data is very concentrated and Figure~\ref{fig:rel-width-boxplots-range} when $|\range|$ is large. In both settings we see either $\BSCDF$ or $\noisyBS$ outperforming the other CDP algorithms. $\BSCDF$ involves a hyperparameter that decides how much of the privacy and coverage budget is spend on each step of the algorithm. In our experiments, 
we used one quarter of the privacy budget to perform the range-finding step (using $\noisyBS$), and three quarters to find the confidence interval within that region (using $\CDF$). This choice was made experimentally by testing multiple choices for this partition. Since this parameter interpolates between $\CDF$ and $\noisyBS$, the optimal choice appears to be data dependent. In particular, we conjecture that the smaller the fraction of the range that is occupied by the data, the higher the fraction of the budget that should be allocated to the $\noisyBS$ step. Of course, one typically does not know a priori how concentrated the data is, so the (1/4, 3/4) seems to be a reasonably safe split that performs well in a variety of contexts.

\paragraph{Empirical coverage analysis.}

A key component of the algorithmic design of each of the CDP confidence intervals was the coverage analysis. We discussed in Section~\ref{randomnessdiscussion} how a careful coverage analysis that leverages the relationship between the two sources of the randomness in the CDP confidence intervals potentially results in a much tighter coverage analysis than the naive analysis that separates the sources of randomness. Our experimental results presented in Figure~\ref{fig:naive-v-new} highlight two key findings regarding the coverage; that the careful analysis does result in substantially tighter intervals, and that the empirical coverage of the CDP confidence intervals is still notably above the target coverage.

Figure~\ref{fig:naive-v-new-widths-n} compares the relative width of the different confidence intervals. $\texttt{ExpMechUnion}$ and $\texttt{CDFPostProcessUnion}$ refer to the versions of $\EM$ and $\CDF$ resulting from the naive coverage analysis. While the relative width is only slightly reduced for $\EM$, the relative width of $\CDF$ is almost halved if the improved approach is used to produce the confidence intervals. These findings are also reflected in   Figure~\ref{fig:naive-v-new-coverage-alpha}, which compares the empirical coverage of the naive coverage analyses of $\texttt{ExpMechUnion}$ and $\texttt{CDFPostProcessUnion}$ and the more careful analyses described in Section~\ref{algorithms}. While theoretically we can show that the careful analysis will result in empirical coverage that is much closer to the target coverage, Figure~\ref{fig:naive-v-new-coverage-alpha} shows that this improvement is practically relevant. While the improved analysis only leads to modest reduction in the overcoverage for $\EM$, the changes for $\CDF$ are more substantial. The naive approach results in coverage rates that are close to 1 irrespective of the selected $\alpha$. The improved approach leads to coverage rates that are much closer to the nominal coverage rates. This highlights the importance of designing algorithms that consider the relationship between the two sources of randomness. 

Despite the substantial improvement, Figure~\ref{fig:coverage-alpha} shows that all the CDP algorithms exhibit empirical coverage higher than the target coverage for moderate values for $\rho$. As expected, Figure~\ref{fig:coverage-rho} reveals that the coverage improves with $\rho$ as each algorithm trends towards outputting the non-private confidence interval $[\NPL{\alpha}, \NPU{\alpha}]$ when $\rho=\infty$. Over-coverage does not necessary correspond to substantially larger confidence intervals. We see in Figure~\ref{fig:rel-width-boxplots} that in a wide range of parameter regimes our CDP algorithms still result in confidence intervals that are at most twice as wide as their non-private counterparts. However, it does suggest an opportunity for improvement. An important question for future work is to what degree this over-coverage is necessary? In particular, is there an inherent tension between privacy guarantee, and learning enough about the data set to accurately quantify the uncertainty? 

In many estimation tasks defining a non-parametric confidence interval that gives close to nominal coverage rates is difficult. Without information regarding the underlying distribution, the confidence intervals need to be wide enough to ensure valid coverage rates for any possible distribution. This can result in the non-parametric confidence intervals having higher than expected empirical coverage when the data is drawn from a nice distribution, e.g a log-normal distribution.  
This effect is one possible explanation for the fact that the CDP confidence intervals have empirical coverage higher than $1-\alpha$ in Figure~\ref{fig:coverage}. In fact, we see evidence of this in the analysis of $\EM$. The error of the exponential mechanism is data dependent (and hence distribution dependent), but in our coverage analysis we are forced to use the worst case error of the exponential mechanism over all datasets. 

\begin{figure}[hbtp!]
\centering
     \begin{subfigure}[b]{0.6\textwidth}
         \centering
         \includegraphics[width=\textwidth]{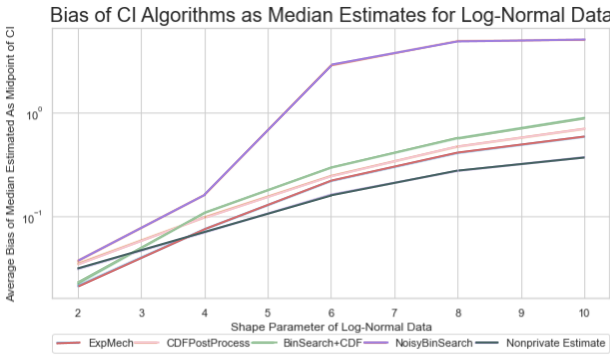}
     \end{subfigure}
     \caption{Bias of the algorithms (average of the difference between the value obtained and the true median) for log-normal data centered at 1 as the shape parameter of the distribution (the variance of the normal distribution that is exponentiated) increases from 2 to 10.}
      \label{fig:bias}
\end{figure}

\paragraph{Bias.} The goal of this work was to design algorithms that output valid confidence intervals for the median, not to estimate the median itself. An ad-hoc estimate of the median can be obtained from a confidence interval by taking the estimate to be the mid-point of the interval. This approach is preferable in the differential privacy context since it allocates its entire budget to the object of interest (the confidence interval) and we discussed in Section~\ref{roadblocks} some of the reasons why direct estimators of the median are difficult to generalise to CDP confidence interval algorithms.
For all of our CDP algorithms, as well as the non-private confidence interval, this results in a biased estimator for the median, if the underlying distribution is skewed. 
In Figure~\ref{fig:bias}, we explore the bias of the inherited median estimators. As expected, the bias increases with the skew of the data. The bias of most of the DP algorithms (except for $\noisyBS$) is not substantially different from the bias for the non-private estimate. This implies that  most of the bias can be attributed to the ad-hoc strategy of using the mid-point of the confidence interval as the point estimate for the median. As mentioned earlier, one benefit of $\CDF$, $\noisyBS$ and $\BSCDF$ is that they come with additional information about the distribution which could potentially be used to release a less biased estimate of the median. We leave this for future work.

\section{Real data application}
\label{sec:application}

\begin{table}[htpb!]
    \centering
    \resizebox{\textwidth}{!}{
    \begin{tabular}{|c|c|c|c|c|c|c|}
        \hline
        \hline
         Characteristic & Number in & CDP median  & DP  & Non-private & Non-private & Population \\
          of & 1\% sample & income & 90\% CI & median & 90 \% CI  & median \\
          household(er) & & & 
          & income & 
          & income \\
          & & & & & & \\
         \hline
         \hline
         \textbf{Type of Household} & & & & & &\\
         \hline
        Family households & 9,142 & 489.00 & (469.99, 508.01) & 499.97 & (480.01, 500.93)& 499.95  \\
         \hline
         Nonfamily households & 1,479 & 65.50 & (0.0*, 136.01)& 20.12 & (0.17, 99.89) & 0.20 \\
         \hline
        \textbf{Metropolitan status} & & & & & & \\
         \hline
         Not in metropolitan area & 8243 & 324.99 & (290.03, 359.95) & 329.85 & (300.06, 359.85) & 360.07 \\
         \hline
         In metropolitan area & 2380 & 708.12 & (640.00, 776.23) & 699.99 & (659.94, 749.85) & 699.91 \\
         \hline
         \textbf{Age} & & & & & & \\
         \hline
         Age $< 65$ & 9,259 & 564.89 & (529.94, 599.84) & 560.05 & (540.03, 597.95) & 540.10 \\
         \hline
         Age $ \geq 65$ & 1,366 & 0.00 & (0.0*, 5.0) & 0.03 & (0.02, 0.03) & 0.03 \\
         \hline
         \hline
    \end{tabular}
    }
    \caption{Income summary measures by selected characteristics based on 1\% simple random sample of mountain division householder records (i.e., in Arizona, Colorado, Idaho, Montana, Nevada, New Mexico, Utah, Wyoming) from the 1940 decennial census. Income is shown in 1940s dollars and is top-coded at $\$5001$. Differentially private estimates are obtained using $\EM$ on a single sample with total privacy budget $\rho=0.5$, range $\range = [0, 5001]$ and granularity $\theta=5$. Note that the lower DP confidence interval values with a * are truncated to 0 based on the assumption that incomes are nonnegative. The DP point estimates are computed before the truncation to avoid introducing bias.
    All values are rounded to the nearest cent.
    }
    \label{tab:Census_results}
\end{table}

In this section, we illustrate how the findings from the previous sections could inform the implementation of a differentially private median release strategy in practice. We also demonstrate what level of accuracy one could reasonably expect for realistic applications. Our motivating example is the median income tables published by the U.S. Census Bureau for various subgroups of the population. Specifically, we aim to replicate a subset of statistics from Table \textit{A1.Income Summary Measures by Selected Characteristics: 2018 and 2019} \citep{poverty19}. This table reports median household income broken down by the following characteristics: Type of household, Race and Hispanic Origin of Householder, Age of Householder, Nativity of Householder, Region, and Residence. For each of the 32 subgroups specified, the table provides the estimated median income and estimated margin of error (based on $\alpha=0.1)$ for 2018 and 2019. The estimates are computed using the  Current Population Survey, 2019 and 2020 Annual Social and Economic Supplements (CPS ASEC).		

Since we want to assess the accuracy of the CDP estimates, we use income data from the 1940 Decennial Census~\citep{Census1940Dataset}, which enables us to compare the noisy estimates to the true values in the population. We restrict the population data to heads of households
in the mountain division region (Arizona, Colorado, Idaho, Montana, Nevada, New Mexico, Utah, and Wyoming) and focus on the variables type of household (two categories), metropolitan area (two categories) and age (two categories). To mimic the illustrative application described above, we repeatedly sample from this population and treat the resulting data as the survey from which the (noisy) estimated medians will be computed. For simplicity, we draw 1\% simple random samples without replacement. We acknowledge that the sampling design for the CPS ASEC is far more complex. However, as indicated in the introduction understanding the subtle effects of complex sampling designs on the privacy guarantees is currently an area of active research and is beyond the scope of this paper.

The variable we use for our evaluations is `INCWAGE,' which ``reports each respondent's total pre-tax wage and salary income for the previous year." The amounts are displayed in ``contemporary dollars," which means they are not adjusted for inflation. In the 1940's dataset, the variable is topcoded at $5,001$ dollars. We remove all N/A and missing values from the dataset, and only consider records corresponding to the head of each household. We note that we do not propose simply dropping all cases with missing values in practice as this will likely introduce bias. However, properly integrating any non-response adjustments into the DP algorithms is beyond the scope of the paper. Thus, we treat the fully observed data on household heads as our population of interest. Finally, for the purpose of error analysis we treat the empirical distribution of the entire population dataset (from which we sample 1\%) as the true underlying distribution $P$.\footnote{Note that 
many respondents report incomes rounded to the nearest \$5, \$10 or \$50 dollars, which results in substantial overcoverage even for the non-private confidence intervals due to the spikes in the data. Therefore, we add a negligible amount of noise ($\mathcal{N}(0,0.01)$) to each population income data point, so that the distribution being sampled from is continuous. See Appendix~\ref{convolutionisgood} for further discussion.
} 

\subsection{Selecting the algorithm and (hyper)parameters}
To generate the privatized confidence intervals, we use the algorithm identified as the winner in a wide range of regimes in the simulation studies: $\EM$. The hyperparameters are set to $\range=[0, 5001]$, and $\theta=5$. The lower and upper bounds are chosen based on the assumption that the threshold used for top coding is public knowledge and the reasonable assumption that the median income will not be less than zero. The granularity parameter $\theta$ is chosen based on Census Bureau data visualizations that report median incomes from 1967 to present, which are rounded to the nearest $\$100$ \citep{Census_Press} indicating that a granularity of $\$5$ for 1940 median incomes is likely sufficient for data users. 
We split the overall privacy budget of $\rho=0.5$ equally across three characteristics: type of household, metropolitan status, and age. 

\subsection{Results}

\begin{figure}[hptb!]
    \centering
    \includegraphics[width=0.8\textwidth]{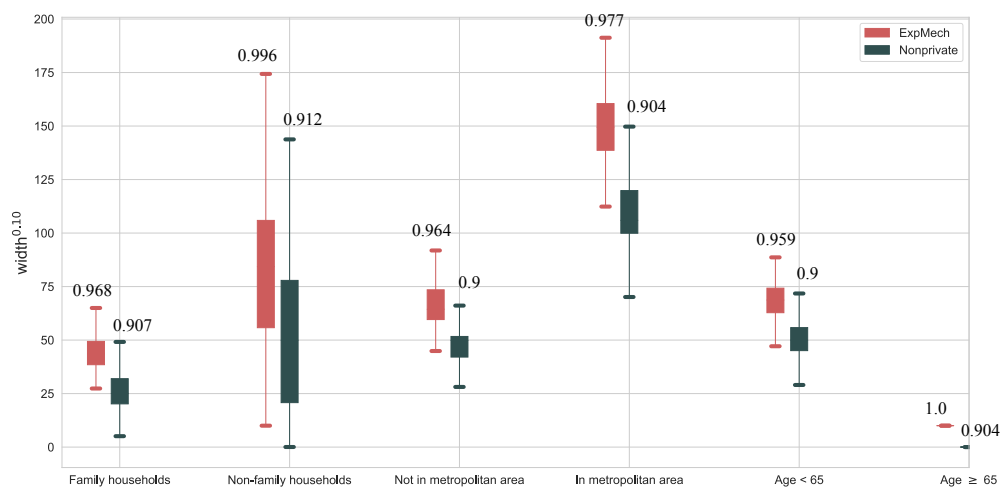}
    \caption{Comparing widths of $90\%$ $\EM$ and non-private confidence intervals. Algorithms are run
    on 1,000 samples of income data by selected characteristics from the 1940 Decennial Census (the DP algorithm is run 20 times for each sample). Empirical coverage rates are displayed for each algorithm. }
    \label{fig:income-widths}
\end{figure}

\begin{figure}[hptb!]
    \centering
    \includegraphics[width=0.8\textwidth]{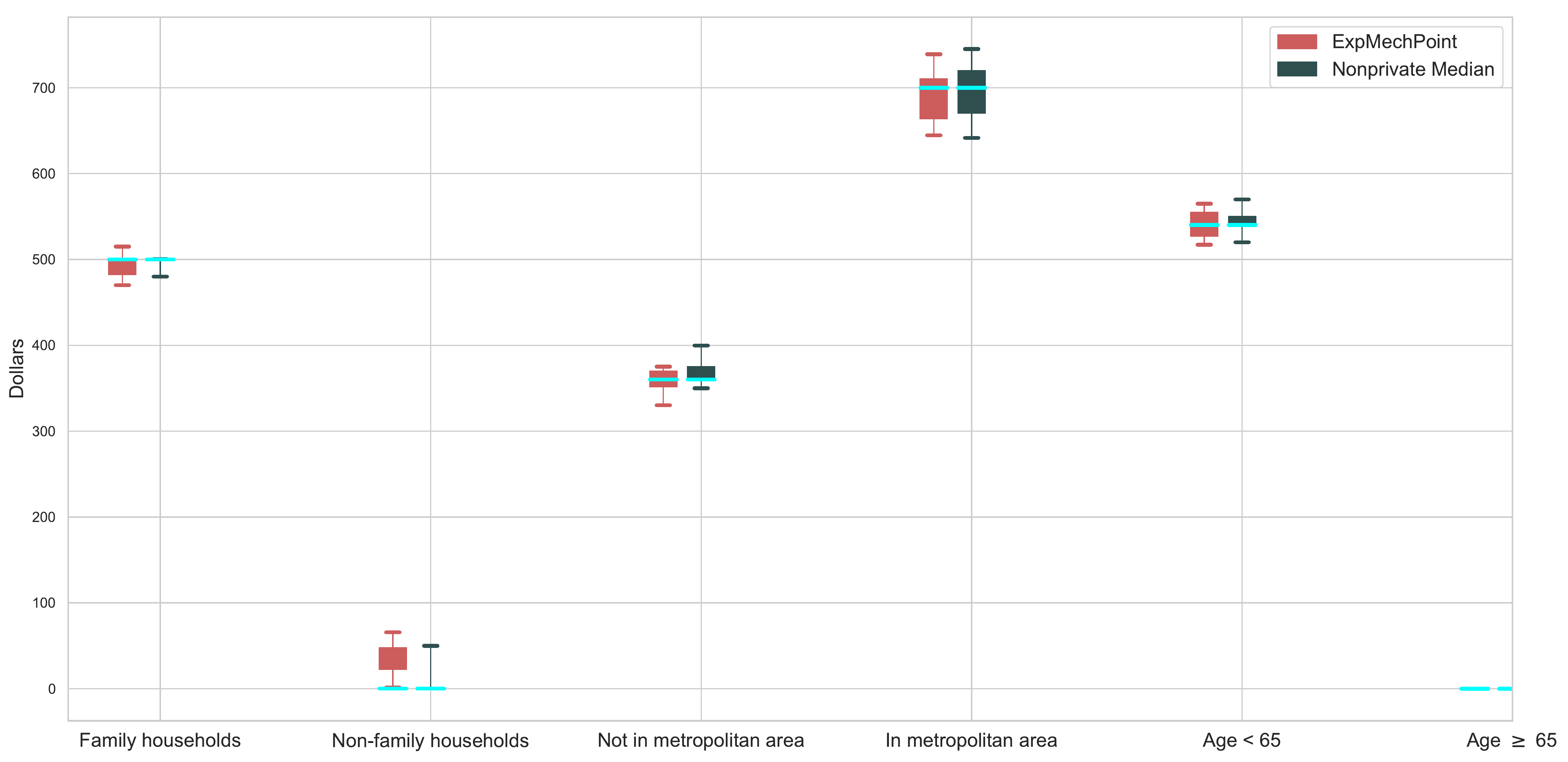}
    \caption{Comparing point estimates of $\EMPointEstimator$ and non-private medians. Algorithms are run on 1,000 samples of income data by selected characteristics from the 1940 Decennial Census (the DP algorithm is run 20 times for each sample). Population medians by characteristic are denoted by the cyan lines. The whiskers of the boxplots denote the 5th and 95th quantiles of the estimates.
    }
    \label{fig:income-point-ests}
\end{figure}

Results based on the first simulation run are included in Table \ref{tab:Census_results}. Note that the CDP confidence intervals and median estimate are the result of a single run of $\EM$ so this table is indicative of what we would expect in practice. 
The CDP point estimates for the median incomes are chosen as the midpoint of the corresponding CDP confidence intervals. 
Note that we could also leverage a prior assumption of the right-skewness of income data by choosing the CDP point estimator from the left half of the CDP confidence interval, rather than from its center, but we leave this type of parametric estimation to future work.
We leverage the assumption that the incomes are non-negative, so we set the lower endpoint of the CDP confidence intervals at the maximum of the output of the algorithm and 0. However, we compute the point estimates before the truncation step to avoid introducing bias. The table also provides non-private and private 90\% confidence intervals (the margin of error reported in the Census tables could be computed as the half-width of these intervals).

The non-private median estimates are closer to the true values than the DP estimates for all sub-populations. However, for many statistics the difference between the point estimates is small relative to the width of the confidence interval indicating that the bias introduced by the ad-hoc approach of using the center of the confidence interval as the point estimator for the median is only minor. Except for the householders aged 65 and above, the relative increase in uncertainty also seems to be acceptable. The relative increase in the length of the confidence intervals ranges between 20.7\% and 81.7\%, that is, the uncertainty from data protection is always less than the uncertainty from sampling. The large relative increase of the confidence intervals for householders aged 65 and above can be explained if we note that the width of the CDP intervals is lower bounded by the granularity hyperparameter (which we chose to be $\theta = 5$), which leads to a large relative uncertainty if the non-private interval has width close to 0. However, the absolute increase in uncertainty is still acceptable.

In Figure~\ref{fig:income-widths} and Figure~\ref{fig:income-point-ests} we explore the performance of $\EM$ and the non-private algorithm over 1,000 randomly sampled datasets. Figure~\ref{fig:income-widths}, which contains boxplots showing the width of the private and non-private confidence intervals,  confirms the findings based on one simulation run. While the private confidence intervals are typically wider than the non-private intervals, the increase in width is less than a multiplicative factor of two for all sub-populations except for head of households older than 65, and less than \$100 in all sub-populations. The figure also reports the coverage rates for the non-private and private confidence intervals. The coverage rates are computed over 1,000 simulation runs and 20 trails of the CDP algorithm within each simulation run. While the non-private coverage rates are close to the nominal 90\% coverage, the CDP confidence intervals overcover substantially. The coverage rates vary between 95.9\% and 100\%. The results are in line with the simulation studies. 

Figure \ref{fig:income-point-ests} contains boxplots showing the variability of the private and non-private point estimates (where $\EMPointEstimator$ is the exponential mechanism point estimator). The whiskers of the boxplots indicate the 5th and 95th quantile of the empirical estimates to ensure consistency with the 90\% confidence intervals reported in Figure \ref{fig:income-widths}. The true medians from the population are indicated by cyan lines for each of the sub-populations. 

We find that from an inferential perspective the difference between the private and non-private estimates is small. For most of the estimates, the range of the boxplots overlap to a large extent and similar to Table \ref{tab:Census_results} the bias is small relative to the variability in the estimates. The only estimate for which we find noticeable bias is the private estimate for non-family households. The bias arises because of the large fraction of zeros among this sub-population in the original data. Since 51\% of the records in the original data report an income that is essentially zero (except for the small amount of noise that we introduce to make our data approximately continuous), the sample median will also be close to zero in many simulation runs. However, the CDP point estimate is based on the midpoint of the CDP confidence interval and the upper limit of this confidence interval will almost always be larger than the 51st quantile in the population, that is, it will almost always be larger than zero. Thus, the estimated CDP median will be biased. This highlights that the midpoint strategy can run into problems if the data is highly concentrated in certain areas of the distribution. However, such a scenario does not necessarily introduce substantial bias. This can be seen for the head of households that are 65 years or older. For this sub-population, the percentage of householders with zero reported income is about 81\%. As a result, the upper limit of the CDP confidence interval is still close to zero in most simulation runs and the point estimate remains almost unbiased.

\section{Conclusion}
Measuring the uncertainty in differentially private estimates is a challenging task, especially if the input data is a sample from a larger population. In this case, both sources of randomness---the sampling error as well as the error from the CDP algorithm---need to be taken into account. If the mechanism is data dependent, the two sources are no longer independent making it difficult to quantify the uncertainty in the final output. 

In this paper we addressed this challenge for the median, evaluating several strategies to obtain differentially private confidence intervals for this commonly used statistic. All the algorithms proposed produced valid non-parametric confidence intervals.
We also demonstrated that directly accounting for both sources of uncertainty simultaneously allowed us to give tighter confidence bounds than relying on naive approaches that account for the two components  sequentially. Our simulation results showed that an algorithm we called $\EM$ produced reliable and consistent confidence intervals which were less than twice the width of the non-private confidence intervals in a wide variety of parameter regimes. A pair of algorithms called $\CDF$ and $\BSCDF$ provide confidence intervals that are almost as tight, or slightly tighter, than $\EM$ in a variety of regimes. These algorithms are practically appealing since they release a wealth of additional information about the distribution $P$ without consuming additional privacy budget. 

The private confidence intervals in the application based on the 1940 Decennial Census were not substantially wider than the confidence intervals released by the non-private algorithm, illustrating that the extra uncertainty due to data protection can be small in practice. It should be noted that the total privacy budget always needs to be divided among all the characteristics of interest (type of household, metropolitan status and age in our application) so the accuracy will necessarily decrease if more statistics are to be released under the same privacy budget. 

We also found that the bias introduced by the ad-hoc strategy of using the midpoint of the confidence interval as an estimate for the median was limited for most estimates in our real data application. One strategy to further reduce this bias would be to use available information regarding the skewness of the data to come up with a better point estimate for the median. The fact that most of the algorithms provide additional information regarding the CDF of the data could be helpful for this endeavour as the information could be exploited in a post-processing step to model the distribution of the data. We leave this for future research.

We saw in our experiments on both simulated and real data that the empirical coverage rate of our private confidence intervals was often (sometimes substantially) higher than the nominal coverage rate. An interesting open question is whether this is inherent for non-parametric CDP confidence intervals for the median. Further, if this is unavoidable, then what distributional assumptions are required to narrow the gap between the empirical and nominal coverage rates? 

Finally, perhaps the strongest limitation of our paper is the reliance on the assumption that the sample is drawn using simple random sampling with replacement. Such a sampling design will never be used in the survey context in practice. Thus, the important next step will be to extend the methodology to allow for more complex designs. However, this raises many challenging problems. First, all algorithms assume that the sample is iid, which is typically not true in practice and it is not obvious which adjustments would be necessary to account for this. Second, many sampling designs are informative, meaning that the sampling design is data dependent, which has consequences on the privacy guarantees that are difficult to quantify. Third, the sampling weights that would be used to compute the median (or any other quantile) would influence the sensitivity of the statistic. Fourth, further adjustments such as calibration or dealing with non-response have additional impacts on the privacy guarantees. Fifth, computing confidence intervals for the median is challenging for many sampling designs even for the non-private case. Addressing all these aspects is well beyond the scope of this paper. However, each of these aspects would be an interesting and important area of future research with impact well beyond the median application considered in this paper.

\section*{Acknowledgments}

The work of Drechsler, Globus-Harris, Sarathy and Smith on this project was funded in part by US Census Bureau cooperative agreements CB16ADR0160001 and CB20ADR0160001.  
The work of  McMillan (while at BU) and Smith was also supported in part  by NSF award CCF-1763786 as well as a Sloan Foundation research award. Part of this work was done while McMillan was supported by a Fellowship from the Cybersecurity \& Privacy Institute at Northeastern University and NSF grant CCF-1750640. Globus-Harris' work at BU was supported by funding from the Hariri Institute for Computing, and was supported by the CIS PhD Graduate Fellowship at U Penn.
The opinions, findings, conclusions and recommendations
expressed herein are those of the authors and do not necessarily
reflect the views of the US Census Bureau or other funding sources.

Our work was prompted in part by discussions with sociologists John Logan and Brian Stults, in the context of their work on integrating data across time-varying tract boundaries \citep{LoganZSG21}.
We are also grateful for helpful conversations with and comments from (in no particular order) Rolando Rodriguez, Ryan Cummings, Thomas Steinke, Shurong Lin, Eric Kolaczyk and Salil Vadhan.

\bibliographystyle{plainnat}
\bibliography{main}

\clearpage

\appendix
\section*{Appendix}

This appendix contains further details and pseudocode for the four mechanisms evaluated in the main paper: \EM, \CDF, \noisyBS, and \BSCDF~(Sections \ref{online supplement:EM} to \ref{online supplement:BSCDF}). Note that code for these algorithms can be found at \url{https://github.com/anonymous-conf-medians/dp-medians}. This appendix also provides simulation results for some additional algorithms (Section \ref{online supplement: otheralgs}) that we did not explore further as they were strictly dominated by other algorithms in all the parameter settings considered in the main paper. 

\section{From Continuous Distributions to All Distributions}\label{convolutionisgood}

The algorithms and proofs in this paper focus on confidence intervals for the class of continuous distributions $\gooddist$. The relevant property of this class of distributions is that the distribution of the rank of the median is exactly given by \[\Pr(\rank{d}{\median(P)}=m)=\Pr(\Bin(n,1/2)=m).\]
This property can fail for distributions where the median itself has non-zero mass. However, we can use a simple transformation to extend our confidence intervals for continuous distributions to a function that is arbitrarily close to a confidence interval for the set of all distributions on $\mathbb{R}$, $\Delta(\mathbb{R})$. The transformation involves adding a small amount of Gaussian noise to the samples from $P$, in order to produce samples from a continuous distribution that are close to samples from $P$. A confidence interval algorithm for $\gooddist$ is then run on the resulting samples. 

We'll say a function $M:\mathcal{X}^n\to\intervals$ is a \emph{$(\beta, 1-\alpha)$-good confidence interval} for $Q \in \gooddist$ if with probability at least $1 - \alpha$, 
\[
\exists m\in M(X) \text{ s.t. } \Pr_{x\sim Q}(x\le m)\in[1/2-\beta, 1/2+\beta]
\]
where the probability is taken over the randomness of $M$ and $X \sim Q^n$.
Note that a $(0, 1-\alpha)$-good confidence interval is simply a $1-\alpha$-confidence interval. 

Let $\Phi_{\sigma}$ be the cumulative distribution function (CDF) of the Gaussian $\mathcal{N}(0,\sigma^2)$. Algorithm~\ref{transformation} describes the transformation of $M$. Note that $M'$ expands the confidence interval by $\Phi_{\sigma}^{-1}(1-\beta)$. We can make $\beta$ and $\Phi_{\sigma}^{-1}(1-\beta)$ both arbitrarily small by setting $\sigma^2$ to be arbitrarily small.

\begin{algorithm}[H]
  \KwData{$X\sim P^n$, where $P \in \Delta(\mathbb{R})$}
  \KwHyperparams{$(0, 1-\alpha)$-confidence interval algorithm $M$ for distributions in $\gooddist$, $\sigma^2 > 0$, $\beta\in[0,1/2]$}
  $X' = X+\mathcal{N}(0, \sigma^2I_n)$\\
  Let $a = \Phi_{\sigma}^{-1}(1-\beta)$, where $\Phi_{\sigma}$ is the CDF of $\mathcal{N}(0,\sigma^2)$ \\
  \Return [M(X')-a, M(X')+a]
  \caption{$M'$, a $(\beta, 1-\alpha)$-good confidence interval for $P$} \label{transformation}
\end{algorithm}

\begin{lemma} For all $M$, $\sigma^2$ and $\beta\in[0,1]$,
if $M$ is an $\alpha$-confidence interval for $\gooddist$ then $M'$ (as defined in Algorithm~\ref{transformation}) is a $(\beta, 1-\alpha)$-good confidence interval for $\Delta(\mathbb{R})$.
\end{lemma}

\begin{proof} Let $a=\Phi_{\sigma}^{-1}(1-\beta)$, and
let $P\in\Delta(\mathbb{R})$ and $Q$ be the distribution of the sum $z=x+y$ where $x\sim P$ and $y\sim \mathbb{N}(0,\sigma^2)$. Note first that $Q\in\gooddist$. Now,
let $M(Z)=[L,U]$ be the output of the confidence interval on $Q^n$. Notice that $\median(Q)\in M(Z)$ if and only if $\Pr_{z\sim Q}(z\le L)\le \frac{1}{2}$ and $\Pr_{z\sim Q}(z\ge U)\le \frac{1}{2}$. Suppose that $\median(Q)\in M(Z)$ (which happens with probability $\geq 1-\alpha$), then \[\Pr_{x\sim P}(x\le L-a) = \Pr(z-y\le L-a) \le \Pr(z\le L \text{ or } y\ge a)\le \frac{1}{2}+(1-\Phi_{\sigma}(a)) = \frac{1}{2}+\beta\] and similarly \[\Pr(x\ge U+a) = \Pr_{x\sim P}(z-y\ge U+a) \le \Pr(z\ge U \text{ or } y\le -a)\le \frac{1}{2}+\Phi_{\sigma}(-a) = \frac{1}{2}+\beta.\]
Therefore, there exists a pair $m$ and $\gamma$ 
such that $\Pr_{x\sim P}(x\le m)=\gamma$, $m\in[L-a, U+a]$ and $\gamma\in[\frac{1}{2}-\beta, \frac{1}{2}+\beta]$. 
\end{proof}

\section{Details: Confidence intervals based on exponential mechanism, \EM}\label{online supplement:EM}
\label{sec:exp-mech-details}

\begin{theorem}[Exponential Mechanism~\cite{McSherryT07}] 
\label{thm:exp-mech}
Given a space of datasets $\calX^n$ and an arbitrary range, $\range$,
let $u: \calX^n \times \range \rightarrow \reals$ be a utility function that maps dataset/output pairs to utility scores.
For a fixed dataset $d \in \calX^n$ and privacy parameter $\eps \in \posreals$, the \textit{exponential mechanism} outputs $x \in [\datalb, \dataub]$ with probability proportional to $\exp\left(\frac{\eps u(d, x)}{2 \Delta u}\right)$, where 
\[\Delta(u)=\max_{x \in \range}\max_{d, d' \text{neighbours}} |u(d, x) - u(d', x)|.\]
The exponential mechanism is $\eps$-DP.
\end{theorem}

Let the range of possible outputs be $\range = [\datalb, \dataub]$. The standard exponential mechanism to estimate the value $d_{(k)}, k \in [1, 2, \ldots, n]$ (described in~\citep{Smi11} for $k = n/2$) uses the following utility function.  
\[
u(d, x) = - \left \vert \rank{d}{x} - k \right\vert 
\]

This utility function captures how far $x$ is \textit{in rank} from $d_{(k)}$. However, this standard mechanism estimates $d_{(k)}$ poorly if the datapoints in $d$ are highly concentrated around this value.\footnote{See ~\citep{Alabi:2020} for an explanation of this case.} Below is a variant on the standard exponential mechanism designed to perform well even in this situation.

\begin{definition}[$\theta$-Widened Exponential Mechanism~\citep{Alabi:2020}]
\label{def:exp-mech-pe}
For a widening parameter $\theta>0$ and target rank $k \in [1, 2, \ldots, n]$, the \textit{$\theta$-widened exponential mechanism} uses the following utility function. 
\[u(d, x) = -\min\left\{\left|\rank{d}{a}-k\right|\;:\; |a-x|\le\theta\right\} \]
\end{definition}

The $\theta$-widened utility function can be implemented in different ways; Algorithm~\ref{alg:exp-mech-pe} offers one method of doing so. To sample efficiently from the distribution defined by the utility function, we implement a two-step strategy as shown in prior work~\citep{Alabi:2020,C}: First, we sample an interval, using the fact that sampling from the exponential mechanism is equivalent to choosing the value with maximum utility score after i.i.d. Gumbel-distributed noise has been added to the utility scores~\citep{ALT16}. Second, we sample an output uniformly at random from that interval. 

\begin{algorithm}[h]
  \KwData{$\datainput$}
  \KwPrivacyparams{$\epsinput$}
  \KwHyperparams{$\rankinput, \rangeinput, \graninput$}
  
  Clip $d$ to the range $[\lowerrange, \upperrange]$, setting values less than $\lowerrange$ or greater than $\upperrange$ to $\lowerrange$ and $\upperrange$ respectively.
    
  $n = |d|$
  
  Sort $d$ in increasing order

  \For{$i \in [1, k]$}{
    $d_i = \max(\datalb, d_i - \theta)$
   }
   
   \For{$i \in [k+1, n]$}{
    
    $d_i = \min(\dataub, d_i + \theta)$
  }
  
  Insert $\datalb$ and $\dataub$ into $d$ and set $n = n+2$
  
  Set $\textrm{maxNoisyScore} = -\infty$
  
  Set $\textrm{argMaxNoisyScore} = -1$

  \For{$i \in [2, n)$} {
    
    $\textrm{score} = \log(d_i - d_{i-1}) -\frac{\eps}{2} \cdot | i - k | $
    
    $N \sim \textrm{Gumbel}(0,1)$ 
    
    $\textrm{noisyScore} = \textrm{score} + N$
    
    \If {$\textrm{noisyScore} > \textrm{maxNoisyScore}$} {
        $\textrm{maxNoisyScore}= \textrm{noisyScore}$
        
        $\text{argMaxNoisyScore} = i$
        }
  }
  
  $\text{left} = d_{\text{argMaxNoisyScore}-1}$
  
  $\text{right} = d_{\text{argMaxNoisyScore}}$
  
  Sample $\tilde{m} \sim \text{Unif}\left[\text{left} , \text{right} \right]$
  
  \Return $\tilde{m}$

  \caption{\EMPointEstimator: $\theta$-Widened Exponential Mechanism for Quantile Estimation}  \label{alg:exp-mech-pe}
\end{algorithm}

\begin{lemma}
\label{lem:exp-mech-pe-privacy}
$\EMPointEstimator$ (Algorithm~\ref{alg:exp-mech-pe}) is $\eps$-DP.
\end{lemma}
\begin{proof}
Follows directly from Theorem~\ref{thm:exp-mech}.
\end{proof}

\begin{definition}[$(t,\theta, \beta)$-good] 
\label{def:t-theta-beta-goodness}
Let $A$ be a randomized mechanism that outputs a real-valued variable $m$.
For a fixed dataset $d \in \calX$, $\graninput$, and $\beta \in (0,1)$, $m$ is \textit{$(t,\theta, \beta)$-good} with respect to target rank $k \in [1, 2, \ldots, n]$ if there exists a datapoint $a \in d$ such that with probability at least $1-\beta$,
\[ |m-a| \leq \theta ~\text{and}~ \left|\rank{d}{a}- k \right| \leq t, \]
where the probability is over the randomness of $A$.
\end{definition}

\begin{lemma}
\label{lem:exp-mech-output-goodness}
Let $d \in \calX^n$ be a dataset and $k \in [1, 2, \ldots, n]$ be a target rank.
Let $A_k^{\eps}(d)$ be the $\theta$-widened exponential mechanism with privacy parameter $\eps \in \posreals$, widening parameter $\theta \in \reals$, and range parameter $\range \subset \reals$, and let us assume that $d_{(k)} \in \range$.  For $\beta \in (0,1)$, let $t = \ln\left((|\range|-2\theta)/(2\theta\beta)\right) / \eps$. Then, the output of $A_k^{\eps}(d)$ is $(t, \theta, \beta)$-good.
\end{lemma}

\begin{proof}
We will upper bound the probability density of outputs that are \emph{not} $(t, \theta)$-good with respect to target rank $k$, ie. outputs $m$ for which there does not exist a datapoint $a \in d$ such that $|m-a| \leq \theta$ and $|\rank{d}{a} - k| \leq t$.

To do so, recall that the $\theta$-widened exponential mechanism assigns utility scores to dataset/output pairs according to Definition~\ref{def:exp-mech-pe}.
For a given $t$, let us define \textit{good outputs} as those having a utility score $\geq -t$, which are assigned unnormalized probability density of at least 1, and \textit{bad outputs} as those having a utility score $< -t$, which are assigned unnormalized probability density of at most $\exp(- t \eps)$.  By definition of the $\theta$-widened utility function, the
good outputs must span an interval of size at least $2\theta$ and the bad outputs span an interval of size at most $|\range|-2\theta$. Therefore, we have that
\begin{align*}
\Pr_{A} &\left( \nexists~ a \in d : |A_k^{\eps}(d) - a| \leq \theta ~\text{and}~ |\rank{d}{a} - k| \leq t \right) \\
&\leq \Pr_{A} \left( \nexists~ x \in \range : |A_k^{\eps}(d) - x| \leq \theta ~\text{and}~ |\rank{d}{x} - k| \leq t \right) \\
&=\Pr_{A} \left(\forall ~ x \in \range, |A_k^{\eps}(d) - x| > \theta ~\text{or}~ |\rank{d}{x} - k| > t \right) \\
&= \Pr_{A} \left( ~\forall x \in [A_k^{\eps}(d)- \theta, A_k^{\eps}(d) + \theta] \text{ we have } |\rank{d}{x} - k| > t \right) \\
&= \Pr_{A} \left( [A_k^{\eps}(d) - \theta,  A_k^{\eps}(d) + \theta] \subseteq \text{bad outputs} \right) \\
&\leq \frac{\Pr_{A} \left( [A_k^{\eps}(d) - \theta,  A_k^{\eps}(d) + \theta] \subseteq \text{bad outputs} \right)}{P_{A} \left( [A_k^{\eps}(d) - \theta,  A_k^{\eps}(d) + \theta] \subseteq \text{good outputs}\right)} \\
&\leq \frac{(| \range | - 2\theta) \exp(-t \eps)}{2\theta } 
\end{align*}
Setting this probability to be within $\beta$, we can solve for $t$ as
\begin{align*}
t \geq \frac{1}{\eps}\ln\left(\frac{|\range|-2\theta}{2\theta \beta}\right)
\end{align*}
The resulting bound is tight by virtue of the worst-case example.
\end{proof}

Next, we will consider two ways in which we can use $\EMPointEstimator$ to create a confidence interval for the median. The first ($\EMNaive$) consists of taking a union bound over the probability that the non-private interval fails to capture the true median, and the probability that the private interval fails to capture the non-private interval. The second ($\EM$) is a more nuanced approach that accounts for the noise due to sampling and noise due to privacy together.

\subsection{Union bound confidence interval}

The following pseudocode describes $\EMNaive$ or $\EM$ (depending on the boolean hyperparameter $\Naive$). The sub-algorithm $\ComputeEMTargets$ will be described in the next subsection, as it is only called when $\Naive = 0$.

\begin{singlespace}
\begin{algorithm}[h]
  \KwData{$\datainput$}
  \KwPrivacyparams{$\epsinput$}
  \KwHyperparams{$\alphainput, \Naive \in \{0,1\}$, $\rangeinput, \graninput, \privconf \in (0, \alpha)$}
  
  $t = \frac{1}{\eps} \cdot \ln \left( \frac{|\range| - 2\theta}{\theta \cdot \beta_2} \right) $
  
  \If{$\Naive$}{
    $\beta_1 = \frac{\alpha - \beta_2}{1 - \beta_2/2}$
    
    $\PPLeps{\alpha} = \lfloor \PNPL{\beta_1} - t \rfloor$ \tcp{ $[d_{(\PNPL{\beta_1})}, d_{(\PNPU{\beta_1})}]$ is the nonprivate ($1-\beta_1$)-confidence interval for the median (see Lemma~\ref{nonprivCI}).}
    
    $\PPUeps{\alpha} = \lceil \PNPU{\beta_1} + t \rceil$
  }
  \Else{
   $\PPLeps{\alpha}, \PPUeps{\alpha} = \ComputeEMTargets(n, \eps, \alpha, \range, \theta)$
  }
  
  $\privNPL{\alpha}(d) = \EMPointEstimator(d, \eps/2, ( \PPLeps{\alpha}, \range, \theta))- \theta$
  
  $\privNPU{\alpha}(d) = \EMPointEstimator(d, \eps/2 (\PPUeps{\alpha}, \range, \theta)) + \theta$
    
\Return $[\privNPL{\alpha}(d), \privNPU{\alpha}(d)]$
  \caption{$\EM(\Naive)$:  
  $\eps$-DP Algorithm}  \label{alg:exp-mech-ci}
\end{algorithm}
\end{singlespace}

First, we show that both $\EMNaive$ and $\EM$ are $\eps$-DP.
\begin{lemma}
\label{lem:wem-ci-privacy}
$\EM(\Naive)$ (Algorithm~\ref{alg:exp-mech-ci}) is $\eps$-DP.
\end{lemma}
\begin{proof}
The computations of $t, \beta_1, \PPLeps{\alpha}$, and $\PPUeps{\alpha}$ do not depend on the dataset $d$. Therefore, when analyzing the privacy loss, we simply need to consider the two calls the algorithm makes to \EMPointEstimator, each with privacy parameter $\eps/2$. By Lemma~\ref{lem:exp-mech-pe-privacy}, each of these algorithms is $\eps/2$-DP, so by composition, $\EM$
is $\eps$-DP.
\end{proof}

Then, we show that $\EMNaive$ produces a valid confidence interval.

\begin{lemma}\label{lem:exp-mech-ci-naive-validity}
Let dataset $d$ be drawn i.i.d. from a distribution $P \in \gooddist$ with population median $\median(P)$. Let $\eps > 0, \range \in \reals, \theta > 0$, and let us assume that $\median(P) \in \range$. For $\alpha \in (0,1)$ and $\privconf \in (0, \alpha)$, let $[\privNPL{\alpha}, \privNPU{\alpha}]$ be the output of $\EMNaive(d, \eps, (\alpha, \Naive =1, \range, \theta, \privconf))$.
Then, with probability at least $1-\alpha$,
\[
\median(P) \in [\privNPL{\alpha}, \privNPU{\alpha}],
\]
where the probability is over the randomness in both the dataset $d$ and the mechanism \EMNaive.
\end{lemma}
\begin{proof}
First, letting $\NPL{\nonprivconf} = d_{(\PNPL{\nonprivconf})}$ and $\NPU{\nonprivconf} = d_{( \PNPU{\nonprivconf})}$, for any $\nonprivconf \in (0,1)$ we have by Lemma~\ref{nonprivCI} that
\begin{align}
\label{eq:nonpriv-prob}
\Pr_{d} \left( \median(P) < \NPL{\nonprivconf} \right) =
\Pr_{d} \left( \rank{d}{\median(P)} < \PNPL{\nonprivconf} \right) \leq \nonprivconf/2.
\end{align}
Then, 
for $\PPLeps{\alpha} = \PNPL{\nonprivconf} - t$, and $\PPUeps{\alpha} = \PNPU{\nonprivconf} + t$, 
 $\EMNaive$ (Algorithm~\ref{alg:exp-mech-ci}) 
 outputs the interval $[\privNPL{\alpha}, \privNPU{\alpha}]$, where $\privNPL{\alpha} = A_{\PPLeps{\alpha}}^{\eps/2}(d) - \theta$ and 
$\privNPU{\alpha} = A_{\PPUeps{\alpha}}^{\eps/2}(d) + \theta$.
By Lemma~\ref{lem:exp-mech-output-goodness}, the output of $A_{\PPLeps{\alpha}}^{\eps/2}(d)$ is $(t, \theta, \privconf/2)$-good with respect to rank $\PPLeps{\alpha}$.
By Definition~\ref{def:t-theta-beta-goodness}, this means that
\begin{align}
\label{eq:privprob}
    \Pr_{A} \left( \privNPL{\alpha} > \NPL{\nonprivconf} \right) = 
    \Pr_{A}\Bigl( \rank{d}{A_{\PPLeps{\alpha}}^{\eps/2}(d) - \theta} > \PNPL{\beta_1}  \Bigr) \leq \beta_2/2 
\end{align}
Putting these together, we consider the lower endpoint of the interval $[\privNPL{\alpha}, \privNPU{\alpha}]$. We can upper bound the failure probability as follows.
\begin{align*}
    \Pr_{A, d} \left( \median(P)  < \privNPL{\alpha} \right) 
    &= \Pr_{A} \left( \median(P) < \privNPL{\alpha} \mid \median(P) < \NPL{\beta_1} \right) \cdot \Pr_{d} \left( \median(P) < \NPL{\nonprivconf} \right) \\
    &~~~~~~~~~~~~ + \Pr_{A} \left( \median(P) < \privNPL{\alpha} \mid \median(P) \geq \NPL{\nonprivconf} \right) \cdot \Pr_{d} \left( \median(P) \geq \NPL{\nonprivconf} \right) \\
    &\leq 1 \cdot \Pr_{d} \left( \median(P) < \NPL{\nonprivconf} \right) + \Pr_{A} \left( \privNPL{\alpha} \geq \NPL{\nonprivconf} \right) \cdot \Pr_{d} \left( \median(P) \geq \NPL{\nonprivconf} \right)  \\
    &\leq \nonprivconf/2 + (\privconf/2) \cdot (1 - \nonprivconf/2 )  \\ 
    &= \alpha/2
\end{align*}
where the second inequality follows from (\ref{eq:nonpriv-prob}) and (\ref{eq:privprob}), and the final equality follows from the definition of $\nonprivconf$ in Algorithm~\ref{alg:exp-mech-ci}.
A similar inequality holds for the upper endpoint of the interval, so a union bound gives the desired result.
\end{proof}

\subsection{Tighter Confidence Interval}

Next, we consider the more nuanced approach.
Let $P \in \gooddist$ be a population distribution function, where $\median = \median(P)$ is the population median. For a dataset $d = (d_1, \ldots, d_n)$ where $d_i$ is sampled i.i.d. from distribution $P$, let $\rank{d}{a}$ denote the rank of real value $a$ within dataset $d$. 
Let $A_k^{\eps}(d)$ be the output of the $\theta$-widened exponential mechanism on dataset $d$ that estimates the value at rank $k$. For a given $k_L, k_U$, we would like to control the probability that the interval $[ A(d, k_L) - \theta, A(d, k_U) + \theta]$ fails to contain the true median $\median$. In particular, for $\alpha \in (0, 1)$, we would like to find the target ranks $k_L$ and $k_U$ closest to $n/2$ such that
\begin{align*}
    \Pr_{A, d} \left( A^{\eps/2}_{k_L}(d) - \theta > \median \right) \leq \alpha/2 \\
    \Pr_{A, d} \left(A^{\eps/2}_{k_U}(d) + \theta < \median \right) \leq \alpha/2
\end{align*}

\begin{algorithm}[ht]
  \KwInput{$\ninput, \epsinput, \alphainput, \rangeinput, \graninput$}
  
  \For{$k_L \in \mathbb{N}, 1 \leq k_L \leq n/2 $}{
  
  $p_{k_L}= C_{\text{Bin}}(k_L -1) + \sum_{m = k_L}^{n} C'_{\text{Bin}}(m) \cdot \frac{(|\range| - 2\theta) \exp(- (m-k_L) \cdot \eps / 2 )}{2\theta}$
  }
  $\PPLeps{\alpha} = \max_{k_L \in \mathbb{N}, 1 \leq k_L < \lceil n/2 \rceil} \{k_L : p_{k_L} \leq \alpha/2 \}$
  
  \For{$k_U \in \mathbb{N}, n/2 \leq k_U < n$}{
  
  $p_{k_U} = \sum_{m = 1}^{k_U} C'_{\text{Bin}}(m) \cdot \frac{(|\range| - 2\theta) \exp(- (k_U-m) \cdot \eps / 2 )}{2\theta} + (1 - C_{\text{Bin}}(k_U +1))$
  }
  $\PPUeps{\alpha} = \min_{k_U \in \mathbb{N}, \lceil n/2 \rceil \leq k_U < n} \{j: p_{k_U} \leq \alpha/2 \}$
  
  \Return $\PPLeps{\alpha}, \PPUeps{\alpha}$
  
  \caption{\texttt{Compute\EM Targets}}  \label{alg:exp-mech-compute-targets}
\end{algorithm}

In Algorithm~\ref{alg:exp-mech-compute-targets} ($\ComputeEMTargets$), we find these target ranks by first computing the probabilities above for all possible $k_L$ and $k_U$'s, and then by numerically searching for the target ranks closest to $n/2$ such that the probabilities above are both within $\alpha/2$.\footnote{This search can be implemented more efficiently by noting that $k_L$ is greater than or equal to $\lfloor \PNPL{\beta_1} - t \rfloor$ as defined in Algorithm~\ref{alg:exp-mech-ci}, and similarly $k_U$ is less than or equal to $\lceil \PNPU{\beta_1} + t \rceil$.} The following lemma characterizes these probabilities.

\begin{lemma}
\label{lem:exp-mech-prob-given-rank}
Let $P \in \gooddist$ be a population distribution function, where $\median = \median(P)$ is the population median. For a dataset $d = (d_1, \ldots, d_n)$ where $d_i$ is sampled iid. from distribution $P$, let $\rank{d}{a}$ denote the rank of real value $a$ within dataset $d$. 
Let $k_L, k_U \in [1, 2, \ldots, n]$ be target ranks, and let $A_{k_L}^{\eps}(d)$ and $A_{k_U}^{\eps}(d)$ be $\theta$-widened exponential mechanisms on dataset $d$ that estimate the value at rank $k_L$ and $k_U$, respectively. Let $\CDFbin$ and $\PDFbin$ be the CDF and PDF of the binomial random variable \Bin(n, 1/2). Then, 
\begin{align*}
   \Pr_{A, d} \left( A^{\eps/2}_{k_L}(d) - \theta > \median \right)
   &\leq \CDFbin(k_L) + \sum_{m = k_L + 1}^{m = n}  
     \PDFbin(m) \cdot \frac{(|\range|-2\theta) \exp \left( -(m - k_L) \eps /2 \right) }{2\theta}  \\
   \Pr_{A, d} \left( A^{\eps/2}_{k_U}(d) + \theta < \median \right)
    &\leq (1-\CDFbin(k_U + 1)) + \sum_{m = 1}^{m = k_U}  
     \PDFbin(m) \cdot \frac{(|\range|-2\theta) \exp \left( -(k_U - m) \eps /2 \right) }{2\theta} 
\end{align*}
\end{lemma}
\begin{proof}
For simplicity, we consider just the first statement pertaining to the lower endpoint of the interval. We can split up the probability into two cases: first, when $\rank{d}{\median} < k_L$, and second, when $\rank{d}{\median} \geq k_L$.
\begin{align*}
    \Pr_{A, d} \left( A^{\eps/2}_{k_L}(d) - \theta > \median \right) 
    &= \sum_{m = 1}^{m = k_L-1}  
    \Pr_{A} \left( A^{\eps/2}_{k_L}(d) - \theta > \median \mid \rank{d}{\median} = m \right) \cdot \Pr_{d}(\rank{d}{\median} = m) \\
    &+ \sum_{m = k_L}^{m = n}  
    \Pr_{A} \left( A^{\eps/2}_{k_L}(d) - \theta > \median \mid \rank{d}{\median} = m \right) \cdot 
    \Pr_{d}(\rank{d}{\median} = m)
\end{align*}
For the first case, where $\rank{d}{\median} < k_L$, we simply upper bound the first probability in the summation by $1$ and note that the random variable $\indicator_{\rank{d}{\median} = m}$ follows a binomial distribution.
\begin{align*}
    \sum_{m = 1}^{m = k_L-1}  
    &\Pr_{A} \left( A^{\eps/2}_{k_L}(d) - \theta > \median \mid \rank{d}{\median} = m \right) \cdot \Pr_{d}(\rank{d}{\median} = m) 
     \leq \CDFbin(k_L-1)
\end{align*}
In the second case, we first observe that the probability of any real value $a$ being greater than $\median$ is monotonically increasing in $a$, which gives
\begin{align*}
    \Pr_{A} \left( A^{\eps/2}_{k_L}(d) - \theta > \median \mid \rank{d}{\median} = m \right) 
    &\leq \Pr_{A} \left( A^{\eps/2}_{k_L}(d) > \median \mid \rank{d}{\median} = m \right)  \\
    &= \Pr_{A} \left( \rank{d}{A^{\eps/2}_{k_L}(d)} > m  \right) \\
    &\leq \Pr_{A} \left( \lvert \rank{d}{A^{\eps/2}_{k_L}(d)} - k_L \rvert > m - k_L \right) \\
    &\leq \frac{(|\range|-2\theta) \exp \left( -(m - k_L) \eps/2 \right) }{2\theta},
\end{align*}
where the last inequality follows from Lemma~\ref{lem:exp-mech-output-goodness}. Therefore, we have that
\begin{align*}
   \Pr_{A, d} \left( A^{\eps/2}_{k_L}(d) - \theta > \median \right)
   &\leq \CDFbin(k_L-1) + \sum_{m = k_L}^{m = n}  
     \PDFbin(m) \cdot \frac{(|\range|-2\theta) \exp \left( -(m - k_L) \eps/2 \right) }{2\theta}
\end{align*}
A similar result holds for $\Pr_{A, d} \left( A^{\eps/2}_{k_U}(d)  + \theta < \median \right)$. 
\end{proof}

Validity of the $\EM$ confidence interval then follows directly from the selection of the target ranks.

\begin{lemma}
\label{lem:exp-mech-ci-validity}
Let dataset $d$ be drawn i.i.d. from a distribution $P \in \gooddist$ with population median $\median(P)$. For (hyper)parameters $\eps > 0, \range \in \reals, \theta > 0$, and $\alpha \in (0,1)$, let $[\privNPL{\alpha}, \privNPU{\alpha}]$ be the output of $\EM(d, \eps, (\alpha, \Naive = 0, \range, \theta, \cdot))$.
If $\median(P) \in \range$, then with probability at least $1-\alpha$,
\[
\median(P) \in [\privNPL{\alpha}, \privNPU{\alpha}],
\]
where the probability is over the randomness in both the dataset $d$ and the mechanism \EM.
\end{lemma}
\begin{proof}
\texttt{Compute\EM Targets} (Algorithm~\ref{alg:exp-mech-compute-targets}, relying on Lemma~\ref{lem:exp-mech-prob-given-rank}) returns target ranks $\PPLeps{\alpha}$ and $\PPUeps{\alpha}$ such that $Pr_{A, d} \left( A^{\eps/2}_{\PPLeps{\alpha}}(d) - \theta > \median \right) \leq \alpha/2$ and $\Pr_{A, d} \left(A^{\eps/2}_{\PPUeps{\alpha}}(d) + \theta < \median \right) \leq \alpha/2$.
$\EM$ (Algorithm~\ref{alg:exp-mech-ci}) then sets $\privNPL{\alpha}(d) = A_{\PPL{\alpha}}^{\eps/2}(d) - \theta$ and $\privNPU{\alpha}(d) = A_{\PPU{\alpha}}^{\eps/2}(d) + \theta$. The result follows from a union bound.
\end{proof}

\section{Details: Confidence intervals based on noisy binary search, \noisyBS}\label{online supplement:BS}
\label{sec:bin-search-details}
\begin{algorithm}[h!]
  \KwData{$\datainput$}
  \KwPrivacyparams{$\rhoinput$}
  \KwHyperparams{$\alphainput$, $\rangeinput, \graninput, \betasplit$, \texttt{LB}, \texttt{UB} $\in (0,1)$}
  $\nonprivconf = \betasplit \alpha$
  
  $\privconf = \frac{\alpha - \nonprivconf}{1 - \nonprivconf/2}$
  
  $n = |d|$
  
  $m = \log( (\dataub - \datalb)/\theta )\;\;\;$ \tcp{number of steps required to get to desired granularity} 
  
  $\rhoperstep = \rho / (2m) $
  
  $\betaperstep = \privconf / (2m)$

  $t^{\rhoperstep}_{\betaperstep} = \sqrt{\frac{\log(1/\betaperstep)}{\rhoperstep n}}$ 
  
  $q_L = \min\{\texttt{LB}, \PNPL{\nonprivconf}/n -t^{\rhoperstep}_{\betaperstep}\}$
  
  $q_U = \max\{\texttt{UB}, \PNPU{\nonprivconf}/n +t^{\rhoperstep}_{\betaperstep} \}$
  
  \texttt{noisy-counts-lower} = \texttt{GetNoisyCounts}($d, \rho/2, (n, \privconf/2, q_L, q_L, q_U, \emptyset, \range, \rhoperstep, \betaperstep)$)
  
  \texttt{noisy-counts-upper} = \texttt{GetNoisyCounts}($d, \rho/2, (n, \privconf/2, q_U, q_L, q_U, \texttt{noisy-counts-lower}, \range, \rhoperstep, \betaperstep)$)
  
  \texttt{noisy-counts} =  \texttt{noisy-counts-lower}$\;\cup\;$\texttt{noisy-counts-upper} 

  \Return \texttt{PostProcessUnion}(\texttt{noisy-counts}, $n$, $\PNPL{\nonprivconf}, \PNPU{\nonprivconf}$, $\privconf$)
  
  \caption{$\noisyBS$:  
  $\rho$-CDP Algorithm}  \label{alg:bs}
\end{algorithm}

\begin{algorithm}[ht!]
    \KwData{$\datainput$}
    \KwPrivacyparams{$\rhoinput$}
    \KwHyperparams{$\ninput, \privconf \in (0,1), \targetquantile, q_L, q_U \in (0,1)$, \texttt{prev-queries}, $\rangeinput, \rhoperstep, \betaperstep$}

    \tcp{\texttt{prev-queries} = $\{(x, r_x, \sigma_x)\}$ is a collection of noisy measurements where $x\in[r_l, r_u]$ and $r_x = \rank{d}{x}+\mathcal{N}(0,\sigma_x^2)$}

     \texttt{lower} = $\datalb$, \texttt{upper} = $\dataub$ \tcp{The initial search space is the entire range.}

     $\rhoinit = \rhoperstep/10$, $\betainit = \betaperstep/10$ \tcp{Budget for initial measurement at every query point; can be arbitrarily small.}
     
    $t = 0$ \tcp{Counter for number of query points.}
          
     $\rhoused = 0$ \tcp{Counter for used privacy budget.}

     \While{$\rhoused + \rhoinit \leq \rho$}{
        
          $x_t = (\texttt{lower} + \texttt{upper})/2$ \tcp{Query point}

          \texttt{est-good-enough} = \texttt{False}\\
          
          \If {\rm there exists $r_{x_t}$ and $\sigma_{x_t}$ such that $(x_t, r_{x_t}, \sigma_{x_t})\in$\texttt{prev-queries}}
          {
              $\texttt{avg-noisy-count}_t = r_{x_t}$,
              $\texttt{avg-var}_t = \sigma_{x_t}$
              
              \texttt{est-good-enough} = \texttt{True}
          }

        \texttt{numMeasurements} = 0, $\rhousedthisstep = 0$, $\betausedthisstep = 0$
              
        \While{\rm \texttt{est-good-enough} = \texttt{False} and $\rhousedthisstep + \rhoinit \leq \rhoperstep$ and $\rhoused + \rhousedthisstep + \rhoinit \leq \rho$}{

                \texttt{numMeasurements} = \texttt{numMeasurements}+1, $\rhousedthisstep = \rhousedthisstep + \rhoinit$, $\betausedthisstep = \betausedthisstep + \betainit$
                
                $\texttt{noisy-count}_{\texttt{numMeasurements}}\sim \calN(\rank{d}{x_t}, 1/2  \rhoinit)$

                $\texttt{var}_{\texttt{numMeasurements}}=1/2\rhoinit$

                $\texttt{avg-noisy-count}_t = \sum_{k=1}^{\texttt{numMeasurements}} \texttt{noisy-count}_k / \texttt{numMeasurements}$
                
               $\texttt{avg-var}_t = (\sum_{k=1}^{\texttt{numMeasurements}}\texttt{var}_k)/\texttt{numMeasurements}$
                
                $K = \sqrt{\texttt{avg-var}} \cdot \Phi^{-1}(1-\betausedthisstep)$ \tcp{$\Phi$ is the standard normal distribution function}
                
                \If{\rm ($\texttt{avg-noisy-count}_t - K > q_U\cdot n$ or $\texttt{avg-noisy-count}_t + K < q_U\cdot n$) and ($\texttt{avg-noisy-count}_t - K > q_L\cdot n$ or $\texttt{avg-noisy-count}_t + K < q_L\cdot n$) }{
                    \texttt{est-good-enough} = \texttt{True}
                }
              }
          
          \If {\rm $\texttt{avg-noisy-count}_t  < n \cdot \targetquantile$} {
            \texttt{lower}$ = x_t$
            
          } 
          \Else{
            \texttt{upper} $= x_t$
          }
          $\rhoused = \rhoused + \rhousedthisstep$  
          
          $t = t + 1$
          
     }
     \Return $(x_1, \texttt{avg-noisy-count}_1, \texttt{avg-var}_1) , (x_2, \texttt{avg-noisy-count}_2, \texttt{avg-var}_2), \cdots$
    
  \caption{$\texttt{GetNoisyCounts}$: 
  $\rho$-CDP Algorithm}  \label{alg:ada-bs}
\end{algorithm}

\begin{algorithm}[h!]
\KwInput{$(x_1, \texttt{ns}_1, \texttt{var}_1) , (x_2, \texttt{ns}_2, \texttt{var}_2), \cdots, (x_T, \texttt{ns}_T, \texttt{var}_T), \ninput, q_L, q_U, \privconf \in (0,1)$} 

\For{$t\in[T]$}{
$R_t= \sqrt{\texttt{var}_t} \Phi^{-1}(1 - 2T/\privconf)$ 
\tcp{$\Phi$ is the standard normal distribution function}
$L_t=\texttt{ns}_t+R_t$\\
$U_t=\texttt{ns}_t-R_t$}
$l = \max\{x_t\;|\; \forall t'<t, L_t < q_L\}$\\
$u = \min\{x_t\;|\; \forall t'>t, U_t> q_U\}$\\

\Return $[l,u]$
\caption{\texttt{PostProcessUnion}}
\label{alg:consBS}
\end{algorithm}

In this section we provide the algorithmic details and validity and privacy proofs for $\noisyBS$.

Given a dataset $d \in \calX^n$, and target quantile $\targetquantile \in (0,1)$, an initial range $\range$ and granularity $\theta$, $\noisyBS$ (outlined in Algorithm~\ref{alg:bs}) consists of two steps. In the first step,
the mechanism $\texttt{GetNoisyCounts}$ uses noisy measurements of the empirical CDF to search for $d_{n(\targetquantile)}$ using binary search. This noisy binary search step is designed so that with high probability it moves in the right direction at each step, however there is some probability of making a wrong move, hence we need to perform a post-processing step that takes the noisy measurements as input and returns a valid confidence interval.

Pseudo-code for the first step, which we will call \texttt{GetNoisyCounts} is given in Algorithm~\ref{alg:ada-bs}. Let us focus on finding the lower limit of the confidence interval. Given a target quantile $\targetquantile$, this algorithm iterates reduces the search domain by querying the rank of the mid-point $x_t$ of the range. If it is confident that the mid-point is to left of the target quantile then it cuts the domain in half and only keep the right half (similarly if it is confident that the mid-point is to the right, it keep the left half of the range). It continues this process until the entire privacy budget is consumed. 

At each iteration we use a portion of the privacy budget $\rho$ to release the noisy rank of the query point $x_t$. The total privacy budget consumed by the algorithm is the sum of the privacy budget consumed by each step (Lemma~\ref{composition}). One option for allocating the privacy budget is to decide in advance the number of iterations and divide the privacy budget by the number of iterations to obtain a \emph{per step} privacy budget. However, we can actually improve on this approach by noticing that if $|\rank{d}{x_t}-\targetquantile \cdot n|$ is large then we can tolerate a lot of noise in our estimate of $\rank{d}{x_t}$ and still determine with high confidence whether $\rank{d}{x_t}> \targetquantile \cdot n$ or $\rank{d}{x_t}<\targetquantile \cdot n$. Thus, we may be able to only allocate a very small amount of privacy budget to some steps. For a given query point $x_t$ we do not know a priori how large $|\rank{d}{x_t}-\targetquantile \cdot n|$ is, and hence how much noise the query can handle. Thus, we start by adding a large amount of noise (using only a small amount of the privacy budget $\rhoinit$) to $\rank{d}{x_t}$. If this noisy estimate is far enough from $\targetquantile$ that we can confidently determine which direction to continue with the binary search, then we move and this step has only consumed $\rhoinit$ privacy budget. Otherwise, we take another noisy measurement of $\rank{d}{x_t}$ and average the two together. This produces a less noisy estimate, and consumes $2\rhoinit$. We continue in this way until either the variance of the estimate is low enough that we can confidently move, or this step has consumed the maximum amount of privacy budget per step $\rhoperstep$, and we move in the more likely direction. While we search for the left and right hand limit of the confidence interval separately, in many settings the early query points of the binary search will be same for both. Thus, we can improve accuracy by not repeating these noisy queries. This is why we pass \texttt{prev-queries} into \texttt{GetNoisyCounts}.

The next step is processing the noisy counts to obtain a valid confidence interval. Pseudo-code is given in Algorithm~\ref{alg:consBS} and a validity proof is given in Lemma~\ref{lem:noisy-bs-validity}. This step does not consume additional privacy budget since it is simply post-processing on top of the $\rho$-CDP output of \texttt{GetNoisyCounts}.

\begin{lemma} 
\label{lem:noisy-bs-privacy}
Mechanism $\noisyBS$ (Algorithm~\ref{alg:bs}) is $\rho$-CDP.
\end{lemma}
\begin{proof}
The lemma follows immediately from Lemma~\ref{composition} in the main text. The privacy budget $\rho$ is divided between the two calls to $\texttt{GetNoisyCounts}$, which each use privacy budget $\rho/2$. By \citep[Proposition 1.6]{BunS16}, each new noisy measurement $\texttt{noisy-count}_{\texttt{numMeasurements}}$ is $\rhoinit$-CDP, so $\rho_t$ and thus $\rho_{used}$ accurately capture the privacy budget consumed per step, and in total at any point during the algorithms run. 
\end{proof}

\begin{lemma}
\label{lem:noisy-bs-validity}
Let dataset $d$ be drawn i.i.d. from a distribution $P \in \gooddist$ with population median $\median(P)$. Given (hyper)parameters $\rangeinput, \graninput, \betasplitinput$, failure rate $\alphainput$ and privacy parameter $\rhoinput$, let $[\privNPL{\alpha}, \privNPU{\alpha}] = \noisyBS(d, \rho, (\alpha, \range, \theta, \betasplit, 0.5, 0.5))$. If $\median(P) \in \range$, then with probability at least $1-\alpha$,
\[
    \median(P) \in [\privNPL{\alpha}, \privNPU{\alpha}],
\]
where the probability is over the randomness in both the dataset $d$ and the mechanism $\noisyBS$.
\end{lemma}
\begin{proof}
While $\texttt{GetNoisyCounts}$ is designed to ensure the final output is close to the right quantile, the validity of the confidence interval really comes from the post-processing function \texttt{PostProcessUnion}. We have that 
\begin{align*}
    \Pr_{A, d}(\median(P)< \privNPL{\alpha}) 
    &\le \Pr_{A}( \median(P)< \privNPL{\alpha} \mid \rank{d}{\median(P)} < \PNPL{\nonprivconf}) \cdot \Pr_{d}(\rank{d}{\median(P)} < \PNPL{\nonprivconf}) \\
    &~~~~~~~~\;\;\;\;\;\;\;\;\;\;\; + 
    \Pr_{A}( \median(P)< \privNPL{\alpha} \mid \rank{d}{\median(P)} \geq \PNPL{\nonprivconf}) \cdot \Pr_{d}(\rank{d}{\median(P)} \geq \PNPL{\nonprivconf})
    \\
    &\leq \Pr_{d}(\rank{d}{\median(P)}<\PNPL{\nonprivconf})+\Pr_{A}(\median(P)< \privNPL{\alpha}\;|\;\rank{d}{\median(P)}\ge\PNPL{\nonprivconf}) \cdot \Pr_{d} (\rank{d}{\median(P)} \geq \PNPL{\nonprivconf}) \\
    &\le \nonprivconf/2+\Pr_{A}(\PNPL{\nonprivconf}<\rank{d}{\privNPL{\alpha}}) \cdot (1-\nonprivconf/2),
\end{align*}
where the subscript $A$ denotes that the probability is over the randomness of the mechanism $\noisyBS$.
Now, let $\texttt{noisy-counts}=(x_1, \texttt{ns}_1, \texttt{var}_1) , (x_2, \texttt{ns}_2, \texttt{var}_2), \cdots, (x_T, \texttt{ns}_T, \texttt{var}_T)$ be the concatenated outputs of the two runs of \texttt{GetNoisyCounts} in \texttt{NoisyBinSearch}; these are inputs to \texttt{PostProcessUnion}. We have the guarantee that for all $t \in T$, $\texttt{ns}_t=\rank{d}{x_t}+\mathcal{N}(0,\texttt{var}_t)$, and therefore with probability $1-\privconf/2$, for all $x\in\{x_1, \cdots, x_T\}$, $|\texttt{ns}_t-\rank{d}{x_t}|\le R_t$. Now, if $\PNPL{\nonprivconf}<\rank{d}{\privNPL{\alpha}}$ implies that there exists $x_t$ such that $x_t<\median(P)$ but  $\texttt{ns}_t\ge\PNPL{\nonprivconf}+R_t$. But this would imply that $|\texttt{ns}_t-\rank{d}{x_t}|\ge R_t$. Therefore, $\Pr(\median(P)< \privNPL{\alpha})\le\nonprivconf/2+\privconf/2 (1-\nonprivconf/2) =\alpha/2$. Similarly we can argue that $\Pr(\median(P)> \privNPU{\alpha})\le\nonprivconf/2+\privconf/2 (1-\nonprivconf/2) =\alpha/2$, so we are done.
\end{proof}

\section{Details: Confidence intervals based on CDF estimator \newline \CDF}\label{online supplement:CDF}
\label{sec:cdf-details}
Instead of generating a series of count queries that vary asymmetrically across the data set as in the binary search process, we could generate the entire empirical CDF of the data and use a similar approach to the binary search algorithm to generate confidence intervals. In our setting, where the number of queries are limited by the privacy budget, if the empirical CDF is of separate interest to a researcher, this is a particularly compelling method. There are many methods for generating a DP CDF \citep{Diakonikolas:2015, Brunel:2020}. We focus on a tree-based mechanism introduced in \citet{Li:2010,Dwork:2010}, and \citet{ Chan:2011} and refined in \citet{Honaker:2015}. We rely on \citet{Honaker:2015}'s algorithm.

\subsection{A Tree-Based Approach to Differentially Private CDFs}
Note that a simple way to estimate the CDF would be to create a differentially private histogram with a set bin size
and sum the bins to the left of a point of interest to generate an estimate of the CDF at that point. However, this means summing multiple noisy counts together, so the accuracy will diminish the more bins that you have to sum together. To avoid summing too many points together, one might instead use a tree-based approach, which uses a tree of multiple histograms that have multiple levels of granularity: we denote this as $\dptree;\dptreespace{\depth}$, a binary tree of counts in $\mathbb{N}$ with $\depth$ levels, $L_1, \ldots, L_{\depth}$, as depicted in Figure ~\ref{fig:cdf-tree} and described in detail in  Algorithm~\ref{alg:cdf-basic}. Note that to get a CDF estimate at point $\lowerrange + 2\theta$ one need only look at $t_{10}$, and to get the estimate at $\lowerrange+3\theta$, one need only sum $t_{10}$ and $t_{110}$.

\begin{singlespace}
\begin{algorithm}[H]
\caption{\texttt{DPTree}: $\rho$-CDP Histogram Tree Algorithm} 
\label{alg:cdf-basic}
\KwData{$\datainput$}
\KwPrivacyparams{$\rhoinput$}
\KwHyperparams{$\rangeinput$, $\depth \in \naturals$}

$n = |d|$

Let $\dptree \in \dptreespace{\depth}$ be a binary tree with $\depth$ levels, $L_1, \ldots, L_{\depth}$

\For{$j \in $ [\depth]} {
    
    Let $\text{bin}_1, \ldots, \text{bin}_{2^j}$ be $2^j$ equally-sized partitions of the range $\range$.
    
    Generate histogram $\texttt{hist} = \{\# i : d_i \in \text{bin}_b, ~~ 1 \leq b \leq 2^j\} \in \naturals^{2^j}$
    
    Add noise sampled from $\mathcal{N}(0, 2 \depth /\rho)$ to each element of $\texttt{hist}$.
    
    Set $L_j = \texttt{hist}$
  }
\Return $\dptree$
\end{algorithm}
\end{singlespace}
\begin{figure}
\begin{centering}
\tikzset{every picture/.style={line width=0.75pt}} 

\begin{tikzpicture}[x=0.75pt,y=0.75pt,yscale=-1,xscale=1]

\draw   (323.71,26.03) .. controls (318.61,26.02) and (314.48,21.88) .. (314.49,16.77) .. controls (314.5,11.67) and (318.64,7.54) .. (323.75,7.55) .. controls (328.85,7.56) and (332.98,11.7) .. (332.97,16.81) .. controls (332.96,21.91) and (328.81,26.04) .. (323.71,26.03) -- cycle ;
\draw   (243.71,86.03) .. controls (238.61,86.02) and (234.48,81.88) .. (234.49,76.77) .. controls (234.5,71.67) and (238.64,67.54) .. (243.75,67.55) .. controls (248.85,67.56) and (252.98,71.7) .. (252.97,76.81) .. controls (252.96,81.91) and (248.81,86.04) .. (243.71,86.03) -- cycle ;
\draw   (403.71,85.03) .. controls (398.61,85.02) and (394.48,80.88) .. (394.49,75.77) .. controls (394.5,70.67) and (398.64,66.54) .. (403.75,66.55) .. controls (408.85,66.56) and (412.98,70.7) .. (412.97,75.81) .. controls (412.96,80.91) and (408.81,85.04) .. (403.71,85.03) -- cycle ;
\draw   (364.71,146.03) .. controls (359.61,146.02) and (355.48,141.88) .. (355.49,136.77) .. controls (355.5,131.67) and (359.64,127.54) .. (364.75,127.55) .. controls (369.85,127.56) and (373.98,131.7) .. (373.97,136.81) .. controls (373.96,141.91) and (369.81,146.04) .. (364.71,146.03) -- cycle ;
\draw   (442.71,146.03) .. controls (437.61,146.02) and (433.48,141.88) .. (433.49,136.77) .. controls (433.5,131.67) and (437.64,127.54) .. (442.75,127.55) .. controls (447.85,127.56) and (451.98,131.7) .. (451.97,136.81) .. controls (451.96,141.91) and (447.81,146.04) .. (442.71,146.03) -- cycle ;
\draw   (286.71,145.03) .. controls (281.61,145.02) and (277.48,140.88) .. (277.49,135.77) .. controls (277.5,130.67) and (281.64,126.54) .. (286.75,126.55) .. controls (291.85,126.56) and (295.98,130.7) .. (295.97,135.81) .. controls (295.96,140.91) and (291.81,145.04) .. (286.71,145.03) -- cycle ;
\draw   (203.71,146.03) .. controls (198.61,146.02) and (194.48,141.88) .. (194.49,136.77) .. controls (194.5,131.67) and (198.64,127.54) .. (203.75,127.55) .. controls (208.85,127.56) and (212.98,131.7) .. (212.97,136.81) .. controls (212.96,141.91) and (208.81,146.04) .. (203.71,146.03) -- cycle ;
\draw    (236.5,85) -- (203.75,127.55) ;
\draw    (396.5,83) -- (364.75,127.55) ;
\draw    (252.5,84) -- (280.75,126.55) ;
\draw    (412.5,82) -- (442.75,127.55) ;
\draw    (331.97,21.81) -- (403.75,66.55) ;
\draw    (314.5,22) -- (243.75,67.55) ;
\draw    (152.5,187) -- (495.5,188) ;
\draw  [dash pattern={on 0.84pt off 2.51pt}]  (203.71,146.03) -- (204.5,189) ;
\draw  [dash pattern={on 0.84pt off 2.51pt}]  (286.71,145.03) -- (287.5,188) ;
\draw  [dash pattern={on 0.84pt off 2.51pt}]  (364.71,146.03) -- (365.5,189) ;
\draw  [dash pattern={on 0.84pt off 2.51pt}]  (442.71,146.03) -- (443.5,189) ;

\draw (163,126.4) node [anchor=north west][inner sep=0.75pt]    {$t_{100}$};
\draw (250,127.4) node [anchor=north west][inner sep=0.75pt]    {$t_{101}$};
\draw (326,129.4) node [anchor=north west][inner sep=0.75pt]    {$t_{110}$};
\draw (406,127.4) node [anchor=north west][inner sep=0.75pt]    {$t_{111}$};
\draw (206,64.4) node [anchor=north west][inner sep=0.75pt]    {$t_{10}$};
\draw (369,64.4) node [anchor=north west][inner sep=0.75pt]    {$t_{11}$};
\draw (287,8.4) node [anchor=north west][inner sep=0.75pt]    {$t_{1}$};
\draw (147,193.4) node [anchor=north west][inner sep=0.75pt]  [font=\footnotesize]  {$r_{l}$};
\draw (188,193.4) node [anchor=north west][inner sep=0.75pt]  [font=\footnotesize]  {$r_{l} +\theta $};
\draw (269,194.4) node [anchor=north west][inner sep=0.75pt]  [font=\footnotesize]  {$r_{l} +2\theta $};
\draw (348,194.4) node [anchor=north west][inner sep=0.75pt]  [font=\footnotesize]  {$r_{l} +3\theta $};
\draw (438,193.4) node [anchor=north west][inner sep=0.75pt]  [font=\footnotesize]  {$r_{u}$};

\end{tikzpicture}
\caption{Tree representation of the CDF algorithm, where the counts $t_{100}, \ldots, t_{111}$ at the leaves of the tree represent the counts for histogram bins with width $\theta$, their parent nodes represent a histogram with bin width $2\theta$, and so on.}
\label{fig:cdf-tree}
\end{centering}
\end{figure}
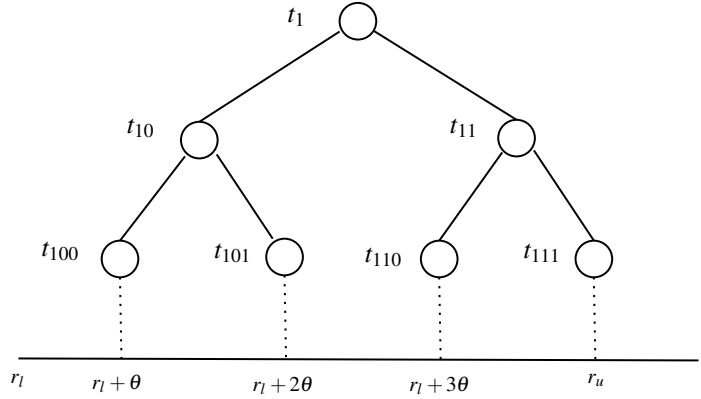

 This concept may be further improved through post-processing by noting that the different noisy counts at different levels of granularity ought to sum to the same values. \citet{Honaker:2015} proposes an optimal method to leverage this information, which we use here. 
Following Honaker's notation, label each node in the tree in binary, so the root node is $1$, the left child of the root is node $10$ and its right child is node $11$, and so on as in Figure \ref{fig:cdf-tree}. Let the count at node $i$ be $t_i$, and let the leaf nodes of the tree in Figure \ref{fig:cdf-tree} represent a histogram with granularity $\theta,$ so that $t_{100}$ is the number of data-points that lie between the left of the histogram's range, $\lowerrange$, to $\lowerrange + \theta$, $t_{101}$ is the number of data-points between $\lowerrange + \theta$ and $\lowerrange + 2\theta$, and so on. Let the histogram at any level have granularity twice that of its children. Note that if the counts were perfectly accurate, it then follows that the parent node's value should be equal to the sum of its children nodes' value, e.g. $t_{10} = t_{100} + t_{101}$ in Figure \ref{fig:cdf-tree}. Similarly, a child nodes' value in a perfectly accurate tree should be equal to its parent node, minus the adjacent child node's count, e.g. $t_{100} = t_{10} - t_{101}$. Let $i\Lambda 1$ be the neighboring child of node i (e.g. $11 \Phi 1 = t_{10}$), and let $i \Phi 1$ be the parent of node i.

Honaker leverages these relationships between the counts at each of the nodes to generate an optimal tree in a recursive process. First, the counts at each node using the children node are incorporated into a weighted estimate of each of the counts, where the weight of the child counts at node $i$ is denoted $w_i^-$ and the optimized count ``from below" is denoted $t_i^-$. These are then recursively combined with the counts ``from above" (i.e. the count using the parent and the adjacent child) to generate $t_i^+$, which weights the count from above with $w_i^+$. Finally, the two are combined to get a fully efficient estimate of each of the nodes, with weight $w$ and efficient count $t_i^*$.\footnote{See Honaker equations 10, 11, and 13 for the full statement of the values of these weights and counts.} In our setting, we only consider trees with an equal amount of noise on each of the nodes in the tree. This results in several simplifications of the equations in Honaker 2015. 

\begin{lemma}
\label{lem:cdf-weights}
 If each noisy count in the tree has noise with variance $s$ added to it, then the weight vectors $w^-$ and $w^+$ only need to be calculated once per level of the tree, the weight at node $i$ for the summation from below is 
$$ w_i^- = \frac{2w_{2i}^-}{2w_{2i}^- + 1},$$
the weight at node $i$ for the summation from above is
$$ w_i^+ = \frac{1}{1 + (w_{i \Phi 1}^+ + w_{i \Lambda 1}^-)^{-1}}, $$
and the optimal weight $w$ is equal to $w^+$.
\end{lemma}

Note that if each node has noise with variance $s$ added to it, then $\forall i,$ the variance at node $i, \sigma_i=s$, so $\sigma^-(t_{2i}^-) = \sigma^-(t_{2i+1}^-)$. Then, the weight at node $i$, $w_i$ may be recursively defined as 
\begin{align*}
w_i^- &= \frac{s^{-2}}{s^{-2} + (1/2)(\sigma_{2i}^-)^{-2}}\\
	&= \frac{s^{-2}}{s^{-2} + (1/2)\left(s\sqrt{w_{2i}^-}\right)^{-2}}\\
	&= \frac{s^{-2}}{s^{-2}(1+(1/2)(w_{2i}^-)^{-1})}\\
	&= \frac{2w_{2i}^-}{2w_{2i}^- + 1},
\end{align*}
where the first line comes from Honaker's definition in his Equation 10. Similarly, from Honaker's Equation 11 it follows that when the noise at each node has the same variance,
\begin{align*}
w_i^+ &= \frac{(\sigma_i)^{-2}}{(\sigma_i)^{-2} + \left[(\sigma^+_{i\Phi 1})^2 + (\sigma^-_{i \Lambda 1})^2\right]^{-1}} \\
    &= \frac{1}{1 + (w_{i \Phi 1}^+ + w_{i \Lambda 1}^-)^{-1}}, \\
\end{align*}
which is equivalent to then the expression for the optimal weights in this setting.

Once there is a tree of fully optimized counts, one can read off the CDF at an arbitrary point by traversing the tree in a root-to-leaf path, summing as few of the values together as possible to get the desired estimate, as shown in Algorithm \ref{alg:cdf-tree}.

\begin{singlespace}
\begin{algorithm}[H]
\caption{ 
\texttt{TreeToCDF}}
\label{alg:cdf-tree}
\KwInput{$\ninput$, 
$\treeinput$, $\discreterangeinput$, $\graninput$, $\depthinput$}

noisy-cdf $= []$

\For{$x \in \discreterange$}{
$\min \leftarrow \lowerrange, \max \leftarrow r_u$

count $\leftarrow 0, i \leftarrow 0$

\For{$0 \leq j < \depth$} {
    
    mid $\leftarrow (\min + \max)/2 $
    
    \uIf {$x = \max$ or $j = m$}{
        $k \leftarrow 2^j + i$
        
       count $\leftarrow$ count +  $t_k$ 
       
       \textbf{break}
        }
    \uElseIf{$x \le$ mid}{
        $\max \leftarrow$ mid
        
        $i \leftarrow 2i$
    }
    \Else{
        $\min \leftarrow$ mid
        
        $i \leftarrow 2i+1$
        
        $k \leftarrow  2^{j+1} + i -1$
        
       count $\leftarrow \dptree_{k}$ 
           }
  }
  Add $(x, \text{count}/n)$ to noisy-cdf
}
\Return noisy-cdf
\end{algorithm}
\end{singlespace}

\subsection{Confidence Intervals for Quantiles Estimated from Tree-Based CDF}
In order to generate a confidence interval for the desired quantiles, we would need to understand what the uncertainty of each of the counts in our estimated CDF was. Since each of these counts is a combination of all of the different counts in the tree, weighted in a way that is recursively defined, this is not trivial to do in a closed-form manner. We can compute the effect of each node on any other node by generating a tree with every node valued at 0 except the node we are interested in, then running the recursive weighting algorithm on that tree, as shown in Alg.~\ref{alg:pp-node-effect}.

\begin{singlespace}
\begin{algorithm}[H]
\label{alg:pp-node-effect}
\caption{
\texttt{ComputeNodeEffect}}
\KwInput{$\sigma^2 \in \nonnegreals$, $i^* \in \naturals$, $\depthinput$}
Construct a binary tree $\dptree \in \dptreespace{\depth}$, where for $0 \leq i < 2^\depth$,
\begin{align*}
    \begin{cases}
    T_i = 1 &\text{if}~~~~ i = i^* \\
    T_i = 0 &\text{o.w.}
    \end{cases}
\end{align*}

Let $\dptree' \in \dptreespace{\depth}$ be the output of the CDF post-processing algorithm from~\cite{Honaker:2015} on differentially private tree $\dptree$, where each noisy count in $T$ has variance $\sigma^2$. 

\Return $\dptree'$
\end{algorithm}
\end{singlespace}

We now have a method to understand how much each node effects any other node. If we run this on every single node of the tree, we can then combine them to generate a tree for every node on the tree that describes how much its optimized count is affected by any other node.\footnote{Since we add identically distributed noise added to each node's count, there are symmetries in the node effects that can be leveraged to make this process substantially more efficient in practice.} When summing the counts to generate the CDF, we can then keep track of the total weight of each node in the final count, and from here generate the variance of the count.

\begin{singlespace}
\begin{algorithm}[H]
\caption{
\texttt{GetVariances} 
}
\label{alg:getVariances}
\KwInput{$\treeinput, \depthinput, \rangeinput, \rhoinput$}
    
    Let $\sigma^2 \leftarrow 2 \depth / \rho$
    
    Create binary tree $E \in \dptreespace{\depth}$ with all nodes are set to 0.
    
    $ \mathbf{T}' \leftarrow \{\texttt{ComputeNodeEffect}(\sigma^2, i, \depth)\}_{0 \leq i < 2^{\depth}}$ 
    
    $\mathbf{v} \leftarrow \emptyset$ 
    
    \For{$0 \leq i < 2^{\depth}$}{

    $\min \leftarrow \lowerrange, \max \leftarrow r_u$ 
    
    \For{$0 \leq j < 2^{\depth}$}{
    $\text{mid} \leftarrow (\min + \max) / 2$ 
    
    \If{$i$ is a leftmost node of the tree}{
        \text{break}
        }
        
    \If{$T_j$ corresponds to a bin with upper endpoint $\max$ or $T_j$ is a leaf node}{
        \For{$0 \leq k < 2^{\depth}$}{
            $E_k \leftarrow E_k+\mathbf{T}'_{j,k}$
        }
    }
    \uElseIf{$T_j <  \text{mid}$}{
        $\max = \text{mid}$
        
        $j \leftarrow 2j$ 
    }
    \Else{
        $\min \leftarrow \text{mid}$ 
        
        $j \leftarrow 2j+1$\\ 
        \For{$0 \leq k < 2^{\depth}$}{
            $E_k \leftarrow E_k+\mathbf{T}'_{2j-1,k}$
        }
    }
    }
    $v\leftarrow0$ 
    
    \For{$0 \leq j <2^{\depth}$}{
    
        $v \leftarrow v + E_i^2 \cdot \sigma^2$ 
    }
    $\mathbf{v}_i \leftarrow v$
    }
\Return $\mathbf{v}$
\end{algorithm}

Now that we have a way to estimate the variance of the count at each of the nodes, we need to generate the actual confidence interval. One way to do this is with the same $\texttt{PostProcessUnion}$ algorithm used in the binary search approach (Alg. \ref{alg:consBS}); the validity of this interval follows the proof of the algorithm's validity for binary search. However, we can do slightly better here, since the choice of query points is just based on the granularity of the tree's histograms rather than dependent on previous queries. This improved method is described in Alg.~\ref{alg:fancy} and the entire confidence interval generation process is summarized in Alg.~\ref{alg:cdf}.

\begin{algorithm}[H]
\label{alg:fancy}
\caption{
\texttt{PostProcess}}
\KwInput{$\ninput$, $\discreterangeinput$, noisy CDF counts $\{x, \tilde{C}(x), \sigma_x\}_{x \in \discreterange}$}

\For{$x \in \discreterange$}{
 
 $a_{x}^u
    = \min \{ a \mid \int_q \PDFbin(qn) \cdot \Pr( q + \calN(0, \sigma_x^2) >  a) \leq \alpha/2 \}$ \tcp{can approximate using binary search}
 
  $a^l_x \leftarrow 1 -a^u_x$
}

$\ell = \max \{x \in \discreterange \mid \forall x' \leq x \in \discreterange, \tilde{C}(x') < a_{x'}^l \}$

$u = \min\{i \in [N] \mid \forall x' \geq x \in \discreterange, \tilde{C}(x') > a_{x'}^u \}$

\Return $[\ell,u]$
\end{algorithm}

\begin{algorithm}[H]
\KwData{$\datainput$}
\KwPrivacyparams{$\rhoinput$}
\KwHyperparams{$\alphainput, \Naive \in \{0,1\}$, $\rangeinput, \graninput, \betasplitinput$}

        $n = |d|$
        
        $m = \lceil \log((\dataub - \datalb)/\theta) \rceil$
        
        $\dptree$ = \texttt{DPTree}($d, \rho, (\range, \depth)$)

        $\dptree^*$ = \texttt{OptimizedTree}($\dptree, \rho$)
        \tcp{Optimized post-processing algorithm from~\cite{Honaker:2015} with upper and lower weights as in Lemma~\ref{lem:cdf-weights}}
        
        $ \discreterange = \{\lowerrange, \lowerrange + \theta, \lowerrange + 2\theta, \ldots, \lowerrange + 2^m \theta \}$        
                        
        $(x_i, \tilde{C}(x_i))_{x_i \in \discreterange} = \texttt{TreeToCDF}(n, T^*, \discreterange, \theta, \depth)$
        
        $\{\texttt{var}_i\}_{x_i \in \discreterange} = \texttt{GetVariances}(\dptree^*, \rho)$
        
        \If{\Naive}{
        
        $ \nonprivconf = \betasplit \alpha$
        
        $\privconf = \frac{\alpha - \nonprivconf}{1 - \nonprivconf/2}$
        
        $[l, u] = \texttt{PostProcessUnion}(\discreterange, \{ (x, n \tilde{C}(x), n \sigma_x^2 \}_{x \in \discreterange}, \PNPL{\nonprivconf}, \PNPU{\nonprivconf}, \privconf)$  \tcp{Algorithm~\ref{alg:consBS}}
        
        }
        \Else{
        $[l, u] = \texttt{PostProcess}(n, \discreterange, (x, \tilde{C}(x), \sigma_x)_{x \in \discreterange})$
        }
        
        \Return $[l, u]$
        
\label{alg:cdf}
\caption{$\CDF(\Naive)$: $\rho$-CDP algorithm}
\end{algorithm}
\end{singlespace}

We now need to show that our algorithm is differentially private and that the intervals that Algorithm \ref{alg:cdf} returns are valid confidence intervals. 
\begin{lemma}
\label{lem:cdf-privacy}
Mechanism $\CDF(\Naive)$ (Algorithm~\ref{alg:cdf}) is $\rho$-CDP.
\end{lemma}
\begin{proof}
Note that the only step in $\CDF(\Naive)$ that touches the dataset $d$ is the call to $\texttt{DPTree}$, which creates a tree of $m$ differentially private histograms. Each histogram is  $\rho/\depth$-CDP, and by composition (\citep[Proposition 1.6]{BunS16}), $\texttt{DPTree}$ is a $\rho$-CDP algorithm. The rest of the computations in $\CDF(\Naive)$ apply post-processing to the output of $\texttt{DPTree}$, so they do not affect the privacy guarantee.
\end{proof}

\begin{lemma}
\label{lem:cdf-validity}
For any dataset $d \overset{iid}{\sim} P$, where $P \in \gooddist$, and any hyperparameters $\theta, \range=[\lowerrange, r_u], \betasplitinput$, failure rate $ \alpha$ and privacy parameter $\rho$, let $\CDF(d, \rho, (\alpha, \Naive=0, \range, \theta, \betasplit))$ return an interval $[\privNPL{\alpha}(d), \privNPU{\alpha}(d)]$. If $\median(P)\in \range$, then with probability at least $1 - \alpha$,
\begin{align*}
    \median(P) \in [\privNPL{\alpha}(d), \privNPU{\alpha}(d)]
\end{align*}
where the probability is taken over the randomness of both the dataset $d$ and the mechansim $\CDF$.
\end{lemma}
\begin{proof}
Let us consider the upper endpoint of the interval. First, given a set of DP measurements $\tilde{C}(x) = \hat{C}(x) + \calN(0, \sigma_x^2)$, for all $x \in \discreterange$, recall that we define $a_{x}^u$ as follows.
\begin{align*}
    a_{x}^u
    = \min \{ a \mid \int_q \Pr_{d}(\hat{C}(\median(P)) = q) \cdot \Pr_{N \sim \calN(0, \sigma_x^2)}( q + N >  a) \leq \alpha/2 \}
\end{align*}
Then, recall that the post-processing algorithm (\texttt{PostProcess}) outputs $\privNPU{\alpha}(d) = \min\{x \in \discreterange \mid \forall x' \geq x, \tilde{C}(x') > a_{j}^u \}$. Let $x^{*} = \max \{ x \in \discreterange \mid x < \median(P) \}$, with corresponding $\sigma_{x^*}$ and $a_{x^*}^u$.
Then, using the subscript $A$ to denote randomness of the DP mechanism, we have that
\begin{align*}
    \Pr_{A, d}( \privNPU{\alpha}(d) < \median(P) )
    &= \Pr_{A, d}( \min\{x \in \discreterange \mid \forall x' \geq x ~~ \tilde{C}(x') > a_x^u \} < \median(P)) \\
    &\leq \Pr_{A, d}( \tilde{C}(x^{*}) > a_{x^*}^u ) \\
    &= \int_q \Pr_d(\hat{C}(\median(P)) = q) \cdot \Pr_A(\tilde{C}(x^{*}) > a_{x^*}^u \mid \hat{C}(\median(P)) = q) \\
    &\leq \int_q \Pr_{d}(\hat{C}(\median(P)) = q) \cdot \Pr_{N \sim \calN(0, \sigma_{x^*}^2)}(q + N > a_{x^*}^u )) \\
    &\leq \alpha/2
\end{align*}
where the last line follows by definition of $a_{x^*}^u$.
A similar argument holds for $\privNPL{\alpha}(d)$, so we are done.
\end{proof}

\section{Details: Range-robust estimator based on CDF estimator, \BSCDF}\label{online supplement:BSCDF}
\label{sec:bs-cdf-details}
Recall that $\noisyBS$ (Algorithm~\ref{alg:bs}) is useful for finding the dataset when it lies within a large range $\rangelarge$, while $\CDF$ (Algorithm~\ref{alg:cdf}) offers highly optimized estimates of the CDF within a small range $\rangesmall$. The combination $\BSCDF$ leverages the strengths of both of these algorithms: it uses $\noisyBS$ to narrow down the search space from $\rangelarge = [r_l, r_u]$ to $\rangesmall = [r_l', r_u']$, clips the data to within $\rangesmall$, and runs $\CDF$ with the remaining privacy budget within this smaller range to obtain a confidence interval for the population median. The pseudocode for $\BSCDF$ is given in Algorithm~\ref{alg:bs-cdf}.

The privacy budget $\rho$ and coverage failure probability $\alpha$ both need to be partitioned between the two stages of the algorithm. We expect the optimal split to be distribution dependent. In particular, it likely depends on how large a region the data occupies within the range $\range$. We found experimentally, for the parameter regimes we studied, using $\rho/4$ for the first step, and $3\rho/4$ for the second step ($\gamma = 1/4$) seemed to be a good choice. Similarly, we ensure that the region found in the first step contains the median with probability $1-\alpha/4$, and the second step finds a $1-3\alpha/4$-confidence interval within that region.

\begin{algorithm}[h!]
  \KwData{$\datainput$}
  \KwPrivacyparams{$\rhoinput$}
  \KwHyperparams{$\alphainput$, $\rangeinput, \graninput, r, r_1, \gamma \in (0,1)$}
  
  $n = |d|$
  
  $\rhoforbs = \gamma \cdot \rho$
  
  $\alpha_{\noisyBS} = \alphasplit \cdot \alpha$
  
    $\rhoforcdf = (1-\rhosplit) \cdot \rho$
    
  $\alphaforcdf = (1- \alphasplit) \cdot \alpha$
  
  $[r'_l, r'_u] = \noisyBS(d, \rhoforbs, (\alpha_{\noisyBS}, \range, \theta, \betasplit, 0.25, 0.75))$ 
  
  \Return $\CDF(d, \rhoforcdf, (\alphaforcdf, [r_l', r_u'], \theta, \betasplit))$
  
  \caption{$\BSCDF$:  
  $\rho$-CDP Algorithm}  \label{alg:bs-cdf}
\end{algorithm}

\begin{lemma}
Mechanism $\BSCDF$ (Algorithm~\ref{alg:bs-cdf}) is $\rho$-CDP.
\end{lemma}
\begin{proof}
$\BSCDF$ is a composition of two algorithms -- $\noisyBS$ which by Lemma~\ref{lem:noisy-bs-privacy} is $\gamma \rho$-CDP, and $\CDF$ which by Lemma~\ref{lem:cdf-privacy} is $(1-\gamma) \rho-CDP$. By Lemma~\ref{composition}, this means $\BSCDF$ satisfies $\rho$-CDP. 
\end{proof}

The coverage analysis of $\BSCDF$ follows immediately from Lemma~\ref{lem:cdf-validity} and Lemma~\ref{lem:noisy-bs-validity}, and a union bound.

\begin{lemma}
Given any dataset $d \overset{i.i.d}{\sim} P^n$, where $P \in \gooddist$, any hyperparameters $\graninput, \rangeinput, \betasplit, \alphasplit, \rhosplit$, failure rate $\alphainput$ and privacy parameter $\rhoinput$, if $\median(P)\in \range$ then $\BSCDF(d, \rho, (\alpha, \theta, \range, \betasplit, \alphasplit, \rhosplit))$ is a valid $1-\alpha$-confidence interval for $\median(P)$.
\end{lemma}

\begin{proof}
By Lemma~\ref{lem:noisy-bs-validity}, if $\median(P)\in \range$ then $\Pr(\median(P)\in[r_l', r_u'])\ge 1-\alpha_{\noisyBS}$. Then, by Lemma~\ref{lem:cdf-validity}, $\Pr(\median(P)\in\BSCDF(d)\;|\; \median(P)\in[r_l', r_u'])\ge 1-\alpha_{\CDF}$. Therefore, \[\Pr(\median(P)\in\BSCDF(d))\ge 1-\alpha_{\noisyBS}-\alpha_{\CDF}=1-\alpha.\]
where the probability is over the randomness of both the dataset $d$ and the mechanism $BSCDF$.
\end{proof}
\section{Details: Other Algorithms Explored}
\label{online supplement: otheralgs}

In this section we give a brief overview of additional CDP confidence intervals and CDP median estimators that we explored. These algorithms were not included in the main body of this paper since they are outperformed by other algorithms in every parameter regime we studied. The additional CDP confidence interval algorithms were:
\begin{itemize}
    \item \texttt{CDF+BS CI} computes a CDP estimate to the empirical CDF in the same way as $\CDF$. However, instead of using the post-processing algorithm described in Algorithm~\ref{alg:fancy}, it performs binary search using the noisy CDF measurements.
    \item \texttt{BinSearch} is the same as $\noisyBS$ except it uses the same privacy budget at every iteration. We expect this algorithm to perform strictly worse than $\noisyBS$ which uses its budget more carefully.
\end{itemize}

Figure~\ref{fig:otherCIS} shows the performance of \texttt{CDF+BS CI} and \texttt{BinSearch}, as well as the naive estimators \texttt{ExpMechUnion} and \texttt{CDFPostProcessUnion} and the four CDP estimators we presented in the main body. We can see that for all values of $\rho$, at least one of the four main CDP estimators outperforms each of the other algorithms.

\begin{figure}[H]
    \centering
    \includegraphics[scale=0.35]{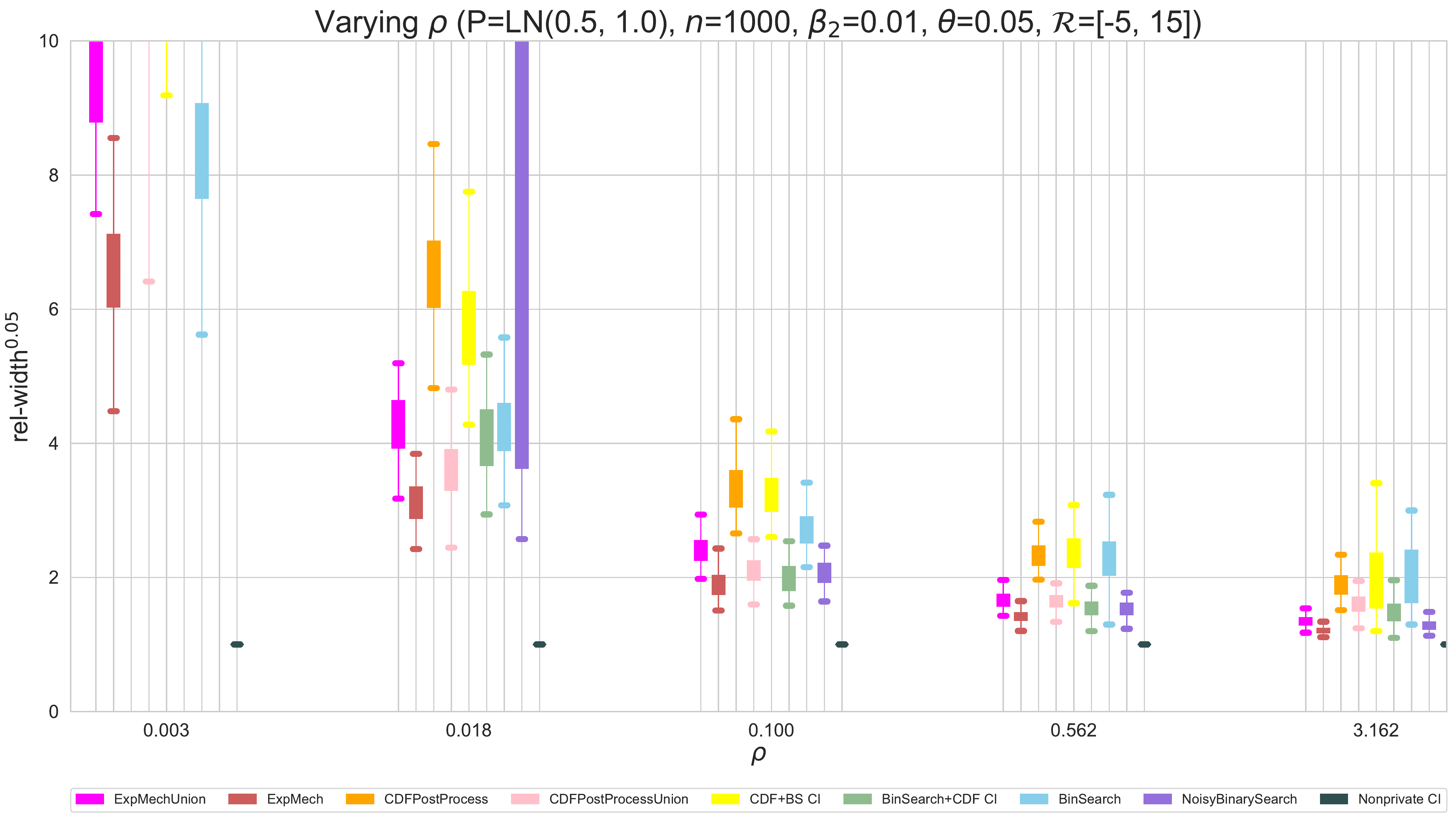}
    \caption{Performance of various CDP confidence intervals for the median as we vary $\rho$. Performance is measured in terms of the relative width with $\alpha=0.05$. Box plots are computed using 100 random datasets of 1000 data points drawn i.i.d. from Lognormal($\ln(1.5), 1$). Each CDP algorithm is run 5 times on each dataset.}
    \label{fig:otherCIS}
\end{figure}

We also explored several CDP median estimators. For point estimators that directly correspond to analogues of our CDP confidence intervals, we use the same name to denote both. However, note that these point estimators for the median are different to the mid-point of the confidence interval estimators that we used in the main body. The algorithms presented in Figure~\ref{fig:pointestimates} are all directly estimating the median, and do not additionally release a confidence interval for the median. 

\begin{itemize}
    \item \texttt{ExpMech} is the point estimator version of our confidence interval algorithm $\EM$. It uses the exponential mechanism with target quantile $n/2$ to estimate the median.
    \item \texttt{SmoothSens} releases the median using the smooth sensitivity framework \cite{NRS07, BunS19}. This algorithm Gaussian noise to the empirical median where the standard deviation of the noise is data dependent, and carefully calibrated to ensure differential privacy. 
    \item \texttt{BinSearch} is the point estimator version of \texttt{BinSearch} described above. It uses binary search with target quantile $n/2$.
    \item \texttt{NoisyStartBinarySearch} was a preliminary version of altering the privacy budget through the iterations of the algorithm, with little budget initially then increasing through the search process. 
    \item \noisyBS is the point estimator version of our confidence interval algorithm $\noisyBS$. It uses noisy binary search to search for the quantile $n/2$.
    \item \texttt{FancyBinarySearch} is similar to \texttt{BinSearch}. However, instead of halving the range at each iterate, it makes more conservative steps when it is not confident whether the median is to the left or right. 
    \item \texttt{CDFPostProcess} is the point estimator version of our confidence interval algorithm $\CDF$. It computes a CDP estimate the CDF in the same way, then computes the median based on the CDP CDF. An important note is that since both this algorithm and $\CDF$ are post-processing on the CDP CDF estimate, they can be performed at the same time without additional privacy budget.
    \item \texttt{GradDescent} uses CDP gradient descent to solve the optimisation problem $\arg\min\sum_{i=1}^n|m-d_i|$. We use the private stochastic gradient descent technique proposed by \cite{Bassily:2014}.
\end{itemize}

Figure~\ref{fig:pointestimates} shows the performance of each of our point estimators on log-normal data. We can see that \texttt{SmoothSens}, the only unbiased estimator, has among the highest variability in all regimes, and particularly poor performance for small $\rho$. It has comparable performance to the other algorithms for large $\rho$, but extending this point estimator to a confidence interval algorithm remains an open problem. All the variants of binary search perform similarly as point estimators for the median. Even as a point estimator, $\EM$ slightly outperforms the other algorithms, except \texttt{GradDescent}. Extending \texttt{GradDescent} to a confidence interval remains an open problem. 

\begin{figure}[H]
    \centering
    \includegraphics[scale=0.4
    ]{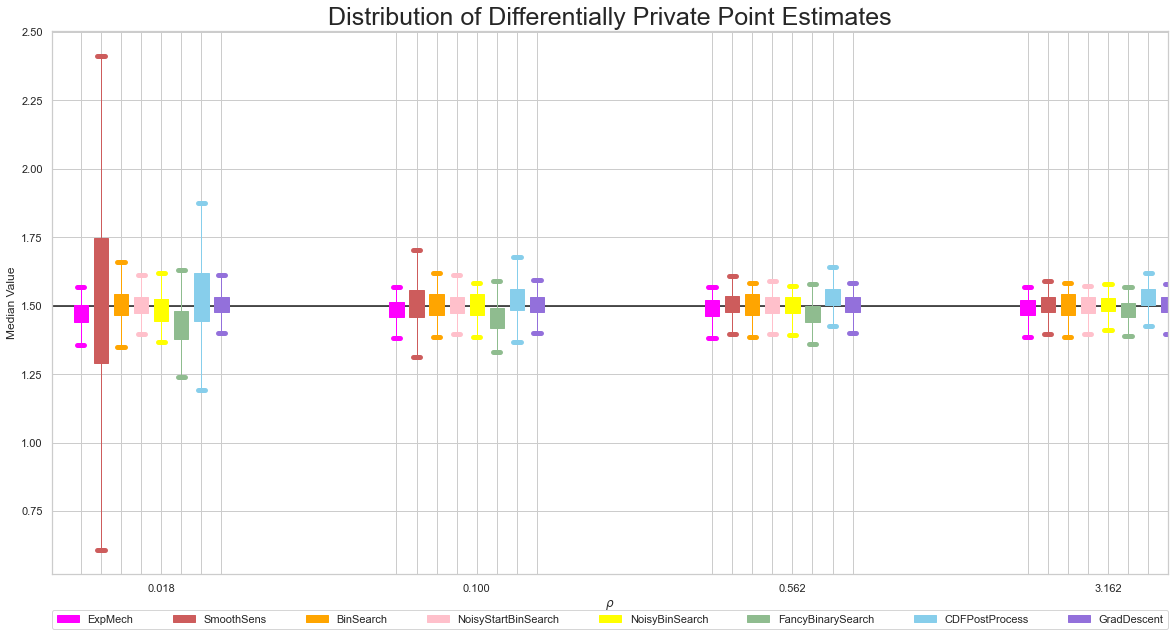}
    \caption{Performance of various CDP point estimators for the median as we vary $\rho$. Box plots are computed using 100 random datasets of 1000 data points drawn i.i.d. from Lognormal($\ln(1.5), 1$). Each CDP algorithm is run 5 times on each dataset. }
    \label{fig:pointestimates}
\end{figure}
\section{Other Regimes}
\label{online supplement: otherregimes}
The relative ordering of algorithms can depend on the scale of the data ($\sigma_d$) relative to the range.  In the figures below, we display the relative widths of the algorithms on data sampled from a Lognormal$(\ln(1.5, \sigma_d^2)$ distribution, where $\sigma_d = 5.0$, as we vary the size of the dataset $n$ and the privacy loss parameter $\rho$. Note that although we have drastically increased the scale of the data, the range is left the same as in Figure~\ref{fig:rel-width-boxplots}: $\range = [-5, 15]$. 
From these plots, we can see that when $n$, $\rho$, and $\sigma_d$ are large, $\CDF$ performs slightly better than $\EM$. In Figure~\ref{fig:rel-width-boxplots}, we saw that when $\sigma_d$ is small, $\EM$ remains the best performing algorithm in both the large $n$ and large $\rho$ regimes. Hence we conjecture that $\sigma_d$ needs to be large, and either $\rho$ or $n$ need to be large for $\CDF$ begins outperforming $\EM$. This conjecture is supported by Figure~\ref{fig:rel-width-large-sigma}, where we see $\CDF$ only beginning to outperform $\EM$ when either $n$ and $\rho$ are large.

\begin{figure}
     \begin{subfigure}[b]{0.5\textwidth}
         \centering
         \includegraphics[width=\textwidth]{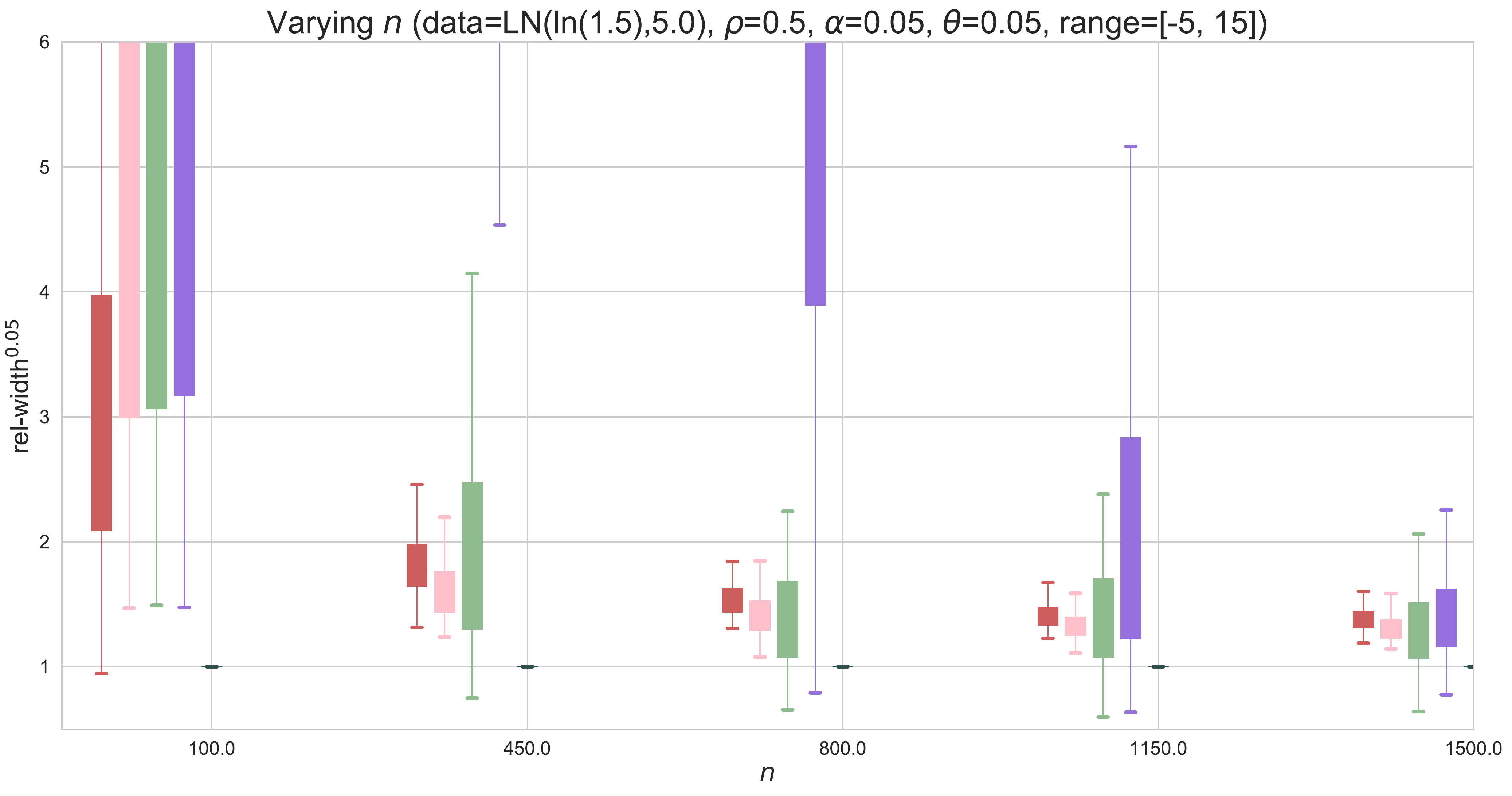}
         \caption{Varying $n$}
         \label{fig:rel-width-large-sigma-n}
     \end{subfigure}
     \begin{subfigure}[b]{0.5\textwidth}
         \centering
         \includegraphics[width=\textwidth]{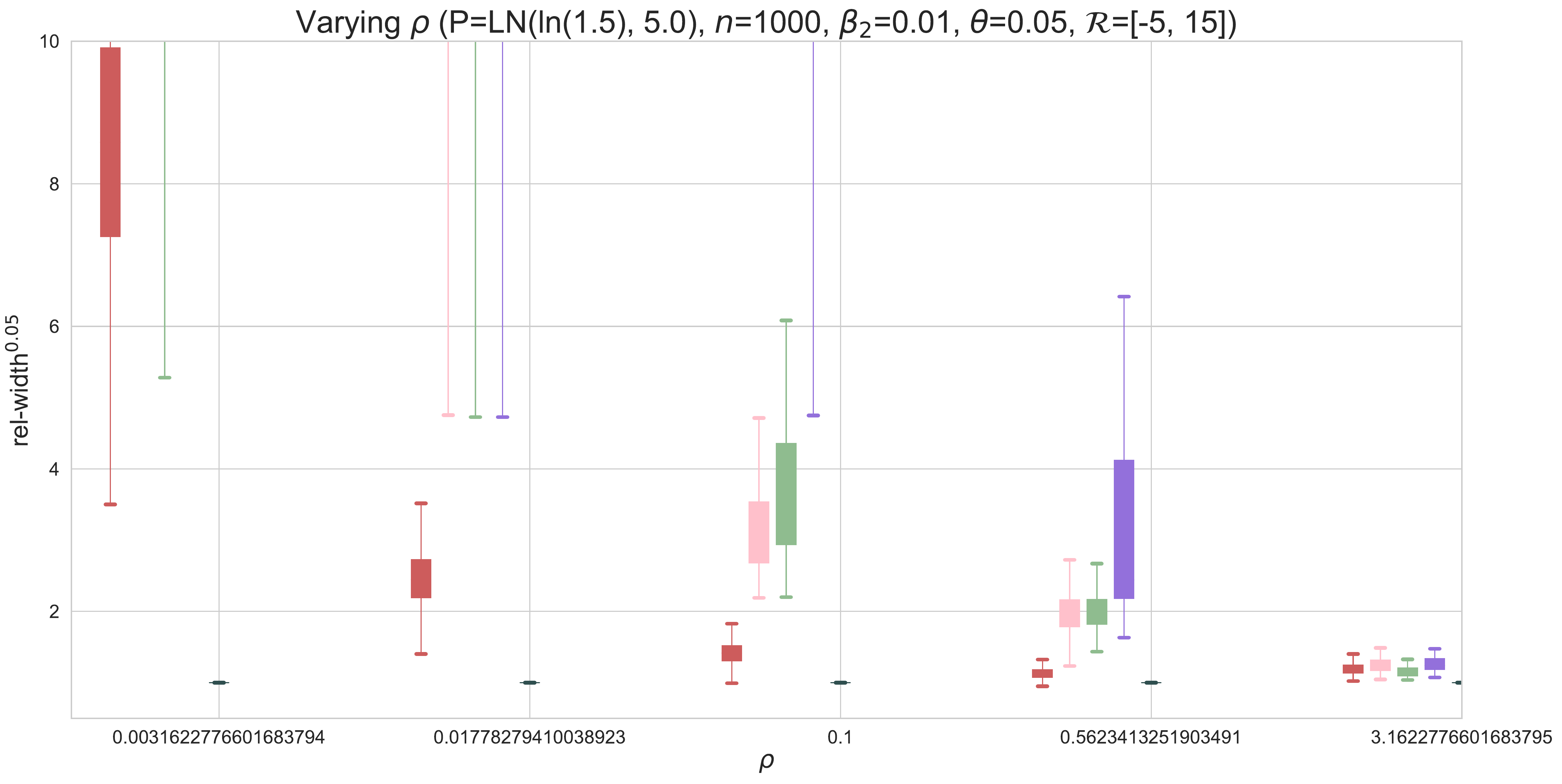}
         \caption{Varying $\rho$}
         \label{fig:rel-width-large-sigma-rho}
     \end{subfigure}
     \begin{subfigure}[b]{\textwidth}
         \centering
         \includegraphics[width=\textwidth]{main-labels.pdf}
     \end{subfigure}
     \caption{Relative width of DP confidence intervals on well-spread data $(\sigma_d = 5.0)$ as we vary (a) $n$ and (b) $\rho$.
     }\label{fig:rel-width-large-sigma}
\end{figure}

\end{document}


\title{Supplemental Materials for Non-parametric Differentially Private Confidence Intervals for the Median}
        
      \author[1, 2]{J\"{o}rg Drechsler}
      \author[3]{Ira Globus-Harris}
      \author[4]{Audra McMillan}
      \author[5]{Jayshree Sarathy}
      \author[6]{Adam Smith\footnote{Authors in alphabetical order}}
      \affil[1]{Institute for Employment Research}
      \affil[2]{The Joint Program in Survey Methodology, University of Maryland}
      \affil[3]{University of Pennsylvania}
      \affil[4]{Apple}
      \affil[5]{Harvard John A. Paulson School of Engineering and Applied Sciences}
      \affil[6]{Department of Computer Science, Boston University}
\date{}

\maketitle
\pagestyle{empty}

\pagenumbering{arabic}

\section*{Appendix}

This appendix contains further details and pseudocode for the four mechanisms evaluated in the main paper: \EM, \CDF, \noisyBS, and \BSCDF~(Sections \ref{online supplement:EM} to \ref{online supplement:BSCDF}). Note that code for these algorithms can be found at \url{https://github.com/anonymous-conf-medians/dp-medians}. This appendix also provides simulation results for some additional algorithms (Section \ref{online supplement: otheralgs}) that we did not explore further as they were strictly dominated by other algorithms in all the parameter settings considered in the main paper. 

\section{From Continuous Distributions to All Distributions}\label{convolutionisgood}

The algorithms and proofs in this paper focus on confidence intervals for the class of continuous distributions $\gooddist$. The relevant property of this class of distributions is that the distribution of the rank of the median is exactly given by \[\Pr(\rank{d}{\median(P)}=m)=\Pr(\Bin(n,1/2)=m).\]
This property can fail for distributions where the median itself has non-zero mass. However, we can use a simple transformation to extend our confidence intervals for continuous distributions to a function that is arbitrarily close to a confidence interval for the set of all distributions on $\mathbb{R}$, $\Delta(\mathbb{R})$. The transformation involves adding a small amount of Gaussian noise to the samples from $P$, in order to produce samples from a continuous distribution that are close to samples from $P$. A confidence interval algorithm for $\gooddist$ is then run on the resulting samples. 

We'll say a function $M:\mathcal{X}^n\to\intervals$ is a \emph{$(\beta, 1-\alpha)$-good confidence interval} for $Q \in \gooddist$ if with probability at least $1 - \alpha$, 
\[
\exists m\in M(X) \text{ s.t. } \Pr_{x\sim Q}(x\le m)\in[1/2-\beta, 1/2+\beta]
\]
where the probability is taken over the randomness of $M$ and $X \sim Q^n$.
Note that a $(0, 1-\alpha)$-good confidence interval is simply a $1-\alpha$-confidence interval. 

Let $\Phi_{\sigma}$ be the cumulative distribution function (CDF) of the Gaussian $\mathcal{N}(0,\sigma^2)$. Algorithm~\ref{transformation} describes the transformation of $M$. Note that $M'$ expands the confidence interval by $\Phi_{\sigma}^{-1}(1-\beta)$. We can make $\beta$ and $\Phi_{\sigma}^{-1}(1-\beta)$ both arbitrarily small by setting $\sigma^2$ to be arbitrarily small.

\begin{algorithm}[H]
  \KwData{$X\sim P^n$, where $P \in \Delta(\mathbb{R})$}
  \KwHyperparams{$(0, 1-\alpha)$-confidence interval algorithm $M$ for distributions in $\gooddist$, $\sigma^2 > 0$, $\beta\in[0,1/2]$}
  $X' = X+\mathcal{N}(0, \sigma^2I_n)$\\
  Let $a = \Phi_{\sigma}^{-1}(1-\beta)$, where $\Phi_{\sigma}$ is the CDF of $\mathcal{N}(0,\sigma^2)$ \\
  \Return [M(X')-a, M(X')+a]
  \caption{$M'$, a $(\beta, 1-\alpha)$-good confidence interval for $P$} \label{transformation}
\end{algorithm}

\begin{lemma} For all $M$, $\sigma^2$ and $\beta\in[0,1]$,
if $M$ is an $\alpha$-confidence interval for $\gooddist$ then $M'$ (as defined in Algorithm~\ref{transformation}) is a $(\beta, 1-\alpha)$-good confidence interval for $\Delta(\mathbb{R})$.
\end{lemma}

\begin{proof} Let $a=\Phi_{\sigma}^{-1}(1-\beta)$, and
let $P\in\Delta(\mathbb{R})$ and $Q$ be the distribution of the sum $z=x+y$ where $x\sim P$ and $y\sim \mathbb{N}(0,\sigma^2)$. Note first that $Q\in\gooddist$. Now,
let $M(Z)=[L,U]$ be the output of the confidence interval on $Q^n$. Notice that $\median(Q)\in M(Z)$ if and only if $\Pr_{z\sim Q}(z\le L)\le \frac{1}{2}$ and $\Pr_{z\sim Q}(z\ge U)\le \frac{1}{2}$. Suppose that $\median(Q)\in M(Z)$ (which happens with probability $\geq 1-\alpha$), then \[\Pr_{x\sim P}(x\le L-a) = \Pr(z-y\le L-a) \le \Pr(z\le L \text{ or } y\ge a)\le \frac{1}{2}+(1-\Phi_{\sigma}(a)) = \frac{1}{2}+\beta\] and similarly \[\Pr(x\ge U+a) = \Pr_{x\sim P}(z-y\ge U+a) \le \Pr(z\ge U \text{ or } y\le -a)\le \frac{1}{2}+\Phi_{\sigma}(-a) = \frac{1}{2}+\beta.\]
Therefore, there exists a pair $m$ and $\gamma$ 
such that $\Pr_{x\sim P}(x\le m)=\gamma$, $m\in[L-a, U+a]$ and $\gamma\in[\frac{1}{2}-\beta, \frac{1}{2}+\beta]$. 
\end{proof}

\section{Details: Confidence intervals based on exponential mechanism, \EM}\label{online supplement:EM}
\label{sec:exp-mech-details}

\begin{theorem}[Exponential Mechanism~\cite{McSherryT07}] 
\label{thm:exp-mech}
Given a space of datasets $\calX^n$ and an arbitrary range, $\range$,
let $u: \calX^n \times \range \rightarrow \reals$ be a utility function that maps dataset/output pairs to utility scores.
For a fixed dataset $d \in \calX^n$ and privacy parameter $\eps \in \posreals$, the \textit{exponential mechanism} outputs $x \in [\datalb, \dataub]$ with probability proportional to $\exp\left(\frac{\eps u(d, x)}{2 \Delta u}\right)$, where 
\[\Delta(u)=\max_{x \in \range}\max_{d, d' \text{neighbours}} |u(d, x) - u(d', x)|.\]
The exponential mechanism is $\eps$-DP.
\end{theorem}

Let the range of possible outputs be $\range = [\datalb, \dataub]$. The standard exponential mechanism to estimate the value $d_{(k)}, k \in [1, 2, \ldots, n]$ (described in~\citep{Smi11} for $k = n/2$) uses the following utility function.  
\[
u(d, x) = - \left \vert \rank{d}{x} - k \right\vert 
\]

This utility function captures how far $x$ is \textit{in rank} from $d_{(k)}$. However, this standard mechanism estimates $d_{(k)}$ poorly if the datapoints in $d$ are highly concentrated around this value.\footnote{See ~\citep{Alabi:2020} for an explanation of this case.} Below is a variant on the standard exponential mechanism designed to perform well even in this situation.

\begin{definition}[$\theta$-Widened Exponential Mechanism~\citep{Alabi:2020}]
\label{def:exp-mech-pe}
For a widening parameter $\theta>0$ and target rank $k \in [1, 2, \ldots, n]$, the \textit{$\theta$-widened exponential mechanism} uses the following utility function. 
\[u(d, x) = -\min\left\{\left|\rank{d}{a}-k\right|\;:\; |a-x|\le\theta\right\} \]
\end{definition}

The $\theta$-widened utility function can be implemented in different ways; Algorithm~\ref{alg:exp-mech-pe} offers one method of doing so. To sample efficiently from the distribution defined by the utility function, we implement a two-step strategy as shown in prior work~\citep{Alabi:2020,C}: First, we sample an interval, using the fact that sampling from the exponential mechanism is equivalent to choosing the value with maximum utility score after i.i.d. Gumbel-distributed noise has been added to the utility scores~\citep{ALT16}. Second, we sample an output uniformly at random from that interval. 

\begin{algorithm}[h]
  \KwData{$\datainput$}
  \KwPrivacyparams{$\epsinput$}
  \KwHyperparams{$\rankinput, \rangeinput, \graninput$}
  
  Clip $d$ to the range $[\lowerrange, \upperrange]$, setting values less than $\lowerrange$ or greater than $\upperrange$ to $\lowerrange$ and $\upperrange$ respectively.
    
  $n = |d|$
  
  Sort $d$ in increasing order

  \For{$i \in [1, k]$}{
    $d_i = \max(\datalb, d_i - \theta)$
   }
   
   \For{$i \in [k+1, n]$}{
    
    $d_i = \min(\dataub, d_i + \theta)$
  }
  
  Insert $\datalb$ and $\dataub$ into $d$ and set $n = n+2$
  
  Set $\textrm{maxNoisyScore} = -\infty$
  
  Set $\textrm{argMaxNoisyScore} = -1$

  \For{$i \in [2, n)$} {
    
    $\textrm{score} = \log(d_i - d_{i-1}) -\frac{\eps}{2} \cdot | i - k | $
    
    $N \sim \textrm{Gumbel}(0,1)$ 
    
    $\textrm{noisyScore} = \textrm{score} + N$
    
    \If {$\textrm{noisyScore} > \textrm{maxNoisyScore}$} {
        $\textrm{maxNoisyScore}= \textrm{noisyScore}$
        
        $\text{argMaxNoisyScore} = i$
        }
  }
  
  $\text{left} = d_{\text{argMaxNoisyScore}-1}$
  
  $\text{right} = d_{\text{argMaxNoisyScore}}$
  
  Sample $\tilde{m} \sim \text{Unif}\left[\text{left} , \text{right} \right]$
  
  \Return $\tilde{m}$

  \caption{\EMPointEstimator: $\theta$-Widened Exponential Mechanism for Quantile Estimation}  \label{alg:exp-mech-pe}
\end{algorithm}

\begin{lemma}
\label{lem:exp-mech-pe-privacy}
$\EMPointEstimator$ (Algorithm~\ref{alg:exp-mech-pe}) is $\eps$-DP.
\end{lemma}
\begin{proof}
Follows directly from Theorem~\ref{thm:exp-mech}.
\end{proof}

\begin{definition}[$(t,\theta, \beta)$-good] 
\label{def:t-theta-beta-goodness}
Let $A$ be a randomized mechanism that outputs a real-valued variable $m$.
For a fixed dataset $d \in \calX$, $\graninput$, and $\beta \in (0,1)$, $m$ is \textit{$(t,\theta, \beta)$-good} with respect to target rank $k \in [1, 2, \ldots, n]$ if there exists a datapoint $a \in d$ such that with probability at least $1-\beta$,
\[ |m-a| \leq \theta ~\text{and}~ \left|\rank{d}{a}- k \right| \leq t, \]
where the probability is over the randomness of $A$.
\end{definition}

\begin{lemma}
\label{lem:exp-mech-output-goodness}
Let $d \in \calX^n$ be a dataset and $k \in [1, 2, \ldots, n]$ be a target rank.
Let $A_k^{\eps}(d)$ be the $\theta$-widened exponential mechanism with privacy parameter $\eps \in \posreals$, widening parameter $\theta \in \reals$, and range parameter $\range \subset \reals$, and let us assume that $d_{(k)} \in \range$.  For $\beta \in (0,1)$, let $t = \ln\left((|\range|-2\theta)/(2\theta\beta)\right) / \eps$. Then, the output of $A_k^{\eps}(d)$ is $(t, \theta, \beta)$-good.
\end{lemma}

\begin{proof}
We will upper bound the probability density of outputs that are \emph{not} $(t, \theta)$-good with respect to target rank $k$, ie. outputs $m$ for which there does not exist a datapoint $a \in d$ such that $|m-a| \leq \theta$ and $|\rank{d}{a} - k| \leq t$.

To do so, recall that the $\theta$-widened exponential mechanism assigns utility scores to dataset/output pairs according to Definition~\ref{def:exp-mech-pe}.
For a given $t$, let us define \textit{good outputs} as those having a utility score $\geq -t$, which are assigned unnormalized probability density of at least 1, and \textit{bad outputs} as those having a utility score $< -t$, which are assigned unnormalized probability density of at most $\exp(- t \eps)$.  By definition of the $\theta$-widened utility function, the
good outputs must span an interval of size at least $2\theta$ and the bad outputs span an interval of size at most $|\range|-2\theta$. Therefore, we have that
\begin{align*}
\Pr_{A} &\left( \nexists~ a \in d : |A_k^{\eps}(d) - a| \leq \theta ~\text{and}~ |\rank{d}{a} - k| \leq t \right) \\
&\leq \Pr_{A} \left( \nexists~ x \in \range : |A_k^{\eps}(d) - x| \leq \theta ~\text{and}~ |\rank{d}{x} - k| \leq t \right) \\
&=\Pr_{A} \left(\forall ~ x \in \range, |A_k^{\eps}(d) - x| > \theta ~\text{or}~ |\rank{d}{x} - k| > t \right) \\
&= \Pr_{A} \left( ~\forall x \in [A_k^{\eps}(d)- \theta, A_k^{\eps}(d) + \theta] \text{ we have } |\rank{d}{x} - k| > t \right) \\
&= \Pr_{A} \left( [A_k^{\eps}(d) - \theta,  A_k^{\eps}(d) + \theta] \subseteq \text{bad outputs} \right) \\
&\leq \frac{\Pr_{A} \left( [A_k^{\eps}(d) - \theta,  A_k^{\eps}(d) + \theta] \subseteq \text{bad outputs} \right)}{P_{A} \left( [A_k^{\eps}(d) - \theta,  A_k^{\eps}(d) + \theta] \subseteq \text{good outputs}\right)} \\
&\leq \frac{(| \range | - 2\theta) \exp(-t \eps)}{2\theta } 
\end{align*}
Setting this probability to be within $\beta$, we can solve for $t$ as
\begin{align*}
t \geq \frac{1}{\eps}\ln\left(\frac{|\range|-2\theta}{2\theta \beta}\right)
\end{align*}
The resulting bound is tight by virtue of the worst-case example.
\end{proof}

Next, we will consider two ways in which we can use $\EMPointEstimator$ to create a confidence interval for the median. The first ($\EMNaive$) consists of taking a union bound over the probability that the non-private interval fails to capture the true median, and the probability that the private interval fails to capture the non-private interval. The second ($\EM$) is a more nuanced approach that accounts for the noise due to sampling and noise due to privacy together.

\subsection{Union bound confidence interval}

The following pseudocode describes $\EMNaive$ or $\EM$ (depending on the boolean hyperparameter $\Naive$). The sub-algorithm $\ComputeEMTargets$ will be described in the next subsection, as it is only called when $\Naive = 0$.

\begin{singlespace}
\begin{algorithm}[h]
  \KwData{$\datainput$}
  \KwPrivacyparams{$\epsinput$}
  \KwHyperparams{$\alphainput, \Naive \in \{0,1\}$, $\rangeinput, \graninput, \privconf \in (0, \alpha)$}
  
  $t = \frac{1}{\eps} \cdot \ln \left( \frac{|\range| - 2\theta}{\theta \cdot \beta_2} \right) $
  
  \If{$\Naive$}{
    $\beta_1 = \frac{\alpha - \beta_2}{1 - \beta_2/2}$
    
    $\PPLeps{\alpha} = \lfloor \PNPL{\beta_1} - t \rfloor$ \tcp{ $[d_{(\PNPL{\beta_1})}, d_{(\PNPU{\beta_1})}]$ is the nonprivate ($1-\beta_1$)-confidence interval for the median (see Lemma~\ref{nonprivCI}).}
    
    $\PPUeps{\alpha} = \lceil \PNPU{\beta_1} + t \rceil$
  }
  \Else{
   $\PPLeps{\alpha}, \PPUeps{\alpha} = \ComputeEMTargets(n, \eps, \alpha, \range, \theta)$
  }
  
  $\privNPL{\alpha}(d) = \EMPointEstimator(d, \eps/2, ( \PPLeps{\alpha}, \range, \theta))- \theta$
  
  $\privNPU{\alpha}(d) = \EMPointEstimator(d, \eps/2 (\PPUeps{\alpha}, \range, \theta)) + \theta$
    
\Return $[\privNPL{\alpha}(d), \privNPU{\alpha}(d)]$
  \caption{$\EM(\Naive)$:  
  $\eps$-DP Algorithm}  \label{alg:exp-mech-ci}
\end{algorithm}
\end{singlespace}

First, we show that both $\EMNaive$ and $\EM$ are $\eps$-DP.
\begin{lemma}
\label{lem:wem-ci-privacy}
$\EM(\Naive)$ (Algorithm~\ref{alg:exp-mech-ci}) is $\eps$-DP.
\end{lemma}
\begin{proof}
The computations of $t, \beta_1, \PPLeps{\alpha}$, and $\PPUeps{\alpha}$ do not depend on the dataset $d$. Therefore, when analyzing the privacy loss, we simply need to consider the two calls the algorithm makes to \EMPointEstimator, each with privacy parameter $\eps/2$. By Lemma~\ref{lem:exp-mech-pe-privacy}, each of these algorithms is $\eps/2$-DP, so by composition, $\EM$
is $\eps$-DP.
\end{proof}

Then, we show that $\EMNaive$ produces a valid confidence interval.

\begin{lemma}\label{lem:exp-mech-ci-naive-validity}
Let dataset $d$ be drawn i.i.d. from a distribution $P \in \gooddist$ with population median $\median(P)$. Let $\eps > 0, \range \in \reals, \theta > 0$, and let us assume that $\median(P) \in \range$. For $\alpha \in (0,1)$ and $\privconf \in (0, \alpha)$, let $[\privNPL{\alpha}, \privNPU{\alpha}]$ be the output of $\EMNaive(d, \eps, (\alpha, \Naive =1, \range, \theta, \privconf))$.
Then, with probability at least $1-\alpha$,
\[
\median(P) \in [\privNPL{\alpha}, \privNPU{\alpha}],
\]
where the probability is over the randomness in both the dataset $d$ and the mechanism \EMNaive.
\end{lemma}
\begin{proof}
First, letting $\NPL{\nonprivconf} = d_{(\PNPL{\nonprivconf})}$ and $\NPU{\nonprivconf} = d_{( \PNPU{\nonprivconf})}$, for any $\nonprivconf \in (0,1)$ we have by Lemma~\ref{nonprivCI} that
\begin{align}
\label{eq:nonpriv-prob}
\Pr_{d} \left( \median(P) < \NPL{\nonprivconf} \right) =
\Pr_{d} \left( \rank{d}{\median(P)} < \PNPL{\nonprivconf} \right) \leq \nonprivconf/2.
\end{align}
Then, 
for $\PPLeps{\alpha} = \PNPL{\nonprivconf} - t$, and $\PPUeps{\alpha} = \PNPU{\nonprivconf} + t$, 
 $\EMNaive$ (Algorithm~\ref{alg:exp-mech-ci}) 
 outputs the interval $[\privNPL{\alpha}, \privNPU{\alpha}]$, where $\privNPL{\alpha} = A_{\PPLeps{\alpha}}^{\eps/2}(d) - \theta$ and 
$\privNPU{\alpha} = A_{\PPUeps{\alpha}}^{\eps/2}(d) + \theta$.
By Lemma~\ref{lem:exp-mech-output-goodness}, the output of $A_{\PPLeps{\alpha}}^{\eps/2}(d)$ is $(t, \theta, \privconf/2)$-good with respect to rank $\PPLeps{\alpha}$.
By Definition~\ref{def:t-theta-beta-goodness}, this means that
\begin{align}
\label{eq:privprob}
    \Pr_{A} \left( \privNPL{\alpha} > \NPL{\nonprivconf} \right) = 
    \Pr_{A}\Bigl( \rank{d}{A_{\PPLeps{\alpha}}^{\eps/2}(d) - \theta} > \PNPL{\beta_1}  \Bigr) \leq \beta_2/2 
\end{align}
Putting these together, we consider the lower endpoint of the interval $[\privNPL{\alpha}, \privNPU{\alpha}]$. We can upper bound the failure probability as follows.
\begin{align*}
    \Pr_{A, d} \left( \median(P)  < \privNPL{\alpha} \right) 
    &= \Pr_{A} \left( \median(P) < \privNPL{\alpha} \mid \median(P) < \NPL{\beta_1} \right) \cdot \Pr_{d} \left( \median(P) < \NPL{\nonprivconf} \right) \\
    &~~~~~~~~~~~~ + \Pr_{A} \left( \median(P) < \privNPL{\alpha} \mid \median(P) \geq \NPL{\nonprivconf} \right) \cdot \Pr_{d} \left( \median(P) \geq \NPL{\nonprivconf} \right) \\
    &\leq 1 \cdot \Pr_{d} \left( \median(P) < \NPL{\nonprivconf} \right) + \Pr_{A} \left( \privNPL{\alpha} \geq \NPL{\nonprivconf} \right) \cdot \Pr_{d} \left( \median(P) \geq \NPL{\nonprivconf} \right)  \\
    &\leq \nonprivconf/2 + (\privconf/2) \cdot (1 - \nonprivconf/2 )  \\ 
    &= \alpha/2
\end{align*}
where the second inequality follows from (\ref{eq:nonpriv-prob}) and (\ref{eq:privprob}), and the final equality follows from the definition of $\nonprivconf$ in Algorithm~\ref{alg:exp-mech-ci}.
A similar inequality holds for the upper endpoint of the interval, so a union bound gives the desired result.
\end{proof}

\subsection{Tighter Confidence Interval}

Next, we consider the more nuanced approach.
Let $P \in \gooddist$ be a population distribution function, where $\median = \median(P)$ is the population median. For a dataset $d = (d_1, \ldots, d_n)$ where $d_i$ is sampled i.i.d. from distribution $P$, let $\rank{d}{a}$ denote the rank of real value $a$ within dataset $d$. 
Let $A_k^{\eps}(d)$ be the output of the $\theta$-widened exponential mechanism on dataset $d$ that estimates the value at rank $k$. For a given $k_L, k_U$, we would like to control the probability that the interval $[ A(d, k_L) - \theta, A(d, k_U) + \theta]$ fails to contain the true median $\median$. In particular, for $\alpha \in (0, 1)$, we would like to find the target ranks $k_L$ and $k_U$ closest to $n/2$ such that
\begin{align*}
    \Pr_{A, d} \left( A^{\eps/2}_{k_L}(d) - \theta > \median \right) \leq \alpha/2 \\
    \Pr_{A, d} \left(A^{\eps/2}_{k_U}(d) + \theta < \median \right) \leq \alpha/2
\end{align*}

\begin{algorithm}[ht]
  \KwInput{$\ninput, \epsinput, \alphainput, \rangeinput, \graninput$}
  
  \For{$k_L \in \mathbb{N}, 1 \leq k_L \leq n/2 $}{
  
  $p_{k_L}= C_{\text{Bin}}(k_L -1) + \sum_{m = k_L}^{n} C'_{\text{Bin}}(m) \cdot \frac{(|\range| - 2\theta) \exp(- (m-k_L) \cdot \eps / 2 )}{2\theta}$
  }
  $\PPLeps{\alpha} = \max_{k_L \in \mathbb{N}, 1 \leq k_L < \lceil n/2 \rceil} \{k_L : p_{k_L} \leq \alpha/2 \}$
  
  \For{$k_U \in \mathbb{N}, n/2 \leq k_U < n$}{
  
  $p_{k_U} = \sum_{m = 1}^{k_U} C'_{\text{Bin}}(m) \cdot \frac{(|\range| - 2\theta) \exp(- (k_U-m) \cdot \eps / 2 )}{2\theta} + (1 - C_{\text{Bin}}(k_U +1))$
  }
  $\PPUeps{\alpha} = \min_{k_U \in \mathbb{N}, \lceil n/2 \rceil \leq k_U < n} \{j: p_{k_U} \leq \alpha/2 \}$
  
  \Return $\PPLeps{\alpha}, \PPUeps{\alpha}$
  
  \caption{\texttt{Compute\EM Targets}}  \label{alg:exp-mech-compute-targets}
\end{algorithm}

In Algorithm~\ref{alg:exp-mech-compute-targets} ($\ComputeEMTargets$), we find these target ranks by first computing the probabilities above for all possible $k_L$ and $k_U$'s, and then by numerically searching for the target ranks closest to $n/2$ such that the probabilities above are both within $\alpha/2$.\footnote{This search can be implemented more efficiently by noting that $k_L$ is greater than or equal to $\lfloor \PNPL{\beta_1} - t \rfloor$ as defined in Algorithm~\ref{alg:exp-mech-ci}, and similarly $k_U$ is less than or equal to $\lceil \PNPU{\beta_1} + t \rceil$.} The following lemma characterizes these probabilities.

\begin{lemma}
\label{lem:exp-mech-prob-given-rank}
Let $P \in \gooddist$ be a population distribution function, where $\median = \median(P)$ is the population median. For a dataset $d = (d_1, \ldots, d_n)$ where $d_i$ is sampled iid. from distribution $P$, let $\rank{d}{a}$ denote the rank of real value $a$ within dataset $d$. 
Let $k_L, k_U \in [1, 2, \ldots, n]$ be target ranks, and let $A_{k_L}^{\eps}(d)$ and $A_{k_U}^{\eps}(d)$ be $\theta$-widened exponential mechanisms on dataset $d$ that estimate the value at rank $k_L$ and $k_U$, respectively. Let $\CDFbin$ and $\PDFbin$ be the CDF and PDF of the binomial random variable \Bin(n, 1/2). Then, 
\begin{align*}
   \Pr_{A, d} \left( A^{\eps/2}_{k_L}(d) - \theta > \median \right)
   &\leq \CDFbin(k_L) + \sum_{m = k_L + 1}^{m = n}  
     \PDFbin(m) \cdot \frac{(|\range|-2\theta) \exp \left( -(m - k_L) \eps /2 \right) }{2\theta}  \\
   \Pr_{A, d} \left( A^{\eps/2}_{k_U}(d) + \theta < \median \right)
    &\leq (1-\CDFbin(k_U + 1)) + \sum_{m = 1}^{m = k_U}  
     \PDFbin(m) \cdot \frac{(|\range|-2\theta) \exp \left( -(k_U - m) \eps /2 \right) }{2\theta} 
\end{align*}
\end{lemma}
\begin{proof}
For simplicity, we consider just the first statement pertaining to the lower endpoint of the interval. We can split up the probability into two cases: first, when $\rank{d}{\median} < k_L$, and second, when $\rank{d}{\median} \geq k_L$.
\begin{align*}
    \Pr_{A, d} \left( A^{\eps/2}_{k_L}(d) - \theta > \median \right) 
    &= \sum_{m = 1}^{m = k_L-1}  
    \Pr_{A} \left( A^{\eps/2}_{k_L}(d) - \theta > \median \mid \rank{d}{\median} = m \right) \cdot \Pr_{d}(\rank{d}{\median} = m) \\
    &+ \sum_{m = k_L}^{m = n}  
    \Pr_{A} \left( A^{\eps/2}_{k_L}(d) - \theta > \median \mid \rank{d}{\median} = m \right) \cdot 
    \Pr_{d}(\rank{d}{\median} = m)
\end{align*}
For the first case, where $\rank{d}{\median} < k_L$, we simply upper bound the first probability in the summation by $1$ and note that the random variable $\indicator_{\rank{d}{\median} = m}$ follows a binomial distribution.
\begin{align*}
    \sum_{m = 1}^{m = k_L-1}  
    &\Pr_{A} \left( A^{\eps/2}_{k_L}(d) - \theta > \median \mid \rank{d}{\median} = m \right) \cdot \Pr_{d}(\rank{d}{\median} = m) 
     \leq \CDFbin(k_L-1)
\end{align*}
In the second case, we first observe that the probability of any real value $a$ being greater than $\median$ is monotonically increasing in $a$, which gives
\begin{align*}
    \Pr_{A} \left( A^{\eps/2}_{k_L}(d) - \theta > \median \mid \rank{d}{\median} = m \right) 
    &\leq \Pr_{A} \left( A^{\eps/2}_{k_L}(d) > \median \mid \rank{d}{\median} = m \right)  \\
    &= \Pr_{A} \left( \rank{d}{A^{\eps/2}_{k_L}(d)} > m  \right) \\
    &\leq \Pr_{A} \left( \lvert \rank{d}{A^{\eps/2}_{k_L}(d)} - k_L \rvert > m - k_L \right) \\
    &\leq \frac{(|\range|-2\theta) \exp \left( -(m - k_L) \eps/2 \right) }{2\theta},
\end{align*}
where the last inequality follows from Lemma~\ref{lem:exp-mech-output-goodness}. Therefore, we have that
\begin{align*}
   \Pr_{A, d} \left( A^{\eps/2}_{k_L}(d) - \theta > \median \right)
   &\leq \CDFbin(k_L-1) + \sum_{m = k_L}^{m = n}  
     \PDFbin(m) \cdot \frac{(|\range|-2\theta) \exp \left( -(m - k_L) \eps/2 \right) }{2\theta}
\end{align*}
A similar result holds for $\Pr_{A, d} \left( A^{\eps/2}_{k_U}(d)  + \theta < \median \right)$. 
\end{proof}

Validity of the $\EM$ confidence interval then follows directly from the selection of the target ranks.

\begin{lemma}
\label{lem:exp-mech-ci-validity}
Let dataset $d$ be drawn i.i.d. from a distribution $P \in \gooddist$ with population median $\median(P)$. For (hyper)parameters $\eps > 0, \range \in \reals, \theta > 0$, and $\alpha \in (0,1)$, let $[\privNPL{\alpha}, \privNPU{\alpha}]$ be the output of $\EM(d, \eps, (\alpha, \Naive = 0, \range, \theta, \cdot))$.
If $\median(P) \in \range$, then with probability at least $1-\alpha$,
\[
\median(P) \in [\privNPL{\alpha}, \privNPU{\alpha}],
\]
where the probability is over the randomness in both the dataset $d$ and the mechanism \EM.
\end{lemma}
\begin{proof}
\texttt{Compute\EM Targets} (Algorithm~\ref{alg:exp-mech-compute-targets}, relying on Lemma~\ref{lem:exp-mech-prob-given-rank}) returns target ranks $\PPLeps{\alpha}$ and $\PPUeps{\alpha}$ such that $Pr_{A, d} \left( A^{\eps/2}_{\PPLeps{\alpha}}(d) - \theta > \median \right) \leq \alpha/2$ and $\Pr_{A, d} \left(A^{\eps/2}_{\PPUeps{\alpha}}(d) + \theta < \median \right) \leq \alpha/2$.
$\EM$ (Algorithm~\ref{alg:exp-mech-ci}) then sets $\privNPL{\alpha}(d) = A_{\PPL{\alpha}}^{\eps/2}(d) - \theta$ and $\privNPU{\alpha}(d) = A_{\PPU{\alpha}}^{\eps/2}(d) + \theta$. The result follows from a union bound.
\end{proof}

\section{Details: Confidence intervals based on noisy binary search, \noisyBS}\label{online supplement:BS}
\label{sec:bin-search-details}
\begin{algorithm}[h!]
  \KwData{$\datainput$}
  \KwPrivacyparams{$\rhoinput$}
  \KwHyperparams{$\alphainput$, $\rangeinput, \graninput, \betasplit$, \texttt{LB}, \texttt{UB} $\in (0,1)$}
  $\nonprivconf = \betasplit \alpha$
  
  $\privconf = \frac{\alpha - \nonprivconf}{1 - \nonprivconf/2}$
  
  $n = |d|$
  
  $m = \log( (\dataub - \datalb)/\theta )\;\;\;$ \tcp{number of steps required to get to desired granularity} 
  
  $\rhoperstep = \rho / (2m) $
  
  $\betaperstep = \privconf / (2m)$

  $t^{\rhoperstep}_{\betaperstep} = \sqrt{\frac{\log(1/\betaperstep)}{\rhoperstep n}}$ 
  
  $q_L = \min\{\texttt{LB}, \PNPL{\nonprivconf}/n -t^{\rhoperstep}_{\betaperstep}\}$
  
  $q_U = \max\{\texttt{UB}, \PNPU{\nonprivconf}/n +t^{\rhoperstep}_{\betaperstep} \}$
  
  \texttt{noisy-counts-lower} = \texttt{GetNoisyCounts}($d, \rho/2, (n, \privconf/2, q_L, q_L, q_U, \emptyset, \range, \rhoperstep, \betaperstep)$)
  
  \texttt{noisy-counts-upper} = \texttt{GetNoisyCounts}($d, \rho/2, (n, \privconf/2, q_U, q_L, q_U, \texttt{noisy-counts-lower}, \range, \rhoperstep, \betaperstep)$)
  
  \texttt{noisy-counts} =  \texttt{noisy-counts-lower}$\;\cup\;$\texttt{noisy-counts-upper} 

  \Return \texttt{PostProcessUnion}(\texttt{noisy-counts}, $n$, $\PNPL{\nonprivconf}, \PNPU{\nonprivconf}$, $\privconf$)
  
  \caption{$\noisyBS$:  
  $\rho$-CDP Algorithm}  \label{alg:bs}
\end{algorithm}

\begin{algorithm}[ht!]
    \KwData{$\datainput$}
    \KwPrivacyparams{$\rhoinput$}
    \KwHyperparams{$\ninput, \privconf \in (0,1), \targetquantile, q_L, q_U \in (0,1)$, \texttt{prev-queries}, $\rangeinput, \rhoperstep, \betaperstep$}

    \tcp{\texttt{prev-queries} = $\{(x, r_x, \sigma_x)\}$ is a collection of noisy measurements where $x\in[r_l, r_u]$ and $r_x = \rank{d}{x}+\mathcal{N}(0,\sigma_x^2)$}

     \texttt{lower} = $\datalb$, \texttt{upper} = $\dataub$ \tcp{The initial search space is the entire range.}

     $\rhoinit = \rhoperstep/10$, $\betainit = \betaperstep/10$ \tcp{Budget for initial measurement at every query point; can be arbitrarily small.}
     
    $t = 0$ \tcp{Counter for number of query points.}
          
     $\rhoused = 0$ \tcp{Counter for used privacy budget.}

     \While{$\rhoused + \rhoinit \leq \rho$}{
        
          $x_t = (\texttt{lower} + \texttt{upper})/2$ \tcp{Query point}

          \texttt{est-good-enough} = \texttt{False}\\
          
          \If {\rm there exists $r_{x_t}$ and $\sigma_{x_t}$ such that $(x_t, r_{x_t}, \sigma_{x_t})\in$\texttt{prev-queries}}
          {
              $\texttt{avg-noisy-count}_t = r_{x_t}$,
              $\texttt{avg-var}_t = \sigma_{x_t}$
              
              \texttt{est-good-enough} = \texttt{True}
          }

        \texttt{numMeasurements} = 0, $\rhousedthisstep = 0$, $\betausedthisstep = 0$
              
        \While{\rm \texttt{est-good-enough} = \texttt{False} and $\rhousedthisstep + \rhoinit \leq \rhoperstep$ and $\rhoused + \rhousedthisstep + \rhoinit \leq \rho$}{

                \texttt{numMeasurements} = \texttt{numMeasurements}+1, $\rhousedthisstep = \rhousedthisstep + \rhoinit$, $\betausedthisstep = \betausedthisstep + \betainit$
                
                $\texttt{noisy-count}_{\texttt{numMeasurements}}\sim \calN(\rank{d}{x_t}, 1/2  \rhoinit)$

                $\texttt{var}_{\texttt{numMeasurements}}=1/2\rhoinit$

                $\texttt{avg-noisy-count}_t = \sum_{k=1}^{\texttt{numMeasurements}} \texttt{noisy-count}_k / \texttt{numMeasurements}$
                
               $\texttt{avg-var}_t = (\sum_{k=1}^{\texttt{numMeasurements}}\texttt{var}_k)/\texttt{numMeasurements}$
                
                $K = \sqrt{\texttt{avg-var}} \cdot \Phi^{-1}(1-\betausedthisstep)$ \tcp{$\Phi$ is the standard normal distribution function}
                
                \If{\rm ($\texttt{avg-noisy-count}_t - K > q_U\cdot n$ or $\texttt{avg-noisy-count}_t + K < q_U\cdot n$) and ($\texttt{avg-noisy-count}_t - K > q_L\cdot n$ or $\texttt{avg-noisy-count}_t + K < q_L\cdot n$) }{
                    \texttt{est-good-enough} = \texttt{True}
                }
              }
          
          \If {\rm $\texttt{avg-noisy-count}_t  < n \cdot \targetquantile$} {
            \texttt{lower}$ = x_t$
            
          } 
          \Else{
            \texttt{upper} $= x_t$
          }
          $\rhoused = \rhoused + \rhousedthisstep$  
          
          $t = t + 1$
          
     }
     \Return $(x_1, \texttt{avg-noisy-count}_1, \texttt{avg-var}_1) , (x_2, \texttt{avg-noisy-count}_2, \texttt{avg-var}_2), \cdots$
    
  \caption{$\texttt{GetNoisyCounts}$: 
  $\rho$-CDP Algorithm}  \label{alg:ada-bs}
\end{algorithm}

\begin{algorithm}[h!]
\KwInput{$(x_1, \texttt{ns}_1, \texttt{var}_1) , (x_2, \texttt{ns}_2, \texttt{var}_2), \cdots, (x_T, \texttt{ns}_T, \texttt{var}_T), \ninput, q_L, q_U, \privconf \in (0,1)$} 

\For{$t\in[T]$}{
$R_t= \sqrt{\texttt{var}_t} \Phi^{-1}(1 - 2T/\privconf)$ 
\tcp{$\Phi$ is the standard normal distribution function}
$L_t=\texttt{ns}_t+R_t$\\
$U_t=\texttt{ns}_t-R_t$}
$l = \max\{x_t\;|\; \forall t'<t, L_t < q_L\}$\\
$u = \min\{x_t\;|\; \forall t'>t, U_t> q_U\}$\\

\Return $[l,u]$
\caption{\texttt{PostProcessUnion}}
\label{alg:consBS}
\end{algorithm}

In this section we provide the algorithmic details and validity and privacy proofs for $\noisyBS$.

Given a dataset $d \in \calX^n$, and target quantile $\targetquantile \in (0,1)$, an initial range $\range$ and granularity $\theta$, $\noisyBS$ (outlined in Algorithm~\ref{alg:bs}) consists of two steps. In the first step,
the mechanism $\texttt{GetNoisyCounts}$ uses noisy measurements of the empirical CDF to search for $d_{n(\targetquantile)}$ using binary search. This noisy binary search step is designed so that with high probability it moves in the right direction at each step, however there is some probability of making a wrong move, hence we need to perform a post-processing step that takes the noisy measurements as input and returns a valid confidence interval.

Pseudo-code for the first step, which we will call \texttt{GetNoisyCounts} is given in Algorithm~\ref{alg:ada-bs}. Let us focus on finding the lower limit of the confidence interval. Given a target quantile $\targetquantile$, this algorithm iterates reduces the search domain by querying the rank of the mid-point $x_t$ of the range. If it is confident that the mid-point is to left of the target quantile then it cuts the domain in half and only keep the right half (similarly if it is confident that the mid-point is to the right, it keep the left half of the range). It continues this process until the entire privacy budget is consumed. 

At each iteration we use a portion of the privacy budget $\rho$ to release the noisy rank of the query point $x_t$. The total privacy budget consumed by the algorithm is the sum of the privacy budget consumed by each step (Lemma~\ref{composition}). One option for allocating the privacy budget is to decide in advance the number of iterations and divide the privacy budget by the number of iterations to obtain a \emph{per step} privacy budget. However, we can actually improve on this approach by noticing that if $|\rank{d}{x_t}-\targetquantile \cdot n|$ is large then we can tolerate a lot of noise in our estimate of $\rank{d}{x_t}$ and still determine with high confidence whether $\rank{d}{x_t}> \targetquantile \cdot n$ or $\rank{d}{x_t}<\targetquantile \cdot n$. Thus, we may be able to only allocate a very small amount of privacy budget to some steps. For a given query point $x_t$ we do not know a priori how large $|\rank{d}{x_t}-\targetquantile \cdot n|$ is, and hence how much noise the query can handle. Thus, we start by adding a large amount of noise (using only a small amount of the privacy budget $\rhoinit$) to $\rank{d}{x_t}$. If this noisy estimate is far enough from $\targetquantile$ that we can confidently determine which direction to continue with the binary search, then we move and this step has only consumed $\rhoinit$ privacy budget. Otherwise, we take another noisy measurement of $\rank{d}{x_t}$ and average the two together. This produces a less noisy estimate, and consumes $2\rhoinit$. We continue in this way until either the variance of the estimate is low enough that we can confidently move, or this step has consumed the maximum amount of privacy budget per step $\rhoperstep$, and we move in the more likely direction. While we search for the left and right hand limit of the confidence interval separately, in many settings the early query points of the binary search will be same for both. Thus, we can improve accuracy by not repeating these noisy queries. This is why we pass \texttt{prev-queries} into \texttt{GetNoisyCounts}.

The next step is processing the noisy counts to obtain a valid confidence interval. Pseudo-code is given in Algorithm~\ref{alg:consBS} and a validity proof is given in Lemma~\ref{lem:noisy-bs-validity}. This step does not consume additional privacy budget since it is simply post-processing on top of the $\rho$-CDP output of \texttt{GetNoisyCounts}.

\begin{lemma} 
\label{lem:noisy-bs-privacy}
Mechanism $\noisyBS$ (Algorithm~\ref{alg:bs}) is $\rho$-CDP.
\end{lemma}
\begin{proof}
The lemma follows immediately from Lemma~\ref{composition} in the main text. The privacy budget $\rho$ is divided between the two calls to $\texttt{GetNoisyCounts}$, which each use privacy budget $\rho/2$. By \citep[Proposition 1.6]{BunS16}, each new noisy measurement $\texttt{noisy-count}_{\texttt{numMeasurements}}$ is $\rhoinit$-CDP, so $\rho_t$ and thus $\rho_{used}$ accurately capture the privacy budget consumed per step, and in total at any point during the algorithms run. 
\end{proof}

\begin{lemma}
\label{lem:noisy-bs-validity}
Let dataset $d$ be drawn i.i.d. from a distribution $P \in \gooddist$ with population median $\median(P)$. Given (hyper)parameters $\rangeinput, \graninput, \betasplitinput$, failure rate $\alphainput$ and privacy parameter $\rhoinput$, let $[\privNPL{\alpha}, \privNPU{\alpha}] = \noisyBS(d, \rho, (\alpha, \range, \theta, \betasplit, 0.5, 0.5))$. If $\median(P) \in \range$, then with probability at least $1-\alpha$,
\[
    \median(P) \in [\privNPL{\alpha}, \privNPU{\alpha}],
\]
where the probability is over the randomness in both the dataset $d$ and the mechanism $\noisyBS$.
\end{lemma}
\begin{proof}
While $\texttt{GetNoisyCounts}$ is designed to ensure the final output is close to the right quantile, the validity of the confidence interval really comes from the post-processing function \texttt{PostProcessUnion}. We have that 
\begin{align*}
    \Pr_{A, d}(\median(P)< \privNPL{\alpha}) 
    &\le \Pr_{A}( \median(P)< \privNPL{\alpha} \mid \rank{d}{\median(P)} < \PNPL{\nonprivconf}) \cdot \Pr_{d}(\rank{d}{\median(P)} < \PNPL{\nonprivconf}) \\
    &~~~~~~~~\;\;\;\;\;\;\;\;\;\;\; + 
    \Pr_{A}( \median(P)< \privNPL{\alpha} \mid \rank{d}{\median(P)} \geq \PNPL{\nonprivconf}) \cdot \Pr_{d}(\rank{d}{\median(P)} \geq \PNPL{\nonprivconf})
    \\
    &\leq \Pr_{d}(\rank{d}{\median(P)}<\PNPL{\nonprivconf})+\Pr_{A}(\median(P)< \privNPL{\alpha}\;|\;\rank{d}{\median(P)}\ge\PNPL{\nonprivconf}) \cdot \Pr_{d} (\rank{d}{\median(P)} \geq \PNPL{\nonprivconf}) \\
    &\le \nonprivconf/2+\Pr_{A}(\PNPL{\nonprivconf}<\rank{d}{\privNPL{\alpha}}) \cdot (1-\nonprivconf/2),
\end{align*}
where the subscript $A$ denotes that the probability is over the randomness of the mechanism $\noisyBS$.
Now, let $\texttt{noisy-counts}=(x_1, \texttt{ns}_1, \texttt{var}_1) , (x_2, \texttt{ns}_2, \texttt{var}_2), \cdots, (x_T, \texttt{ns}_T, \texttt{var}_T)$ be the concatenated outputs of the two runs of \texttt{GetNoisyCounts} in \texttt{NoisyBinSearch}; these are inputs to \texttt{PostProcessUnion}. We have the guarantee that for all $t \in T$, $\texttt{ns}_t=\rank{d}{x_t}+\mathcal{N}(0,\texttt{var}_t)$, and therefore with probability $1-\privconf/2$, for all $x\in\{x_1, \cdots, x_T\}$, $|\texttt{ns}_t-\rank{d}{x_t}|\le R_t$. Now, if $\PNPL{\nonprivconf}<\rank{d}{\privNPL{\alpha}}$ implies that there exists $x_t$ such that $x_t<\median(P)$ but  $\texttt{ns}_t\ge\PNPL{\nonprivconf}+R_t$. But this would imply that $|\texttt{ns}_t-\rank{d}{x_t}|\ge R_t$. Therefore, $\Pr(\median(P)< \privNPL{\alpha})\le\nonprivconf/2+\privconf/2 (1-\nonprivconf/2) =\alpha/2$. Similarly we can argue that $\Pr(\median(P)> \privNPU{\alpha})\le\nonprivconf/2+\privconf/2 (1-\nonprivconf/2) =\alpha/2$, so we are done.
\end{proof}

\section{Details: Confidence intervals based on CDF estimator \newline \CDF}\label{online supplement:CDF}
\label{sec:cdf-details}
Instead of generating a series of count queries that vary asymmetrically across the data set as in the binary search process, we could generate the entire empirical CDF of the data and use a similar approach to the binary search algorithm to generate confidence intervals. In our setting, where the number of queries are limited by the privacy budget, if the empirical CDF is of separate interest to a researcher, this is a particularly compelling method. There are many methods for generating a DP CDF \citep{Diakonikolas:2015, Brunel:2020}. We focus on a tree-based mechanism introduced in \citet{Li:2010,Dwork:2010}, and \citet{ Chan:2011} and refined in \citet{Honaker:2015}. We rely on \citet{Honaker:2015}'s algorithm.

\subsection{A Tree-Based Approach to Differentially Private CDFs}
Note that a simple way to estimate the CDF would be to create a differentially private histogram with a set bin size
and sum the bins to the left of a point of interest to generate an estimate of the CDF at that point. However, this means summing multiple noisy counts together, so the accuracy will diminish the more bins that you have to sum together. To avoid summing too many points together, one might instead use a tree-based approach, which uses a tree of multiple histograms that have multiple levels of granularity: we denote this as $\dptree;\dptreespace{\depth}$, a binary tree of counts in $\mathbb{N}$ with $\depth$ levels, $L_1, \ldots, L_{\depth}$, as depicted in Figure ~\ref{fig:cdf-tree} and described in detail in  Algorithm~\ref{alg:cdf-basic}. Note that to get a CDF estimate at point $\lowerrange + 2\theta$ one need only look at $t_{10}$, and to get the estimate at $\lowerrange+3\theta$, one need only sum $t_{10}$ and $t_{110}$.

\begin{singlespace}
\begin{algorithm}[H]
\caption{\texttt{DPTree}: $\rho$-CDP Histogram Tree Algorithm} 
\label{alg:cdf-basic}
\KwData{$\datainput$}
\KwPrivacyparams{$\rhoinput$}
\KwHyperparams{$\rangeinput$, $\depth \in \naturals$}

$n = |d|$

Let $\dptree \in \dptreespace{\depth}$ be a binary tree with $\depth$ levels, $L_1, \ldots, L_{\depth}$

\For{$j \in $ [\depth]} {
    
    Let $\text{bin}_1, \ldots, \text{bin}_{2^j}$ be $2^j$ equally-sized partitions of the range $\range$.
    
    Generate histogram $\texttt{hist} = \{\# i : d_i \in \text{bin}_b, ~~ 1 \leq b \leq 2^j\} \in \naturals^{2^j}$
    
    Add noise sampled from $\mathcal{N}(0, 2 \depth /\rho)$ to each element of $\texttt{hist}$.
    
    Set $L_j = \texttt{hist}$
  }
\Return $\dptree$
\end{algorithm}
\end{singlespace}
\begin{figure}
\begin{centering}
\tikzset{every picture/.style={line width=0.75pt}} 

\begin{tikzpicture}[x=0.75pt,y=0.75pt,yscale=-1,xscale=1]

\draw   (323.71,26.03) .. controls (318.61,26.02) and (314.48,21.88) .. (314.49,16.77) .. controls (314.5,11.67) and (318.64,7.54) .. (323.75,7.55) .. controls (328.85,7.56) and (332.98,11.7) .. (332.97,16.81) .. controls (332.96,21.91) and (328.81,26.04) .. (323.71,26.03) -- cycle ;
\draw   (243.71,86.03) .. controls (238.61,86.02) and (234.48,81.88) .. (234.49,76.77) .. controls (234.5,71.67) and (238.64,67.54) .. (243.75,67.55) .. controls (248.85,67.56) and (252.98,71.7) .. (252.97,76.81) .. controls (252.96,81.91) and (248.81,86.04) .. (243.71,86.03) -- cycle ;
\draw   (403.71,85.03) .. controls (398.61,85.02) and (394.48,80.88) .. (394.49,75.77) .. controls (394.5,70.67) and (398.64,66.54) .. (403.75,66.55) .. controls (408.85,66.56) and (412.98,70.7) .. (412.97,75.81) .. controls (412.96,80.91) and (408.81,85.04) .. (403.71,85.03) -- cycle ;
\draw   (364.71,146.03) .. controls (359.61,146.02) and (355.48,141.88) .. (355.49,136.77) .. controls (355.5,131.67) and (359.64,127.54) .. (364.75,127.55) .. controls (369.85,127.56) and (373.98,131.7) .. (373.97,136.81) .. controls (373.96,141.91) and (369.81,146.04) .. (364.71,146.03) -- cycle ;
\draw   (442.71,146.03) .. controls (437.61,146.02) and (433.48,141.88) .. (433.49,136.77) .. controls (433.5,131.67) and (437.64,127.54) .. (442.75,127.55) .. controls (447.85,127.56) and (451.98,131.7) .. (451.97,136.81) .. controls (451.96,141.91) and (447.81,146.04) .. (442.71,146.03) -- cycle ;
\draw   (286.71,145.03) .. controls (281.61,145.02) and (277.48,140.88) .. (277.49,135.77) .. controls (277.5,130.67) and (281.64,126.54) .. (286.75,126.55) .. controls (291.85,126.56) and (295.98,130.7) .. (295.97,135.81) .. controls (295.96,140.91) and (291.81,145.04) .. (286.71,145.03) -- cycle ;
\draw   (203.71,146.03) .. controls (198.61,146.02) and (194.48,141.88) .. (194.49,136.77) .. controls (194.5,131.67) and (198.64,127.54) .. (203.75,127.55) .. controls (208.85,127.56) and (212.98,131.7) .. (212.97,136.81) .. controls (212.96,141.91) and (208.81,146.04) .. (203.71,146.03) -- cycle ;
\draw    (236.5,85) -- (203.75,127.55) ;
\draw    (396.5,83) -- (364.75,127.55) ;
\draw    (252.5,84) -- (280.75,126.55) ;
\draw    (412.5,82) -- (442.75,127.55) ;
\draw    (331.97,21.81) -- (403.75,66.55) ;
\draw    (314.5,22) -- (243.75,67.55) ;
\draw    (152.5,187) -- (495.5,188) ;
\draw  [dash pattern={on 0.84pt off 2.51pt}]  (203.71,146.03) -- (204.5,189) ;
\draw  [dash pattern={on 0.84pt off 2.51pt}]  (286.71,145.03) -- (287.5,188) ;
\draw  [dash pattern={on 0.84pt off 2.51pt}]  (364.71,146.03) -- (365.5,189) ;
\draw  [dash pattern={on 0.84pt off 2.51pt}]  (442.71,146.03) -- (443.5,189) ;

\draw (163,126.4) node [anchor=north west][inner sep=0.75pt]    {$t_{100}$};
\draw (250,127.4) node [anchor=north west][inner sep=0.75pt]    {$t_{101}$};
\draw (326,129.4) node [anchor=north west][inner sep=0.75pt]    {$t_{110}$};
\draw (406,127.4) node [anchor=north west][inner sep=0.75pt]    {$t_{111}$};
\draw (206,64.4) node [anchor=north west][inner sep=0.75pt]    {$t_{10}$};
\draw (369,64.4) node [anchor=north west][inner sep=0.75pt]    {$t_{11}$};
\draw (287,8.4) node [anchor=north west][inner sep=0.75pt]    {$t_{1}$};
\draw (147,193.4) node [anchor=north west][inner sep=0.75pt]  [font=\footnotesize]  {$r_{l}$};
\draw (188,193.4) node [anchor=north west][inner sep=0.75pt]  [font=\footnotesize]  {$r_{l} +\theta $};
\draw (269,194.4) node [anchor=north west][inner sep=0.75pt]  [font=\footnotesize]  {$r_{l} +2\theta $};
\draw (348,194.4) node [anchor=north west][inner sep=0.75pt]  [font=\footnotesize]  {$r_{l} +3\theta $};
\draw (438,193.4) node [anchor=north west][inner sep=0.75pt]  [font=\footnotesize]  {$r_{u}$};

\end{tikzpicture}
\caption{Tree representation of the CDF algorithm, where the counts $t_{100}, \ldots, t_{111}$ at the leaves of the tree represent the counts for histogram bins with width $\theta$, their parent nodes represent a histogram with bin width $2\theta$, and so on.}
\label{fig:cdf-tree}
\end{centering}
\end{figure}

 This concept may be further improved through post-processing by noting that the different noisy counts at different levels of granularity ought to sum to the same values. \citet{Honaker:2015} proposes an optimal method to leverage this information, which we use here. 
Following Honaker's notation, label each node in the tree in binary, so the root node is $1$, the left child of the root is node $10$ and its right child is node $11$, and so on as in Figure \ref{fig:cdf-tree}. Let the count at node $i$ be $t_i$, and let the leaf nodes of the tree in Figure \ref{fig:cdf-tree} represent a histogram with granularity $\theta,$ so that $t_{100}$ is the number of data-points that lie between the left of the histogram's range, $\lowerrange$, to $\lowerrange + \theta$, $t_{101}$ is the number of data-points between $\lowerrange + \theta$ and $\lowerrange + 2\theta$, and so on. Let the histogram at any level have granularity twice that of its children. Note that if the counts were perfectly accurate, it then follows that the parent node's value should be equal to the sum of its children nodes' value, e.g. $t_{10} = t_{100} + t_{101}$ in Figure \ref{fig:cdf-tree}. Similarly, a child nodes' value in a perfectly accurate tree should be equal to its parent node, minus the adjacent child node's count, e.g. $t_{100} = t_{10} - t_{101}$. Let $i\Lambda 1$ be the neighboring child of node i (e.g. $11 \Phi 1 = t_{10}$), and let $i \Phi 1$ be the parent of node i.

Honaker leverages these relationships between the counts at each of the nodes to generate an optimal tree in a recursive process. First, the counts at each node using the children node are incorporated into a weighted estimate of each of the counts, where the weight of the child counts at node $i$ is denoted $w_i^-$ and the optimized count ``from below" is denoted $t_i^-$. These are then recursively combined with the counts ``from above" (i.e. the count using the parent and the adjacent child) to generate $t_i^+$, which weights the count from above with $w_i^+$. Finally, the two are combined to get a fully efficient estimate of each of the nodes, with weight $w$ and efficient count $t_i^*$.\footnote{See Honaker equations 10, 11, and 13 for the full statement of the values of these weights and counts.} In our setting, we only consider trees with an equal amount of noise on each of the nodes in the tree. This results in several simplifications of the equations in Honaker 2015. 

\begin{lemma}
\label{lem:cdf-weights}
 If each noisy count in the tree has noise with variance $s$ added to it, then the weight vectors $w^-$ and $w^+$ only need to be calculated once per level of the tree, the weight at node $i$ for the summation from below is 
$$ w_i^- = \frac{2w_{2i}^-}{2w_{2i}^- + 1},$$
the weight at node $i$ for the summation from above is
$$ w_i^+ = \frac{1}{1 + (w_{i \Phi 1}^+ + w_{i \Lambda 1}^-)^{-1}}, $$
and the optimal weight $w$ is equal to $w^+$.
\end{lemma}

Note that if each node has noise with variance $s$ added to it, then $\forall i,$ the variance at node $i, \sigma_i=s$, so $\sigma^-(t_{2i}^-) = \sigma^-(t_{2i+1}^-)$. Then, the weight at node $i$, $w_i$ may be recursively defined as 
\begin{align*}
w_i^- &= \frac{s^{-2}}{s^{-2} + (1/2)(\sigma_{2i}^-)^{-2}}\\
	&= \frac{s^{-2}}{s^{-2} + (1/2)\left(s\sqrt{w_{2i}^-}\right)^{-2}}\\
	&= \frac{s^{-2}}{s^{-2}(1+(1/2)(w_{2i}^-)^{-1})}\\
	&= \frac{2w_{2i}^-}{2w_{2i}^- + 1},
\end{align*}
where the first line comes from Honaker's definition in his Equation 10. Similarly, from Honaker's Equation 11 it follows that when the noise at each node has the same variance,
\begin{align*}
w_i^+ &= \frac{(\sigma_i)^{-2}}{(\sigma_i)^{-2} + \left[(\sigma^+_{i\Phi 1})^2 + (\sigma^-_{i \Lambda 1})^2\right]^{-1}} \\
    &= \frac{1}{1 + (w_{i \Phi 1}^+ + w_{i \Lambda 1}^-)^{-1}}, \\
\end{align*}
which is equivalent to then the expression for the optimal weights in this setting.

Once there is a tree of fully optimized counts, one can read off the CDF at an arbitrary point by traversing the tree in a root-to-leaf path, summing as few of the values together as possible to get the desired estimate, as shown in Algorithm \ref{alg:cdf-tree}.

\begin{singlespace}
\begin{algorithm}[H]
\caption{ 
\texttt{TreeToCDF}}
\label{alg:cdf-tree}
\KwInput{$\ninput$, 
$\treeinput$, $\discreterangeinput$, $\graninput$, $\depthinput$}

noisy-cdf $= []$

\For{$x \in \discreterange$}{
$\min \leftarrow \lowerrange, \max \leftarrow r_u$

count $\leftarrow 0, i \leftarrow 0$

\For{$0 \leq j < \depth$} {
    
    mid $\leftarrow (\min + \max)/2 $
    
    \uIf {$x = \max$ or $j = m$}{
        $k \leftarrow 2^j + i$
        
       count $\leftarrow$ count +  $t_k$ 
       
       \textbf{break}
        }
    \uElseIf{$x \le$ mid}{
        $\max \leftarrow$ mid
        
        $i \leftarrow 2i$
    }
    \Else{
        $\min \leftarrow$ mid
        
        $i \leftarrow 2i+1$
        
        $k \leftarrow  2^{j+1} + i -1$
        
       count $\leftarrow \dptree_{k}$ 
           }
  }
  Add $(x, \text{count}/n)$ to noisy-cdf
}
\Return noisy-cdf
\end{algorithm}
\end{singlespace}

\subsection{Confidence Intervals for Quantiles Estimated from Tree-Based CDF}
In order to generate a confidence interval for the desired quantiles, we would need to understand what the uncertainty of each of the counts in our estimated CDF was. Since each of these counts is a combination of all of the different counts in the tree, weighted in a way that is recursively defined, this is not trivial to do in a closed-form manner. We can compute the effect of each node on any other node by generating a tree with every node valued at 0 except the node we are interested in, then running the recursive weighting algorithm on that tree, as shown in Alg.~\ref{alg:pp-node-effect}.

\begin{singlespace}
\begin{algorithm}[H]
\label{alg:pp-node-effect}
\caption{
\texttt{ComputeNodeEffect}}
\KwInput{$\sigma^2 \in \nonnegreals$, $i^* \in \naturals$, $\depthinput$}
Construct a binary tree $\dptree \in \dptreespace{\depth}$, where for $0 \leq i < 2^\depth$,
\begin{align*}
    \begin{cases}
    T_i = 1 &\text{if}~~~~ i = i^* \\
    T_i = 0 &\text{o.w.}
    \end{cases}
\end{align*}

Let $\dptree' \in \dptreespace{\depth}$ be the output of the CDF post-processing algorithm from~\cite{Honaker:2015} on differentially private tree $\dptree$, where each noisy count in $T$ has variance $\sigma^2$. 

\Return $\dptree'$
\end{algorithm}
\end{singlespace}

We now have a method to understand how much each node effects any other node. If we run this on every single node of the tree, we can then combine them to generate a tree for every node on the tree that describes how much its optimized count is affected by any other node.\footnote{Since we add identically distributed noise added to each node's count, there are symmetries in the node effects that can be leveraged to make this process substantially more efficient in practice.} When summing the counts to generate the CDF, we can then keep track of the total weight of each node in the final count, and from here generate the variance of the count.

\begin{singlespace}
\begin{algorithm}[H]
\caption{
\texttt{GetVariances} 
}
\label{alg:getVariances}
\KwInput{$\treeinput, \depthinput, \rangeinput, \rhoinput$}
    
    Let $\sigma^2 \leftarrow 2 \depth / \rho$
    
    Create binary tree $E \in \dptreespace{\depth}$ with all nodes are set to 0.
    
    $ \mathbf{T}' \leftarrow \{\texttt{ComputeNodeEffect}(\sigma^2, i, \depth)\}_{0 \leq i < 2^{\depth}}$ 
    
    $\mathbf{v} \leftarrow \emptyset$ 
    
    \For{$0 \leq i < 2^{\depth}$}{

    $\min \leftarrow \lowerrange, \max \leftarrow r_u$ 
    
    \For{$0 \leq j < 2^{\depth}$}{
    $\text{mid} \leftarrow (\min + \max) / 2$ 
    
    \If{$i$ is a leftmost node of the tree}{
        \text{break}
        }
        
    \If{$T_j$ corresponds to a bin with upper endpoint $\max$ or $T_j$ is a leaf node}{
        \For{$0 \leq k < 2^{\depth}$}{
            $E_k \leftarrow E_k+\mathbf{T}'_{j,k}$
        }
    }
    \uElseIf{$T_j <  \text{mid}$}{
        $\max = \text{mid}$
        
        $j \leftarrow 2j$ 
    }
    \Else{
        $\min \leftarrow \text{mid}$ 
        
        $j \leftarrow 2j+1$\\ 
        \For{$0 \leq k < 2^{\depth}$}{
            $E_k \leftarrow E_k+\mathbf{T}'_{2j-1,k}$
        }
    }
    }
    $v\leftarrow0$ 
    
    \For{$0 \leq j <2^{\depth}$}{
    
        $v \leftarrow v + E_i^2 \cdot \sigma^2$ 
    }
    $\mathbf{v}_i \leftarrow v$
    }
\Return $\mathbf{v}$
\end{algorithm}

Now that we have a way to estimate the variance of the count at each of the nodes, we need to generate the actual confidence interval. One way to do this is with the same $\texttt{PostProcessUnion}$ algorithm used in the binary search approach (Alg. \ref{alg:consBS}); the validity of this interval follows the proof of the algorithm's validity for binary search. However, we can do slightly better here, since the choice of query points is just based on the granularity of the tree's histograms rather than dependent on previous queries. This improved method is described in Alg.~\ref{alg:fancy} and the entire confidence interval generation process is summarized in Alg.~\ref{alg:cdf}.

\begin{algorithm}[H]
\label{alg:fancy}
\caption{
\texttt{PostProcess}}
\KwInput{$\ninput$, $\discreterangeinput$, noisy CDF counts $\{x, \tilde{C}(x), \sigma_x\}_{x \in \discreterange}$}

\For{$x \in \discreterange$}{
 
 $a_{x}^u
    = \min \{ a \mid \int_q \PDFbin(qn) \cdot \Pr( q + \calN(0, \sigma_x^2) >  a) \leq \alpha/2 \}$ \tcp{can approximate using binary search}
 
  $a^l_x \leftarrow 1 -a^u_x$
}

$\ell = \max \{x \in \discreterange \mid \forall x' \leq x \in \discreterange, \tilde{C}(x') < a_{x'}^l \}$

$u = \min\{i \in [N] \mid \forall x' \geq x \in \discreterange, \tilde{C}(x') > a_{x'}^u \}$

\Return $[\ell,u]$
\end{algorithm}

\begin{algorithm}[H]
\KwData{$\datainput$}
\KwPrivacyparams{$\rhoinput$}
\KwHyperparams{$\alphainput, \Naive \in \{0,1\}$, $\rangeinput, \graninput, \betasplitinput$}

        $n = |d|$
        
        $m = \lceil \log((\dataub - \datalb)/\theta) \rceil$
        
        $\dptree$ = \texttt{DPTree}($d, \rho, (\range, \depth)$)

        $\dptree^*$ = \texttt{OptimizedTree}($\dptree, \rho$)
        \tcp{Optimized post-processing algorithm from~\cite{Honaker:2015} with upper and lower weights as in Lemma~\ref{lem:cdf-weights}}
        
        $ \discreterange = \{\lowerrange, \lowerrange + \theta, \lowerrange + 2\theta, \ldots, \lowerrange + 2^m \theta \}$        
                        
        $(x_i, \tilde{C}(x_i))_{x_i \in \discreterange} = \texttt{TreeToCDF}(n, T^*, \discreterange, \theta, \depth)$
        
        $\{\texttt{var}_i\}_{x_i \in \discreterange} = \texttt{GetVariances}(\dptree^*, \rho)$
        
        \If{\Naive}{
        
        $ \nonprivconf = \betasplit \alpha$
        
        $\privconf = \frac{\alpha - \nonprivconf}{1 - \nonprivconf/2}$
        
        $[l, u] = \texttt{PostProcessUnion}(\discreterange, \{ (x, n \tilde{C}(x), n \sigma_x^2 \}_{x \in \discreterange}, \PNPL{\nonprivconf}, \PNPU{\nonprivconf}, \privconf)$  \tcp{Algorithm~\ref{alg:consBS}}
        
        }
        \Else{
        $[l, u] = \texttt{PostProcess}(n, \discreterange, (x, \tilde{C}(x), \sigma_x)_{x \in \discreterange})$
        }
        
        \Return $[l, u]$
        
\label{alg:cdf}
\caption{$\CDF(\Naive)$: $\rho$-CDP algorithm}
\end{algorithm}
\end{singlespace}

We now need to show that our algorithm is differentially private and that the intervals that Algorithm \ref{alg:cdf} returns are valid confidence intervals. 
\begin{lemma}
\label{lem:cdf-privacy}
Mechanism $\CDF(\Naive)$ (Algorithm~\ref{alg:cdf}) is $\rho$-CDP.
\end{lemma}
\begin{proof}
Note that the only step in $\CDF(\Naive)$ that touches the dataset $d$ is the call to $\texttt{DPTree}$, which creates a tree of $m$ differentially private histograms. Each histogram is  $\rho/\depth$-CDP, and by composition (\citep[Proposition 1.6]{BunS16}), $\texttt{DPTree}$ is a $\rho$-CDP algorithm. The rest of the computations in $\CDF(\Naive)$ apply post-processing to the output of $\texttt{DPTree}$, so they do not affect the privacy guarantee.
\end{proof}

\begin{lemma}
\label{lem:cdf-validity}
For any dataset $d \overset{iid}{\sim} P$, where $P \in \gooddist$, and any hyperparameters $\theta, \range=[\lowerrange, r_u], \betasplitinput$, failure rate $ \alpha$ and privacy parameter $\rho$, let $\CDF(d, \rho, (\alpha, \Naive=0, \range, \theta, \betasplit))$ return an interval $[\privNPL{\alpha}(d), \privNPU{\alpha}(d)]$. If $\median(P)\in \range$, then with probability at least $1 - \alpha$,
\begin{align*}
    \median(P) \in [\privNPL{\alpha}(d), \privNPU{\alpha}(d)]
\end{align*}
where the probability is taken over the randomness of both the dataset $d$ and the mechansim $\CDF$.
\end{lemma}
\begin{proof}
Let us consider the upper endpoint of the interval. First, given a set of DP measurements $\tilde{C}(x) = \hat{C}(x) + \calN(0, \sigma_x^2)$, for all $x \in \discreterange$, recall that we define $a_{x}^u$ as follows.
\begin{align*}
    a_{x}^u
    = \min \{ a \mid \int_q \Pr_{d}(\hat{C}(\median(P)) = q) \cdot \Pr_{N \sim \calN(0, \sigma_x^2)}( q + N >  a) \leq \alpha/2 \}
\end{align*}
Then, recall that the post-processing algorithm (\texttt{PostProcess}) outputs $\privNPU{\alpha}(d) = \min\{x \in \discreterange \mid \forall x' \geq x, \tilde{C}(x') > a_{j}^u \}$. Let $x^{*} = \max \{ x \in \discreterange \mid x < \median(P) \}$, with corresponding $\sigma_{x^*}$ and $a_{x^*}^u$.
Then, using the subscript $A$ to denote randomness of the DP mechanism, we have that
\begin{align*}
    \Pr_{A, d}( \privNPU{\alpha}(d) < \median(P) )
    &= \Pr_{A, d}( \min\{x \in \discreterange \mid \forall x' \geq x ~~ \tilde{C}(x') > a_x^u \} < \median(P)) \\
    &\leq \Pr_{A, d}( \tilde{C}(x^{*}) > a_{x^*}^u ) \\
    &= \int_q \Pr_d(\hat{C}(\median(P)) = q) \cdot \Pr_A(\tilde{C}(x^{*}) > a_{x^*}^u \mid \hat{C}(\median(P)) = q) \\
    &\leq \int_q \Pr_{d}(\hat{C}(\median(P)) = q) \cdot \Pr_{N \sim \calN(0, \sigma_{x^*}^2)}(q + N > a_{x^*}^u )) \\
    &\leq \alpha/2
\end{align*}
where the last line follows by definition of $a_{x^*}^u$.
A similar argument holds for $\privNPL{\alpha}(d)$, so we are done.
\end{proof}

\section{Details: Range-robust estimator based on CDF estimator, \BSCDF}\label{online supplement:BSCDF}
\label{sec:bs-cdf-details}
Recall that $\noisyBS$ (Algorithm~\ref{alg:bs}) is useful for finding the dataset when it lies within a large range $\rangelarge$, while $\CDF$ (Algorithm~\ref{alg:cdf}) offers highly optimized estimates of the CDF within a small range $\rangesmall$. The combination $\BSCDF$ leverages the strengths of both of these algorithms: it uses $\noisyBS$ to narrow down the search space from $\rangelarge = [r_l, r_u]$ to $\rangesmall = [r_l', r_u']$, clips the data to within $\rangesmall$, and runs $\CDF$ with the remaining privacy budget within this smaller range to obtain a confidence interval for the population median. The pseudocode for $\BSCDF$ is given in Algorithm~\ref{alg:bs-cdf}.

The privacy budget $\rho$ and coverage failure probability $\alpha$ both need to be partitioned between the two stages of the algorithm. We expect the optimal split to be distribution dependent. In particular, it likely depends on how large a region the data occupies within the range $\range$. We found experimentally, for the parameter regimes we studied, using $\rho/4$ for the first step, and $3\rho/4$ for the second step ($\gamma = 1/4$) seemed to be a good choice. Similarly, we ensure that the region found in the first step contains the median with probability $1-\alpha/4$, and the second step finds a $1-3\alpha/4$-confidence interval within that region.

\begin{algorithm}[h!]
  \KwData{$\datainput$}
  \KwPrivacyparams{$\rhoinput$}
  \KwHyperparams{$\alphainput$, $\rangeinput, \graninput, r, r_1, \gamma \in (0,1)$}
  
  $n = |d|$
  
  $\rhoforbs = \gamma \cdot \rho$
  
  $\alpha_{\noisyBS} = \alphasplit \cdot \alpha$
  
    $\rhoforcdf = (1-\rhosplit) \cdot \rho$
    
  $\alphaforcdf = (1- \alphasplit) \cdot \alpha$
  
  $[r'_l, r'_u] = \noisyBS(d, \rhoforbs, (\alpha_{\noisyBS}, \range, \theta, \betasplit, 0.25, 0.75))$ 
  
  \Return $\CDF(d, \rhoforcdf, (\alphaforcdf, [r_l', r_u'], \theta, \betasplit))$
  
  \caption{$\BSCDF$:  
  $\rho$-CDP Algorithm}  \label{alg:bs-cdf}
\end{algorithm}

\begin{lemma}
Mechanism $\BSCDF$ (Algorithm~\ref{alg:bs-cdf}) is $\rho$-CDP.
\end{lemma}
\begin{proof}
$\BSCDF$ is a composition of two algorithms -- $\noisyBS$ which by Lemma~\ref{lem:noisy-bs-privacy} is $\gamma \rho$-CDP, and $\CDF$ which by Lemma~\ref{lem:cdf-privacy} is $(1-\gamma) \rho-CDP$. By Lemma~\ref{composition}, this means $\BSCDF$ satisfies $\rho$-CDP. 
\end{proof}

The coverage analysis of $\BSCDF$ follows immediately from Lemma~\ref{lem:cdf-validity} and Lemma~\ref{lem:noisy-bs-validity}, and a union bound.

\begin{lemma}
Given any dataset $d \overset{i.i.d}{\sim} P^n$, where $P \in \gooddist$, any hyperparameters $\graninput, \rangeinput, \betasplit, \alphasplit, \rhosplit$, failure rate $\alphainput$ and privacy parameter $\rhoinput$, if $\median(P)\in \range$ then $\BSCDF(d, \rho, (\alpha, \theta, \range, \betasplit, \alphasplit, \rhosplit))$ is a valid $1-\alpha$-confidence interval for $\median(P)$.
\end{lemma}

\begin{proof}
By Lemma~\ref{lem:noisy-bs-validity}, if $\median(P)\in \range$ then $\Pr(\median(P)\in[r_l', r_u'])\ge 1-\alpha_{\noisyBS}$. Then, by Lemma~\ref{lem:cdf-validity}, $\Pr(\median(P)\in\BSCDF(d)\;|\; \median(P)\in[r_l', r_u'])\ge 1-\alpha_{\CDF}$. Therefore, \[\Pr(\median(P)\in\BSCDF(d))\ge 1-\alpha_{\noisyBS}-\alpha_{\CDF}=1-\alpha.\]
where the probability is over the randomness of both the dataset $d$ and the mechanism $BSCDF$.
\end{proof}
\section{Details: Other Algorithms Explored}
\label{online supplement: otheralgs}

In this section we give a brief overview of additional CDP confidence intervals and CDP median estimators that we explored. These algorithms were not included in the main body of this paper since they are outperformed by other algorithms in every parameter regime we studied. The additional CDP confidence interval algorithms were:
\begin{itemize}
    \item \texttt{CDF+BS CI} computes a CDP estimate to the empirical CDF in the same way as $\CDF$. However, instead of using the post-processing algorithm described in Algorithm~\ref{alg:fancy}, it performs binary search using the noisy CDF measurements.
    \item \texttt{BinSearch} is the same as $\noisyBS$ except it uses the same privacy budget at every iteration. We expect this algorithm to perform strictly worse than $\noisyBS$ which uses its budget more carefully.
\end{itemize}

Figure~\ref{fig:otherCIS} shows the performance of \texttt{CDF+BS CI} and \texttt{BinSearch}, as well as the naive estimators \texttt{ExpMechUnion} and \texttt{CDFPostProcessUnion} and the four CDP estimators we presented in the main body. We can see that for all values of $\rho$, at least one of the four main CDP estimators outperforms each of the other algorithms.

\begin{figure}[H]
    \centering
    \includegraphics[scale=0.35]{comparing-all-algs.pdf}
    \caption{Performance of various CDP confidence intervals for the median as we vary $\rho$. Performance is measured in terms of the relative width with $\alpha=0.05$. Box plots are computed using 100 random datasets of 1000 data points drawn i.i.d. from Lognormal($\ln(1.5), 1$). Each CDP algorithm is run 5 times on each dataset.}
    \label{fig:otherCIS}
\end{figure}

We also explored several CDP median estimators. For point estimators that directly correspond to analogues of our CDP confidence intervals, we use the same name to denote both. However, note that these point estimators for the median are different to the mid-point of the confidence interval estimators that we used in the main body. The algorithms presented in Figure~\ref{fig:pointestimates} are all directly estimating the median, and do not additionally release a confidence interval for the median. 

\begin{itemize}
    \item \texttt{ExpMech} is the point estimator version of our confidence interval algorithm $\EM$. It uses the exponential mechanism with target quantile $n/2$ to estimate the median.
    \item \texttt{SmoothSens} releases the median using the smooth sensitivity framework \cite{NRS07, BunS19}. This algorithm Gaussian noise to the empirical median where the standard deviation of the noise is data dependent, and carefully calibrated to ensure differential privacy. 
    \item \texttt{BinSearch} is the point estimator version of \texttt{BinSearch} described above. It uses binary search with target quantile $n/2$.
    \item \texttt{NoisyStartBinarySearch} was a preliminary version of altering the privacy budget through the iterations of the algorithm, with little budget initially then increasing through the search process. 
    \item \noisyBS is the point estimator version of our confidence interval algorithm $\noisyBS$. It uses noisy binary search to search for the quantile $n/2$.
    \item \texttt{FancyBinarySearch} is similar to \texttt{BinSearch}. However, instead of halving the range at each iterate, it makes more conservative steps when it is not confident whether the median is to the left or right. 
    \item \texttt{CDFPostProcess} is the point estimator version of our confidence interval algorithm $\CDF$. It computes a CDP estimate the CDF in the same way, then computes the median based on the CDP CDF. An important note is that since both this algorithm and $\CDF$ are post-processing on the CDP CDF estimate, they can be performed at the same time without additional privacy budget.
    \item \texttt{GradDescent} uses CDP gradient descent to solve the optimisation problem $\arg\min\sum_{i=1}^n|m-d_i|$. We use the private stochastic gradient descent technique proposed by \cite{Bassily:2014}.
\end{itemize}

Figure~\ref{fig:pointestimates} shows the performance of each of our point estimators on log-normal data. We can see that \texttt{SmoothSens}, the only unbiased estimator, has among the highest variability in all regimes, and particularly poor performance for small $\rho$. It has comparable performance to the other algorithms for large $\rho$, but extending this point estimator to a confidence interval algorithm remains an open problem. All the variants of binary search perform similarly as point estimators for the median. Even as a point estimator, $\EM$ slightly outperforms the other algorithms, except \texttt{GradDescent}. Extending \texttt{GradDescent} to a confidence interval remains an open problem. 

\begin{figure}[H]
    \centering
    \includegraphics[scale=0.4
    ]{pointestimates_allalgs.png}
    \caption{Performance of various CDP point estimators for the median as we vary $\rho$. Box plots are computed using 100 random datasets of 1000 data points drawn i.i.d. from Lognormal($\ln(1.5), 1$). Each CDP algorithm is run 5 times on each dataset. }
    \label{fig:pointestimates}
\end{figure}
\section{Other Regimes}
\label{online supplement: otherregimes}
The relative ordering of algorithms can depend on the scale of the data ($\sigma_d$) relative to the range.  In the figures below, we display the relative widths of the algorithms on data sampled from a Lognormal$(\ln(1.5, \sigma_d^2)$ distribution, where $\sigma_d = 5.0$, as we vary the size of the dataset $n$ and the privacy loss parameter $\rho$. Note that although we have drastically increased the scale of the data, the range is left the same as in Figure~\ref{fig:rel-width-boxplots}: $\range = [-5, 15]$. 
From these plots, we can see that when $n$, $\rho$, and $\sigma_d$ are large, $\CDF$ performs slightly better than $\EM$. In Figure~\ref{fig:rel-width-boxplots}, we saw that when $\sigma_d$ is small, $\EM$ remains the best performing algorithm in both the large $n$ and large $\rho$ regimes. Hence we conjecture that $\sigma_d$ needs to be large, and either $\rho$ or $n$ need to be large for $\CDF$ begins outperforming $\EM$. This conjecture is supported by Figure~\ref{fig:rel-width-large-sigma}, where we see $\CDF$ only beginning to outperform $\EM$ when either $n$ and $\rho$ are large.

\begin{figure}
     \begin{subfigure}[b]{0.5\textwidth}
         \centering
         \includegraphics[width=\textwidth]{rel-width-boxplots-95-lognormal_flat-varying-n.pdf}
         \caption{Varying $n$}
         \label{fig:rel-width-large-sigma-n}
     \end{subfigure}
     \begin{subfigure}[b]{0.5\textwidth}
         \centering
         \includegraphics[width=\textwidth]{rel-width-boxplots-95-lognormal_flat-varying-rho.pdf}
         \caption{Varying $\rho$}
         \label{fig:rel-width-large-sigma-rho}
     \end{subfigure}
     \begin{subfigure}[b]{\textwidth}
         \centering
         \includegraphics[width=\textwidth]{main-labels.pdf}
     \end{subfigure}
     \caption{Relative width of DP confidence intervals on well-spread data $(\sigma_d = 5.0)$ as we vary (a) $n$ and (b) $\rho$.
     }\label{fig:rel-width-large-sigma}
\end{figure}

\bibliographystyle{apalike}
\bibliography{main}